\newtheorem{theorem}{Theorem}
\newtheorem{proposition}{Proposition}
\newtheorem{lemma}{Lemma}
\newtheorem{AppLem}{Lemma}[section]
\newtheorem{AppDef}{Definition}[section]
\newtheorem{conjecture}{Conjecture}
\newtheorem{assumption}{Assumption}
\theoremstyle{remark}
\newtheorem{remark}{Remark}
\newcommand{\R}{\mathbb{R}}
\begin{document}


\title[LRS Bianchi III Einstein-Klein-Gordon future]{On the oscillations and future asymptotics of locally rotationally symmetric Bianchi type~III cosmologies with a massive scalar field}

\author{\textsc{David FAJMAN}}
\author{\textsc{Gernot HEI}\ss\textsc{EL}}
\author{\textsc{Maciej MALIBORSKI}}
\thanks{The authors acknowledge support of the Austrian Science Fund (FWF) through the Project \emph{Geometric transport equations and the non-vacuum Einstein flow} (P 29900-N27).}
\thanks{The second author also thanks the Hausdorff Research Institute for Mathematics for funding and hospitality in the Junior Trimester Program `Kinetic Theory' during the final research and write up phase of this project.}
\thanks{Finally, we are grateful to the referees for their detailed and helpful comments and suggestions.}

\address{\href{http://gravity.univie.ac.at}{Gravitational Physics Group\\Faculty of Physics\\University of Vienna\\Austria}}	


\maketitle

\begin{abstract}
We analyse spatially homogenous cosmological models of locally rotationally symmetric Bianchi type~III with a massive scalar field as matter model. Our main result concerns the future asymptotics of these spacetimes and gives the dominant time behaviour of the metric and the scalar field for all solutions for late times. This metric is forever expanding in all directions, however in one spatial direction only at a logarithmic rate, while at a power-law rate in the other two. Although the energy density goes to zero, it is matter dominated in the sense that the metric components differ qualitatively from the corresponding vacuum future asymptotics.

Our results rely on a conjecture for which we give strong analytical and numerical support. For this we apply methods from the theory of averaging in nonlinear dynamical systems. This allows us to control the oscillations entering the system through the scalar field by the Klein-Gordon equation in a perturbative approach.
\end{abstract}

\section{Introduction}\label{S: Intro}

Spatially homogenous cosmologies have been investigated in vacuum \cite{Ringstroem2013} and with various matter models such as perfect fluids \cite{Coley2003, WainwrightEllis1997}, collisionless kinetic gases (Vlasov matter) \cite{Andreasson2011, Rendall2004, Rendall2008} (recently \cite{BarzegarEtAl2020, FajmanHeissel2019, LeeEtAl2019}) magnetic fields or scalar fields \cite{Coley2003, Rendall2008} (recently \cite{FajmanWyatt2019, IonescuPausader2019, Wang2019}), in particular with a focus on their asymptotic dynamics and stability. A fruitful approach has been to formulate the respective systems in terms of expansion normalised (or Hubble normalised) dimensionless quantities. Typically the Raychaudhuri equation decouples, which is the evolution equation for the Hubble scalar. The latter describes the overall rate of spatial expansion. The asymptotics of the remaining reduced system is then typically given by equilibrium points and often can be determined by a dynamical systems analysis; cf \cite{Coley2003, Perko2001, WainwrightEllis1997}.

With the present work we contribute to spatially homogenous scalar field cosmology. In the literature a lot of attention has been given to scalar fields with an exponential potential. One of the reasons is that in this case the above approach is applicable since the Raychaudhuri equation decouples due to the symmetry that the exponential is its own derivative; cf \cite[\textsc{Sec}~IV, A]{Coley2003}. Scalar fields with harmonic potentials, such as massive scalar fields which satisfy the Klein-Gordon equation, have also been investigated \cite[\textsc{Sec}~III, E]{Coley2003}. Here however the Raychaudhuri equation fails to decouple, which is why the above approach is not applicable and the resulting systems prove to be difficult to analyse in a dynamical systems approach. On the other hand, other approaches, such as a local stability analysis, are also difficult to apply due to oscillations entering the system via the Klein-Gordon equation. Nevertheless there have been successful enquiries going this route, such as the recent works on various nonlinear stability problems for the Einstein-Klein-Gordon system concerning Minkowski spacetime \cite{IonescuPausader2019, LeFlochMa2016, Wang2016} and the Milne model \cite{FajmanWyatt2019, Wang2019}. While those results concern vacuum dominated future asymptotics, in the present work we consider a class of models where the future is \emph{matter dominated}. By this we mean that the dominant behaviour of the metric for large times differs qualitatively from that of the corresponding vacuum spacetime.

In this paper we take a different approach and apply techniques from the \emph{theory of averaging} in non-linear dynamical systems~\cite{SandersEtAl2007}. The core idea is to construct a time average of the original system, in the solution of which the oscillations are smoothed out. The oscillations are thus viewed as perturbations and it turns out that the Hubble scalar plays the role of a time dependent perturbation parameter which controls the magnitude of error between the full and the averaged solutions. Most importantly, the Raychaudhuri equation of the averaged system decouples, and the remaining reduced system can be analysed with the well known dynamical systems techniques mentioned above. Consequently we reduce the analysis of the original system to that of a corresponding averaged system.

In general the averaged solution represents only an approximation to the full solution, and error estimates in terms of the perturbation parameter are typically only valid for limited time. However in cases where either system is attracted to an equilibrium point the validity of such error estimates can be extended to all times; cf~\cite[\textsc{Chap}~5]{SandersEtAl2007}. It turns out that we are looking at such a case, and since we can show that our perturbation parameter, the Hubble scalar, goes to zero the full and the averaged solutions converge in the limit when time goes to infinity. The techniques presented here have been developed independently of~\cite{AlhoEtAl2019, AlhoEtAl2015, AlhoUggla2015} who took a closely related approach.

The spatially homogenous cosmologies we are considering are those of locally rotationally symmetric (LRS) Bianchi type~III, which in different terminology are also referred to as LRS Bianchi type~VI$_{-1}$; cf \cite{Rendall2004, RyanShepley1975, WainwrightEllis1997}. As matter model we consider a massive scalar field. Hence we are dealing with the Einstein-Klein-Gordon system within the LRS Bianchi III symmetry class. Our main results consider the future asymptotics of these systems. We express these in terms of the dominant behaviour of physical and geometric quantities for large times, such as the shear parameter and the energy density (\textsc{Theorem}~\ref{T: X asympt}) as well as the metric components (\textsc{Theorem}~\ref{T: g asympt}). Furthermore we also give the dominant late time behaviour of the Klein-Gordon field (\textsc{Theorem}~\ref{T: g asympt}).The premises of the main theorems contain the assumption that a conjecture holds (\textsc{Conjecture}~\ref{CJ: 1}). For the latter we give strong analytical and numerical support. Apart from supporting the conjecture, the numerics also convincingly demonstrate agreement with all our analytical results.

Finally we mention that one of our motivations to consider LRS Bianchi III Einstein-Klein-Gordon cosmologies was the result of~\cite{Rendall2002} which concerns Einstein-Vlasov cosmologies with massive particles within the same symmetry class. The finding was a forever expanding matter dominated future asymptotics with one scale factor increasing linearly with time, and the other one increasing slower than any power law. We were interested if the case of a massive scalar field would yield the same future asymptotics. Our results give an affirmative answer to this question.  \\

We begin with a brief introduction to periodic averaging in~\textsc{Section}~\ref{S: periodic averaging}. In \textsc{Section}~\ref{E: BIII EKG system} we introduce the LRS Bianchi III Einstein-Klein-Gordon system and formulate an averaging conjecture for it; cf \textsc{Conjecture}~\ref{CJ: 1}. The conjecture gives error estimates comparing the full and a corresponding averaged solution. We then provide analytical support for this conjecture in \textsc{Section}~\ref{S: analytical support}. Under the assumption that the conjecture holds, we then derive our main results in \textsc{Section}~\ref{S: asymptotics}: the future asymptotics of LRS Bianchi~III Einstein-Klein-Gordon cosmologies; cf theorems~\ref{T: X asympt} and~\ref{T: g asympt}. \textsc{Section}~\ref{S: numerical support} is then concerned with numerical support for our results and assumptions. Finally, we conclude with a discussion of our results and an outlook on open problems and further applicability of the averaging techniques developed here to the field of mathematical cosmology in \textsc{Section}~\ref{S: discussion}.

In particular for \textsc{Section}~\ref{S: analytical support} we assume some familiarity with standard methods of dynamical systems theory; cf~\cite{Perko2001}. However, because it has not been used to a wider extend in the literature of mathematical cosmology, we put a brief introduction to centre manifold analysis in \textsc{Appendix}~\ref{A: CM analysis}, following~\cite[\textsc{Chap~1}]{Carr1981}. We apply these tools in~\textsc{Subsection}~\ref{SS: CM analysis of D}. \textsc{Appendix}~\ref{A: two definitions} contains two technical definitions which we refer to in \textsc{Section}~\ref{S: periodic averaging}.

\section{Periodic averaging and the Van der Pol equation}\label{S: periodic averaging}

We restrict the following outline to a brief summary of those aspects of the theory of averaging which are relevant to the analysis of the present work, and follow Sanders, Verhulst and Murdock~\cite{SandersEtAl2007}, in particular its chapters~2 and~5.

In \textsc{Subsection}~\ref{SS: periodic averaging} we outline the basic idea of periodic averaging. We specify what we mean by a problem in standard form and for such give two theorems which estimate the error between the full and the corresponding averaged solution. The theory is then applied to the Van der Pol equation in \textsc{Subsection}~\ref{SS: Van der Pol}. This is not only an illustrative example, but is also of direct relevance to our problem since the spatially homogenous Klein-Gordon equation, though coupled to the Einstein equations, can be considered as a specific example of a Van der Pol equation.

\subsection{Periodic averaging}\label{SS: periodic averaging}

The theory of averaging studies initial value problems of the general form
\begin{align*}
\dot{\mathbf x} = \mathbf{f}(\mathbf x, t, \epsilon), \quad \mathbf x(0)=\mathbf a,
\end{align*}
with $\mathbf x, \mathbf f(\mathbf x, t, \epsilon)\in\R^n$, where $\epsilon$ plays the role of a, usually small, perturbation parameter. Typically one would then perform a Taylor expansion of $\mathbf f$ in $\epsilon$ around $\epsilon=0$. For the simplest form of averaging, \emph{periodic averaging}, the zeroth order term usually vanishes, and one is typically looking at problems of the standard form
\begin{align}\label{E: standard form}
\dot{\mathbf x} = \epsilon\,\mathbf f^1(\mathbf x, t) + \epsilon^2\,\mathbf f^{[2]}(\mathbf x,t,\epsilon),
\quad
\mathbf x(0) = \mathbf a,
\end{align}
with $\mathbf f^1$ and $\mathbf f^{[2]}$ $T$-periodic in $t$. The exponents correspond to the respective perturbative order, and the square bracket marks the remainder of the series; cf \cite[p~13, \sc Notation~1.5.2]{SandersEtAl2007}.

To first order, the theory is then concerned with the question to what degree solutions of~\eqref{E: standard form} can be approximated by the solutions of an associated \emph{averaged system}
\begin{align}\label{E: averaged system}
\dot{\mathbf z} &= \epsilon\,\overline{\mathbf f}^1(\mathbf z), \quad \mathbf z(0)=\mathbf a,
\end{align}
with
\begin{align}\label{E: f bar}
\overline{\mathbf f}^1(\mathbf z) &= \frac{1}{T}\int_0^T\mathbf f^1(\mathbf z, s)\,\mathrm ds.
\end{align}
The basic result is given by the following theorem:
\begin{lemma}[\text{\cite[p~31, \sc Thm~2.8.1]{SandersEtAl2007}}]\label{L: averaging}
Let $\mathbf f^1$ be Lipschitz continuous, let $\mathbf f^{[2]}$ be continuous, and let $\epsilon_0, D, L$ be as in \textsc{Definition}~\ref{D: D}. Then there exists a constant $c>0$ such that
\begin{align*}
||\mathbf x(t,\epsilon)-\mathbf z(t,\epsilon)|| &< c\epsilon
\end{align*}
for $0\leq\epsilon\leq\epsilon_0$ and $0\leq t\leq L/\epsilon$, and where $||\,.\, ||$ denotes the norm $||\mathbf u ||:=\sum_{i=1}^n|u_i|$ for $\mathbf u\in\R^n$.
\end{lemma}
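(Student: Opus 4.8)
The plan is to prove the error estimate $\|\mathbf x(t,\epsilon)-\mathbf z(t,\epsilon)\| < c\epsilon$ on the timescale $0\le t\le L/\epsilon$ by the classical near-identity transformation (averaging transformation) combined with a Gronwall argument. First I would introduce the auxiliary function
\begin{align*}
\mathbf u^1(\mathbf z,t) := \int_0^t\bigl(\mathbf f^1(\mathbf z,s)-\overline{\mathbf f}^1(\mathbf z)\bigr)\,\mathrm ds,
\end{align*}
which by construction is $T$-periodic in $t$ (since the integrand has zero mean over a period) and hence, together with its $\mathbf z$-derivatives, bounded uniformly in $t$ on the compact set $D$. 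I would then define the near-identity change of variables $\mathbf x = \mathbf w + \epsilon\,\mathbf u^1(\mathbf w,t)$. For $\epsilon$ small (how small will depend on $\epsilon_0$ and the Lipschitz constant $L$ of $\mathbf f^1$ through the bound on $\partial_{\mathbf w}\mathbf u^1$) this map is a diffeomorphism with inverse close to the identity, so it keeps the transformed trajectory in a slightly enlarged compact set for the duration of the estimate.

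Next I would substitute this transformation into~\eqref{E: standard form}. Differentiating, $\dot{\mathbf x} = \dot{\mathbf w} + \epsilon\,\partial_{\mathbf w}\mathbf u^1\cdot\dot{\mathbf w} + \epsilon\,\partial_t\mathbf u^1$, and using $\partial_t\mathbf u^1(\mathbf w,t) = \mathbf f^1(\mathbf w,t)-\overline{\mathbf f}^1(\mathbf w)$, the $O(\epsilon)$ oscillatory term $\epsilon\,\mathbf f^1(\mathbf x,t)$ on the right-hand side is cancelled up to terms of order $\epsilon^2$: one Taylor-expands $\mathbf f^1(\mathbf x,t) = \mathbf f^1(\mathbf w,t) + O(\epsilon)$ using Lipschitz continuity, and collects everything. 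The outcome is that $\mathbf w$ satisfies
\begin{align*}
\dot{\mathbf w} = \epsilon\,\overline{\mathbf f}^1(\mathbf w) + \epsilon^2\,\mathbf R(\mathbf w,t,\epsilon),
\end{align*}
where $\mathbf R$ is bounded on the relevant compact set and for $0\le\epsilon\le\epsilon_0$ — this is exactly the statement that the averaged system~\eqref{E: averaged system} governs $\mathbf w$ up to an $O(\epsilon^2)$ forcing. Since both $\mathbf x(0)=\mathbf a$ and $\mathbf z(0)=\mathbf a$, the initial condition for $\mathbf w$ is $\mathbf w(0)=\mathbf a-\epsilon\,\mathbf u^1(\mathbf a,0)$, which differs from $\mathbf a$ by $O(\epsilon)$.

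Finally I would close the estimate with Gronwall. Writing $\mathbf e(t):=\mathbf w(t)-\mathbf z(t)$, subtracting the two evolution equations and using the Lipschitz continuity of $\overline{\mathbf f}^1$ (inherited from that of $\mathbf f^1$, with the same constant up to a factor) gives
\begin{align*}
\|\mathbf e(t)\| \le \|\mathbf e(0)\| + \epsilon L\int_0^t\|\mathbf e(s)\|\,\mathrm ds + \epsilon^2\!\!\int_0^t\|\mathbf R\|\,\mathrm ds,
\end{align*}
so by Gronwall $\|\mathbf e(t)\| \le \bigl(\|\mathbf e(0)\| + \epsilon^2 M t\bigr)e^{\epsilon L t}$ for a bound $M$ on $\|\mathbf R\|$. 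On $0\le t\le L/\epsilon$ the exponential is bounded by $e^{L^2}$ and $\epsilon^2 t\le \epsilon L$, while $\|\mathbf e(0)\| = \epsilon\|\mathbf u^1(\mathbf a,0)\|=O(\epsilon)$; hence $\|\mathbf e(t)\|\le c_1\epsilon$. Combining with $\|\mathbf x(t)-\mathbf w(t)\| = \epsilon\|\mathbf u^1(\mathbf w(t),t)\| \le c_2\epsilon$ via the triangle inequality yields $\|\mathbf x(t)-\mathbf z(t)\| < c\epsilon$ with $c := c_1+c_2$, as claimed. The main obstacle is the bookkeeping that keeps all trajectories inside a fixed compact subset of $D$ on which the bounds on $\mathbf u^1$, $\mathbf R$, and the Lipschitz constants are valid — one has to choose $\epsilon_0$ small enough (shrinking it from the value in \textsc{Definition}~\ref{D: D} if necessary) so that the near-identity transformation and the $O(\epsilon)$ drift of $\mathbf w$ away from $\mathbf z$ do not push the orbit out of $D$ before time $L/\epsilon$; once that is arranged the analytic estimates are routine.
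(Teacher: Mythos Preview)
The paper does not actually prove this lemma: it is quoted verbatim from \cite[p~31, Thm~2.8.1]{SandersEtAl2007} and used as a black box, with no proof given in the text. Your outline is precisely the classical argument one finds in that reference (near-identity transformation to kill the oscillatory $O(\epsilon)$ term, followed by a Gronwall estimate on the timescale $L/\epsilon$), so in substance you have reproduced the intended proof.

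One small point of cleanup: you use the symbol $L$ both for the time-horizon constant of \textsc{Definition}~\ref{D: D} and for the Lipschitz constant of $\mathbf f^1$, which makes the bound $e^{\epsilon L t}\le e^{L^2}$ look like an identity when it is really $e^{\epsilon \lambda t}\le e^{\lambda L}$ for a Lipschitz constant $\lambda$. Renaming one of them removes the ambiguity; the argument itself is unaffected.
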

In other words, the error one makes by approximating the full system~\eqref{E: standard form} by the averaged system~\eqref{E: averaged system} will be of order $\epsilon$ on timescales of order $\epsilon^{-1}$. When the solutions of the full or averaged system are attracted by an asymptotically stable critical point, the domain of approximation might be extendable to all times; cf~\cite[\sc Chap~5]{SandersEtAl2007}. For instance:
\begin{lemma}[\textbf{Eckhaus/Sanchez-Palencia \cite[p~101, \sc Thm~5.5.1]{SandersEtAl2007}}]\label{L: Eckhaus}
Consider the initial value problem
\begin{align*}
\dot{\mathbf x} &= \epsilon\,\mathbf f^1(\mathbf x,t), \quad \mathbf x(0)=\mathbf a,
\end{align*}
with $\mathbf a, \mathbf x\in D\subset\R^n$ and $\mathbf f^1$ $T$-periodic in $t$. Suppose $\mathbf f^1$ is a KBM-vector field (\textsc{Definition}~\ref{D: KBM}) producing the averaged equation
\begin{align*}
\dot{\mathbf z} &= \epsilon\,\overline{\mathbf f}^1(\mathbf z), \quad \mathbf z(0)=\mathbf a,
\end{align*}
where $\mathbf z=0$ is an asymptotically stable critical point in the linear approximation, $\overline{\mathbf f}^1$ is continuously differentiable with respect to $\mathbf z$ in $D$ and has a domain of attraction $D^o\subset D$. Then for any compact $K\subset D^o$ and for all $\mathbf a\in K$
\begin{align*}
\mathbf x(t)-\mathbf z(t) &= \mathcal O(\epsilon), \quad 0\leq t<\infty.
\end{align*}
\end{lemma}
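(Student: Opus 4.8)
The plan is to bootstrap \textsc{Lemma}~\ref{L: averaging}, which only controls the error on a timescale of order $\epsilon^{-1}$, into an estimate valid for all $t\ge 0$, exploiting that the averaged flow funnels the solution into a neighbourhood of the attractor $\mathbf z=0$ on which asymptotic stability in the linear approximation provides an exponential contraction strong enough to absorb the $\mathcal O(\epsilon)$ averaging error at every later instant. Concretely I would split $[0,\infty)$ into a finite slow-time stretch, handled directly by \textsc{Lemma}~\ref{L: averaging}, and the remaining tail, handled by a local Lyapunov argument carried out after a near-identity averaging transformation.

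\textbf{Step 1 (entering a small ball).} Pass to the slow time $\tau=\epsilon t$, in which the averaged equation becomes the autonomous system $\mathrm d\mathbf z/\mathrm d\tau=\overline{\mathbf f}^1(\mathbf z)$ with $0$ asymptotically stable and basin containing $D^o$. Fix a radius $\delta>0$, to be pinned down in Step~2, with $\overline B_\delta(0)\subset D^o$ and such that $\overline B_\delta(0)$ lies in a forward-invariant neighbourhood of $0$ (possible by Lyapunov stability of $0$). Since $K\subset D^o$ is compact, continuous dependence on initial data yields a single slow time $T_1=T_1(\delta,K)$ with $\mathbf z(\tau)\in B_\delta(0)$ for all $\tau\ge T_1$ and all $\mathbf a\in K$. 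Now pick $L>T_1$ fixed and apply \textsc{Lemma}~\ref{L: averaging} on $0\le t\le L/\epsilon$: there is $c>0$ with $\|\mathbf x(t)-\mathbf z(t)\|<c\epsilon$ on that interval; in particular $\mathbf z(L/\epsilon)\in B_\delta(0)$, hence $\mathbf x(L/\epsilon)\in B_{2\delta}(0)$ once $c\epsilon<\delta$.

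\textbf{Step 2 (propagation to all later times).} On a small ball $B$ around $0$ (take $B=B_{2\delta}(0)$, enlarged slightly if needed so that the transformed variable stays inside) introduce the near-identity transformation $\mathbf x=\mathbf w+\epsilon\,\mathbf u^1(\mathbf w,t)$, $\mathbf u^1(\mathbf w,t):=\int_0^t\big(\mathbf f^1(\mathbf w,s)-\overline{\mathbf f}^1(\mathbf w)\big)\,\mathrm ds$, the usual $T$-periodic mean-zero generator; the KBM hypothesis (\textsc{Definition}~\ref{D: KBM}) together with $\overline{\mathbf f}^1\in C^1$ makes $\mathbf u^1$ and $\partial_{\mathbf w}\mathbf u^1$ bounded on $B$, so for small $\epsilon$ this is a near-identity diffeomorphism turning the equation into $\dot{\mathbf w}=\epsilon\,\overline{\mathbf f}^1(\mathbf w)+\epsilon^2\,\mathbf g(\mathbf w,t,\epsilon)$ with $\mathbf g$ bounded on $B$. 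Because $A:=D\overline{\mathbf f}^1(0)$ is a stable matrix — this is precisely asymptotic stability in the linear approximation — I would now fix $\delta$ (before $L$ and before $\epsilon_0$) so small that a quadratic Lyapunov function $V(\mathbf y)=\mathbf y^{\mathsf T}P\mathbf y$ with $A^{\mathsf T}P+PA=-I$ satisfies $\dot V\le-\epsilon\kappa V+C\epsilon^2\sqrt V$ along the $\mathbf w$-flow on $B$; the $\epsilon^2$-sublevel set of $V$ is then positively invariant and attracting, so $\mathbf w$, started at $\mathbf w(L/\epsilon)\in B$, never leaves $B$, while $\mathbf z$ stays in $B_\delta(0)$ by the choice of $T_1$ and $L$. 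Writing $\overline{\mathbf f}^1(\mathbf w)-\overline{\mathbf f}^1(\mathbf z)=M(t)(\mathbf w-\mathbf z)$ with $M(t)=\int_0^1 D\overline{\mathbf f}^1\big(\mathbf z+s(\mathbf w-\mathbf z)\big)\,\mathrm ds$, the same shrinkage of $\delta$ keeps $M(t)$ uniformly close to the stable matrix $A$, so $\mathbf d:=\mathbf w-\mathbf z$, which solves $\dot{\mathbf d}=\epsilon M(t)\mathbf d+\epsilon^2\mathbf g$, obeys by variation of constants and roughness of exponential stability $\|\mathbf d(t)\|\le Ke^{-\epsilon b(t-L/\epsilon)}\|\mathbf d(L/\epsilon)\|+\mathcal O(\epsilon)=\mathcal O(\epsilon)$ for all $t\ge L/\epsilon$. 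Since $\|\mathbf x-\mathbf w\|=\epsilon\|\mathbf u^1\|=\mathcal O(\epsilon)$ and $\|\mathbf d(L/\epsilon)\|\le\|\mathbf x(L/\epsilon)-\mathbf z(L/\epsilon)\|+\mathcal O(\epsilon)=\mathcal O(\epsilon)$ by Step~1, this gives $\|\mathbf x(t)-\mathbf z(t)\|=\mathcal O(\epsilon)$ on $[L/\epsilon,\infty)$, which together with Step~1 proves the claim for $0\le t<\infty$.

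\textbf{Main obstacle.} The genuine difficulty is the \emph{all-time} conclusion: \textsc{Lemma}~\ref{L: averaging} alone degrades once $t\gtrsim\epsilon^{-1}$, and a naive iteration over successive intervals accumulates a linearly growing error. The remedy — and the only place where asymptotic stability in the linear approximation is really used — is the exponential contraction near $0$, encoded by the Lyapunov function $V$ and the roughness estimate for the near-stable coefficient $M(t)$, which dissipates the freshly generated $\mathcal O(\epsilon^2)$ perturbation at every moment. The two auxiliary points that need care are (i) extracting a single entry time $T_1$ uniform over the compact set $K\subset D^o$, a soft compactness-plus-continuity argument, and (ii) the order of quantifiers: $\delta$ must be fixed first, from the local structure of $\overline{\mathbf f}^1$ near $0$ (so that both the near-identity transformation is well defined and $M(t)$ stays close to $A$ on $B$), then $T_1$ and $L$, and only then $\epsilon_0$.
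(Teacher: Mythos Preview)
The paper does not supply its own proof of this lemma: it is quoted verbatim as \cite[p~101, \textsc{Thm}~5.5.1]{SandersEtAl2007} and used as a black box throughout \textsc{Section}~\ref{S: analytical support}. There is therefore nothing in the paper to compare your argument against.

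That said, your sketch is essentially the classical proof one finds in \cite[\textsc{Chap}~5]{SandersEtAl2007}: use finite-time averaging (\textsc{Lemma}~\ref{L: averaging}) to steer both $\mathbf x$ and $\mathbf z$ into a small neighbourhood of the attractor by slow time $T_1$, then on the tail perform the near-identity periodic transformation and exploit the Hurwitz linearisation $A=D\overline{\mathbf f}^1(0)$ via a quadratic Lyapunov function to show that the $\mathcal O(\epsilon^2)$ remainder is dissipated faster than it accumulates. Your treatment of the order of quantifiers (fix $\delta$ from the local spectral data, then $T_1$ and $L$ uniformly over the compact $K$, then $\epsilon_0$) is the right one, and the roughness-of-exponential-dichotomy argument for $M(t)$ close to $A$ is the standard way to handle the difference $\mathbf w-\mathbf z$. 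One small point to tighten: the boundedness of $\partial_{\mathbf w}\mathbf u^1$ requires a bit more than the bare KBM condition of \textsc{Definition}~\ref{D: KBM}; in the $T$-periodic case at hand it follows once $\mathbf f^1$ is $C^1$ in $\mathbf x$, which is implicit in the hypothesis that $\overline{\mathbf f}^1$ is $C^1$ but worth stating.
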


\subsection{The Van der Pol equation}\label{SS: Van der Pol}

A demonstrative example of first order periodic averaging, which is relevant to the analysis of the present paper, is given by the class of Van der Pol equations
\begin{align}\label{E: Van der Pol}
\ddot\phi+\phi &= \epsilon\,g(\phi,\dot\phi)
\end{align}
with $g$ sufficiently smooth. \eqref{E: Van der Pol} describes a harmonic oscillator with generally non-linear feedback and damping. An amplitude-phase (variation of constants) transformation yields a system in standard form~\eqref{E: standard form}
\begin{align}\label{E: amplitude-phase}
\begin{matrix}
\phi = r\sin(t-\varphi) \\
\dot\phi = r\cos(t-\varphi)
\end{matrix} \quad \Longrightarrow \quad
\begin{bmatrix} \dot r \\ \dot\varphi \end{bmatrix} &= \epsilon
\begin{bmatrix}
\phantom{\frac{1}{r}}\cos(t-\varphi)\,g(\phi,\dot\phi) \\
\frac{1}{r}\sin(t-\varphi)\,g(\phi,\dot\phi)
\end{bmatrix}
\end{align}
where the arguments of $g$ are understood to be substituted by the transformation. Consequently, the averaged system~\eqref{E: averaged system} is given by\footnote{We follow here the notation of~\cite{SandersEtAl2007} and put bars above the variables to indicate that they belong to the averaged system.}
\begin{align}
\begin{bmatrix} \dot{\overline r} \\ \dot{\overline\varphi} \end{bmatrix} &= \epsilon\,\overline{\mathbf f}^1(\overline r) = \epsilon
\begin{bmatrix}
\overline f	^1_r(\overline r) \\
\overline f^1_\varphi(\overline r)
\end{bmatrix}
\end{align}
with
\begin{align}
\overline f^1_r(\overline r) &= \frac{1}{2\pi} \int_0^{2\pi} \cos(s-\overline\varphi)\,g\big(\overline r\sin(s-\overline\varphi), \overline r\cos(s-\overline\varphi)\big)\,\mathrm ds
\end{align}
and $\overline f^1_\varphi(r)$ defined analogously. Note that $\overline{\mathbf f}^1$ is independent of $\overline\varphi$. Specialising now to the specific Van der Pol equation with $g(\phi, \dot\phi) = (1-\phi^2)\dot\phi$ as an illustrative example we obtain the averaged system
\begin{align}
\begin{bmatrix} \dot{\overline r} \\ \dot{\overline\varphi} \end{bmatrix} &=
\begin{bmatrix}
\frac{1}{2}\epsilon\overline r (1-\frac{1}{4}\overline r^2) \\
0
\end{bmatrix}.
\end{align}
By \textsc{Lemma}~\ref{L: averaging} we know that the error between $[r, \varphi]^{\mathrm T}$ and $[\overline r, \overline\varphi]^{\mathrm T}$ will be of order $\epsilon$ on timescales of order $\epsilon^{-1}$. Since $\overline{\mathbf f}^1$ is independent of $\overline\varphi$ we restrict to the decoupled equation $\dot{\overline r}$,\footnote{In full rigorosity one has to justify this, which can be done by using a coordinate transformation $\phi=r\sin(\theta), \dot\phi=r\cos(\theta)$ instead, which yields a single averaged equation $\mathrm d\overline r/\mathrm d\theta=\epsilon f^1_1(\overline r)$, with $f^1_1(\overline r)$ of the same form as here; cf~\cite[p~103]{SandersEtAl2007}.} which has the two equilibrium points $\overline r=0$ and $\overline r=2$. The equilibrium point $\overline r=2$ is stable, and we can apply \textsc{Lemma}~\ref{L: Eckhaus} to extend the validity of the $\mathcal O(\epsilon)$ error estimate to all times into the future. (Similarly, by defining a negative time variable, we could apply the theorem to the past attraction to the unstable critical point at the origin.)

\section{The LRS Bianchi III Einstein-Klein-Gordon system in quasi-standard form}\label{E: BIII EKG system}

In this section we lay out the basic setup for our analysis. We introduce the LRS Bianchi~III Einstein-Klein-Gordon system in \textsc{Subsection}~\ref{SS: basic system}. In \textsc{Subsection}~\ref{SS: quasi-standard form} we then bring this system into a form closely resembling the standard form of periodic averaging problems discussed in the previous section. We call this the \emph{quasi-standard form} and formulate an averaging conjecture for it in \textsc{Subsection}~\ref{SS: averaging conjecture} which seeks to widen the applicability of the lemmas of the previous section to our case. At the end of the section we also derive the monotonicity and future asymptotics for one of our dynamical quantities, the Hubble scalar; cf \textsc{Lemma}~\ref{L: H to 0}. Finally we state the implications of the latter in conjunction with our averaging conjecture; cf \textsc{Proposition}~\ref{P: error limit}.

\subsection{The basic system}\label{SS: basic system}

An LRS Bianchi~III metric can be written in the form
\begin{align}\label{E: BIII metric}
\mathbf g &= -\mathrm dt^2 + a(t)^2\,\mathrm dr^2 + b(t)^2\,\mathbf g_{H^2}
\end{align}
where $\mathbf g_{H^2}$ denotes the 2-metric of negative constant curvature on hyperbolic 2-space; cf~\cite[\sc App~B.4]{CalogeroHeinzle2011}. We adopt the frame choice and variables of~\cite[\sc Sec~2]{RendallUggla2000} and~\cite[\sc Sec~3]{Rendall2002} and the resulting dynamical systems formulation of the Einstein equations for the metric~\eqref{E: BIII metric} and an energy-momentum tensor of the form $[{T^a}_b]=\mathrm{diag}(-\rho,p,p,p)$. For a Klein-Gordon field of mass $1$ and the metric~\eqref{E: BIII metric} we have
\begin{align}
\rho &= \frac{1}{2}\big(\dot\phi^2 + \phi^2\big) \quad\text{and}\quad
p = \frac{1}{2}\big(\dot\phi^2 - \phi^2\big)
\end{align}
where the field $\phi$ is subject to the Klein-Gordon equation $\Box_\mathbf{g}\phi=\phi$; cf~\cite[Sec~3.1]{Rendall2008}. Note that due to spatial homogeneity, the Einstein equations force $\phi$ to be independent of the spatial coordinates. Some of the equations are decoupled. For our analysis it suffices to restrict to the reduced coupled part of the LRS Bianchi~III Einstein-Klein-Gordon system which is given by
\begin{align}
\dot H &= H^2[-(1+q)] \label{E: Raychaudhuri} \\
\dot \Sigma_+ &= H[-(2-q)\Sigma_++1-\Sigma_+^2-\Omega] \label{E: Sprime} \\
\ddot \phi + \phi &= H[-3\dot\phi], \label{E: KG}
\end{align}
with the \emph{deceleration parameter}
\begin{align}
q &= 2\Sigma_+^2 + \frac{1}{6H^2}\big(2\dot\phi^2 - \phi^2\big).
\end{align}
$H := (\frac{\dot a}{a} + 2\frac{\dot b}{b})/3$ denotes the \emph{Hubble scalar}, ie a measure of the overall isotropic rate of spatial expansion. The corresponding evolution equation~\eqref{E: Raychaudhuri} is also referred to as the \emph{Raychaudhuri equation}. $H\Sigma_+ := (\frac{\dot a}{a} - \frac{\dot b}{b})/3$ denotes the only independent component of the \emph{shear tensor}, ie a measure of anisotropy in the rate of spatial expansion. Finally, $\Omega := \rho/(3H^2)$ defines a rescaled energy density which is non-negative by definition. Since we are interested in non-vacuum solutions we consider $\Omega$ to be positive.

The variables are subject to the \emph{Hamiltonian constraint}
\begin{align}\label{E: hc}
1-\Sigma_+^2-\Omega > 0.
\end{align}
Consequently, $\Sigma_+$ and $\Omega$ are bounded and take values in the range $(-1,1)$ and $(0,1)$ respectively. The Klein-Gordon field $\phi$ is unbounded, and so is $H$ a priori. However we will restrict our interest to the case of an (initially) expanding universe, ie to $H > 0$.

\subsection{The system in quasi-standard form}\label{SS: quasi-standard form}

The Klein-Gordon equation~\eqref{E: KG} has the form of a Van der Pol equation~\eqref{E: Van der Pol} with $g(\phi, \dot\phi) = -3\dot\phi$ and where $H(t)$ plays the role of $\epsilon$. The system~\eqref{E: Raychaudhuri}--\eqref{E: KG} therefore describes a harmonic oscillator with nonlinear damping, and where the time dependence of the latter is governed by the coupling of the Einstein equations with the Klein-Gordon equation via $H$.

Following \textsc{Subsection}~\ref{SS: Van der Pol} we apply an amplitude-phase transformation~\eqref{E: amplitude-phase} to formulate the Klein-Gordon equation~\eqref{E: KG} as two first order equations. Furthermore, we transform
\begin{align}\label{E: r to O}
r\mapsto\Omega = r^2/(6H^2) \quad\Longrightarrow\quad
q = 2\Sigma_+^2 + \Omega\big(3\cos(t-\varphi)^2-1\big).
\end{align}
The result is the system~\eqref{E: Raychaudhuri}, \eqref{E: Sprime} together with the first order Klein-Gordon equations
\begin{align}
\dot\Omega &= H\big[2\Omega \big(1+q-3\cos(t-\varphi)^2\big)\big] \label{E: Oprime} \\
\dot\varphi &= H[-3\sin(t-\varphi)\cos(t-\varphi)] \label{E: phiprime}
\end{align}
and where now $q$ is understood to be expressed by~\eqref{E: r to O}.

Setting $\mathbf x=[\Sigma_+,\Omega,\varphi]^{\mathrm T}$ this system has the form
\begin{align}\label{E: quasi-standard form}
\begin{bmatrix} \dot H \\ \dot{\mathbf x} \end{bmatrix} &=
H\,\mathbf F^1(\mathbf x, t) + H^2\,\mathbf F^{[2]}(\mathbf x, t) =
H \begin{bmatrix} 0 \\ \mathbf f^1(\mathbf x, t) \end{bmatrix} + H^2 \begin{bmatrix} f^{[2]}(\mathbf x, t) \\ \mathbf 0 \end{bmatrix}
\end{align}
where $\mathbf f^1(\mathbf x, t)$ is given by the square brackets in~\eqref{E: Sprime}, \eqref{E: Oprime} and~\eqref{E: phiprime}, and $f^{[2]}(\mathbf x, t)$ by the square bracket in~\eqref{E: Raychaudhuri}. Note that $\mathbf f^1, f^{[2]}$ are independent of $H$. We see that~\eqref{E: quasi-standard form} is resembling the standard form~\eqref{E: standard form} with $H(t)$ playing the role of the perturbation parameter $\epsilon$. The crucial difference is however, that $H$ is time dependent and itself subject to an evolution equation which is part of the system. Because of this we say that~\eqref{E: quasi-standard form} has \emph{quasi-standard form}.

\subsection{Averaging conjecture}\label{SS: averaging conjecture}

Because~\eqref{E: quasi-standard form} has only quasi-standard form and not standard form, we cannot directly apply lemmas~\ref{L: averaging} or~\ref{L: Eckhaus} to obtain rigorous error estimates. However we formulate the following conjecture:
\begin{conjecture}\label{CJ: 1}
Consider some arbitrary non-vacuum ($\Omega>0$) initial value satisfying the Hamiltonian constraint~\eqref{E: hc} and $H>0$. Let $\mathbf [H(t), \mathbf x(t)]^\mathrm{T}$ denote the respective solution to~\eqref{E: quasi-standard form} and let $\mathbf z(t)$ denote the solution to the corresponding averaged equation $\dot{\mathbf z} = H(t)\,\overline{\mathbf f}^1(\mathbf z)$, with $\overline{\mathbf f}^1$ as in~\eqref{E: f bar}. Let $\mathbf X(t), \mathbf Z(t)$ denote the 2-vectors containing the $\Sigma_+$ and $\Omega$ components of the corresponding 3-vectors $\mathbf x(t), \mathbf z(t)$. Then there exists a $t_*$ such that
\begin{align}\label{E: conjecture}
\mathbf X(t)-\mathbf Z(t) &= \mathcal O\big(H(t)\big) \quad\forall\,t>t_*.
\end{align}
\end{conjecture}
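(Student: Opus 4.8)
The plan is to mimic the proof of the first-order averaging theorem \textsc{Lemma}~\ref{L: averaging} (\cite[Thm~2.8.1]{SandersEtAl2007}), adapted to the quasi-standard form~\eqref{E: quasi-standard form}, and then to upgrade the resulting finite-time estimate to an all-time one by exploiting that $H\to 0$ together with a dynamical systems analysis of the averaged equation. First I would introduce the \emph{near-identity transformation} $\mathbf x=\mathbf z^{*}+H(t)\,\mathbf u^{1}(\mathbf z^{*},t)$, with $\mathbf u^{1}(\mathbf z^{*},t):=\int_{0}^{t}\big(\mathbf f^{1}(\mathbf z^{*},s)-\overline{\mathbf f}^{1}(\mathbf z^{*})\big)\,\mathrm ds$; this $\mathbf u^{1}$ is bounded and $2\pi$-periodic in $t$ because $\mathbf f^{1}-\overline{\mathbf f}^{1}$ has zero mean and because $\Sigma_{+},\Omega$ stay in the compact region cut out by the Hamiltonian constraint~\eqref{E: hc}, on which $\mathbf f^{1}$ and its relevant derivatives are bounded. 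Differentiating, inserting $\dot{\mathbf x}=H\mathbf f^{1}(\mathbf x,t)$, Taylor expanding in $H$, and using $\partial_{t}\mathbf u^{1}=\mathbf f^{1}(\mathbf z^{*},t)-\overline{\mathbf f}^{1}(\mathbf z^{*})$ to cancel the oscillatory part at first order, one obtains $\dot{\mathbf z}^{*}=H\,\overline{\mathbf f}^{1}(\mathbf z^{*})+H^{2}\,\mathbf R(\mathbf z^{*},t)$ with $\mathbf R$ bounded. The only genuinely new term relative to the classical computation is $\dot H\,\mathbf u^{1}$; but $\dot H=-H^{2}(1+q)$ with $q$ bounded (again by~\eqref{E: hc}), so this term is $\mathcal O(H^{2})$ as well. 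Since moreover $\mathbf x-\mathbf z^{*}=H\,\mathbf u^{1}$ identically, its $(\Sigma_{+},\Omega)$-part is $\mathcal O(H)$ outright, and the conjecture reduces to showing that the $(\Sigma_{+},\Omega)$-parts $\mathbf Z^{*}$ of $\mathbf z^{*}$ and $\mathbf Z$ of $\mathbf z$ satisfy $\mathbf Z^{*}-\mathbf Z=\mathcal O(H)$ --- i.e.\ to comparing two solutions of almost the same averaged system.

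\textbf{Reduction to a planar problem.} I would then pass to the expansion-normalised time $\tau:=\int_{t_{*}}^{t}H(s)\,\mathrm ds$; by \textsc{Lemma}~\ref{L: H to 0} one has $H\searrow 0$, and the Raychaudhuri equation with $q$ bounded gives $\tau\to\infty$. Exactly as for the Van der Pol example in \textsc{Subsection}~\ref{SS: Van der Pol}, the $(\Sigma_{+},\Omega)$-components of $\overline{\mathbf f}^{1}$ are independent of $\overline\varphi$, so in the new time $\mathbf Z(\tau)$ solves an \emph{autonomous} planar system $\mathrm d\mathbf Z/\mathrm d\tau=\overline{\mathbf f}^{1}_{(\Sigma_{+},\Omega)}(\mathbf Z)$, while $\mathbf Z^{*}(\tau)$ solves the same system with an extra forcing term of size $\mathcal O(H)=\mathcal O(e^{-\frac32\tau})$ inherited from the $H^{2}\mathbf R$ remainder. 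A dynamical systems analysis of this planar system (carried out in \textsc{Section}~\ref{S: analytical support}) should show that every non-vacuum orbit converges, as $\tau\to\infty$, to the single equilibrium $D$ with $\Sigma_{+}=\tfrac12$, $\Omega=0$, whose linearisation has eigenvalues $-\tfrac32$ and $0$; $D$ is therefore \emph{non-hyperbolic}, with a one-dimensional centre manifold on which $\Omega$ decays only like $1/\tau$ (and $\Sigma_{+}-\tfrac12\sim-\tfrac12\Omega$). By a perturbative argument the forced solution $\mathbf Z^{*}$ is attracted to the same point and eventually shadows the same centre manifold.

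\textbf{The main obstacle.} The delicate point --- and presumably the reason the statement is a conjecture rather than a theorem --- is the all-time error control near the non-hyperbolic point $D$, where \textsc{Lemma}~\ref{L: Eckhaus} does not apply. Writing $\mathbf e:=\mathbf Z^{*}-\mathbf Z$, the hyperbolic (stable) direction is comparatively benign, but along the centre direction one obtains, schematically, $\tfrac{\mathrm d}{\mathrm d\tau}\big(\Omega^{*}-\overline\Omega\big)+\tfrac{2}{\tau}\big(\Omega^{*}-\overline\Omega\big)=\mathcal O(e^{-\frac32\tau})$, whose solution behaves like $C/\tau^{2}$; in the original time this gives only $\mathbf e=\mathcal O\big(1/(\log t)^{2}\big)$, which tends to zero but is weaker than the conjectured $\mathcal O(H)=\mathcal O(1/t)$. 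Closing this gap requires a genuine asymptotic-phase statement on the centre manifold: one must perform the centre-manifold reduction for both the full (transformed) and the averaged system and show that the two scalar reduced equations --- both of the form $\dot w=-w^{2}$ up to rescaling --- agree not only at leading order but also in their next-order asymptotics and their integration constants, so that the $1/\tau^{2}$ discrepancy cancels and only the genuinely $\mathcal O(H)$ oscillatory corrections survive. (A second near-identity transformation alone does not obviously help, since it modifies the drift of a system already crawling towards a non-hyperbolic equilibrium.) Controlling precisely how the full solution shadows the averaged one along the centre manifold is, I expect, the crux; \textsc{Section}~\ref{S: analytical support} supplies analytical evidence for this and \textsc{Section}~\ref{S: numerical support} a numerical confirmation.
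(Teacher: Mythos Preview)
You are attempting to prove what the paper explicitly leaves as a conjecture; the paper offers only heuristic \emph{analytical support} in \textsc{Section}~\ref{S: analytical support}, not a proof, and your proposal is in fact more precise than that support in locating the real difficulty.

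The paper's route is different from yours. It truncates the $H^{2}$-term to obtain a frozen-$H_{*}$ system in genuine standard form (this is \textsc{Assumption}~\ref{AS: 1}, left unproved), applies \textsc{Lemma}~\ref{L: averaging} for a finite-time estimate, then extends to all times ``in analogy'' to \textsc{Lemma}~\ref{L: Eckhaus} using that $D$ attracts the averaged flow, and finally invokes a heuristic continuum limit over a sequence of truncation times $t_{k}$. Your route---a near-identity transformation applied directly to the quasi-standard form, with the extra $\dot H\,\mathbf u^{1}$ term absorbed into the $\mathcal O(H^{2})$ remainder since $\dot H=\mathcal O(H^{2})$---is the standard and cleaner way to set up such an argument; it bypasses both \textsc{Assumption}~\ref{AS: 1} and the continuum limit, and reduces the conjecture to comparing two trajectories of the averaged planar system that differ by an $\mathcal O(H)$ forcing.

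Where you are sharper than the paper is in naming the actual obstruction: $D$ has a zero eigenvalue, so \textsc{Lemma}~\ref{L: Eckhaus} does not apply (its hypothesis demands asymptotic stability \emph{in the linear approximation}), and the paper's appeal to it ``in analogy'' is itself heuristic. Your linearised estimate along the centre direction---giving only $\mathcal O(\tau^{-2})=\mathcal O\big((\ln t)^{-2}\big)$ for the difference, versus the conjectured $\mathcal O(H)=\mathcal O(t^{-1})$---is correct and pinpoints precisely why neither your argument nor the paper's closes the gap. Establishing an asymptotic-phase statement on the centre manifold, matching the integration constants of the full and averaged reduced flows so that the $\tau^{-2}$ discrepancy cancels, is indeed what a full proof would require; neither you nor the paper supplies it, which is why the statement remains a conjecture.
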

In \textsc{Section}~\ref{S: analytical support} we provide analytical support for this conjecture and \textsc{Section}~\ref{S: numerical support} also contains numerical support. Under the premise that it holds, we then derive the future asymptotics of Bianchi~III Einstein-Klein-Gordon cosmologies in \textsc{Section}~\ref{S: asymptotics}. The following observations are key:
\begin{lemma}\label{L: H to 0}
For an initial value with positive $H$, $H$ is strictly monotonically decreasing with $t$, and $\lim_{t\to\infty} H(t)=0.$
\end{lemma}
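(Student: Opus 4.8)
The plan is to work directly with the Raychaudhuri equation~\eqref{E: Raychaudhuri}, which reads $\dot H = -(1+q)H^2$, and show that $1+q$ is bounded below by a positive constant along any solution with $H>0$. Recall from~\eqref{E: r to O} that $q = 2\Sigma_+^2 + \Omega(3\cos^2(t-\varphi)-1)$, so that
\begin{align*}
1+q &= 1 + 2\Sigma_+^2 + \Omega\big(3\cos^2(t-\varphi)-1\big) \geq 1 + 2\Sigma_+^2 - \Omega,
\end{align*}
using $\cos^2\geq 0$. By the Hamiltonian constraint~\eqref{E: hc}, $\Omega < 1-\Sigma_+^2 \leq 1$, hence $1+q > 2\Sigma_+^2 + \Sigma_+^2 = 3\Sigma_+^2 \geq 0$; more usefully, $1+q \geq 1+2\Sigma_+^2-\Omega > 2\Sigma_+^2 \geq 0$ and also $1+q > 1 - \Omega + 2\Sigma_+^2$. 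To get a \emph{strictly} positive lower bound I would combine these: if $\Omega$ is bounded away from $1$ we are done since then $1+q \geq 1-\Omega+2\Sigma_+^2 \geq 1-\Omega > 0$; and in general $1+q > 2\Sigma_+^2 + \Omega(3\cos^2(t-\varphi)-1) \geq -\Omega \geq -1$ is not enough, so the real observation is that $1+q = 3\Sigma_+^2 + (1-\Sigma_+^2-\Omega) + 3\Omega\cos^2(t-\varphi) \geq (1-\Sigma_+^2-\Omega) > 0$ by~\eqref{E: hc} directly, but this bound degenerates as the constraint quantity tends to $0$.

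To handle the possibility that $1-\Sigma_+^2-\Omega \to 0$, I would note that in that regime $\Omega \to 1-\Sigma_+^2$, so $1+q \to 2 + \Sigma_+^2 + 3(1-\Sigma_+^2)\cos^2(t-\varphi) - 1 = 1 + \Sigma_+^2 + 3(1-\Sigma_+^2)\cos^2(t-\varphi)$; since $\Sigma_+^2 < 1$ this is at least $1 + \Sigma_+^2 \geq 1$, uniformly. Thus on the region where the constraint quantity is small, $1+q$ is near a quantity bounded below by $1$, and on the region where it is bounded away from zero, $1+q$ is bounded below by that same bound. A clean way to package this is to prove $1+q \geq \tfrac12\min\{1,\, 1-\Sigma_+^2-\Omega\} + \text{(something)}$ — but more simply, observe $1+q = 3\Sigma_+^2 + (1-\Sigma_+^2-\Omega) + 3\Omega\cos^2(t-\varphi)$, and separately that integrating the $\cos^2$ term over one period contributes a definite amount; this is exactly where the averaging philosophy of the paper enters. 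For the purpose of this lemma, however, it suffices to show $\dot H < 0$ (strict monotonicity) and $H \to 0$, and strict negativity of $\dot H$ follows immediately from $1+q \geq 1-\Sigma_+^2-\Omega > 0$ whenever the constraint is strict, which it is by hypothesis~\eqref{E: hc}.

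For the limit $H(t) \to 0$: since $H$ is positive and strictly decreasing it converges to some $H_\infty \geq 0$; I must rule out $H_\infty > 0$. From $\dot H = -(1+q)H^2 \leq -(1-\Sigma_+^2-\Omega)H^2 < 0$, if $H_\infty > 0$ then $H(t) \geq H_\infty$ for all $t$, so $\dot H \leq -(1-\Sigma_+^2-\Omega)H_\infty^2$; to conclude I need $1-\Sigma_+^2-\Omega$ to not decay to zero too fast, i.e. to be non-integrable, or alternatively to find a lower bound $1+q \geq \delta > 0$ holding for all large $t$. The cleanest route is: integrate~\eqref{E: Oprime} and~\eqref{E: Sprime} to understand the behaviour of $\Sigma_+$ and $\Omega$; alternatively, use that the average of $3\cos^2(t-\varphi)-1$ over a period is $\tfrac12$, so heuristically $q \approx 2\Sigma_+^2 + \tfrac12\Omega$ and $1+q \approx 1 + 2\Sigma_+^2 + \tfrac12\Omega > 1$. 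Rigorously, one can show that along the flow $1+q$ cannot stay close to zero on any interval of length $2\pi$ because the $\cos^2$ oscillation forces $\int_t^{t+2\pi}(1+q)\,ds$ to exceed a fixed positive number (using that $\varphi$ varies slowly, by~\eqref{E: phiprime}, when $H$ is small). Hence $\int_0^\infty (1+q)H^2\,dt$ would diverge if $H_\infty>0$, contradicting $\int_0^\infty(-\dot H)\,dt = H(0)-H_\infty < \infty$. The main obstacle is precisely this last point: making the oscillation-averaging argument rigorous near the degenerate locus $1-\Sigma_+^2-\Omega = 0$, where the naive pointwise bound on $1+q$ fails; one resolves it by a short-interval integral estimate exploiting that $\dot\varphi = O(H)$ is small, so over one period $\cos^2(t-\varphi)$ genuinely averages, and this is where the structural hypotheses of the system — rather than just the constraint — must be used.
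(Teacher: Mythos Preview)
For strict monotonicity your argument is correct but far more elaborate than needed. The paper's proof is essentially two lines: since $\Sigma_+\in(-1,1)$ and $\Omega\in(0,1)$, one has $1+q=1+2\Sigma_+^2+\Omega(3\cos^2(t-\varphi)-1)\geq 1+2\Sigma_+^2-\Omega>0$, hence $\dot H=-(1+q)H^2<0$ whenever $H>0$. Your identity $1+q=3\Sigma_+^2+(1-\Sigma_+^2-\Omega)+3\Omega\cos^2(t-\varphi)$, the multiple alternative bounds, and the excursion into the regime $1-\Sigma_+^2-\Omega\to0$ are all superfluous for this step.

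For the limit $H\to0$ you are in fact being \emph{more} careful than the paper. Its only further argument is the sentence ``From~\eqref{E: Raychaudhuri} we also see that $\dot H=0$ for $H=0$. Together this gives the statement of the lemma.'' Strictly speaking that shows only that $H$ stays positive, hence converges to some $H_\infty\geq0$; it does not rule out $H_\infty>0$. Your instinct that an oscillation estimate is required to force $\int_0^\infty(1+q)\,dt=\infty$ (equivalently $1/H\to\infty$, via $d(1/H)/dt=1+q$) is the correct way to close this, and your sketch using $\psi:=t-\varphi$ with $\dot\psi=1+O(H)$ is on the right track. But if the aim is to reproduce the paper's proof, you have already gone further than it does: the paper simply does not supply that step. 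So your proposal is not wrong; it is unfinished in a place where the paper is silent, and overwritten in the place where the paper is brief.
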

\begin{proof}
The sign of the Raychaudhuri equation~\eqref{E: Raychaudhuri} is determined by the sign of the factor in square brackets. In \textsc{Subsection}~\ref{SS: basic system} we discussed that $\Sigma_+\in(-1,1)$ and $\Omega\in(0,1)$. With this we see from~\eqref{E: r to O} that $-(1+q)<0$. From~\eqref{E: Raychaudhuri} we also see that $\dot H=0$ for $H=0$. Together this gives the statement of the lemma.
\end{proof}
\begin{proposition}\label{P: error limit}
Validity of \textsc{Conjecture}~\ref{CJ: 1} implies $\lim_{t\to\infty}(\mathbf X(t)-\mathbf Z(t))=0$.
\end{proposition}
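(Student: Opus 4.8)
The plan is to combine the two facts now in hand: by \textsc{Conjecture}~\ref{CJ: 1}, there is a $t_*$ and a constant $c>0$ with $||\mathbf X(t)-\mathbf Z(t)|| < c\,H(t)$ for all $t>t_*$; and by \textsc{Lemma}~\ref{L: H to 0}, $H(t)\to 0$ as $t\to\infty$. First I would unpack the $\mathcal O(H(t))$ notation in~\eqref{E: conjecture} into precisely this statement: there exist $c>0$ and $t_*$ such that $||\mathbf X(t)-\mathbf Z(t)|| \leq c\,H(t)$ for every $t>t_*$, where $||\,.\,||$ is the norm introduced in \textsc{Lemma}~\ref{L: averaging}. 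Then I would take the limit $t\to\infty$: the right-hand side tends to $c\cdot 0 = 0$ by \textsc{Lemma}~\ref{L: H to 0}, so by the squeeze (sandwich) theorem applied to the nonnegative quantity $||\mathbf X(t)-\mathbf Z(t)||$, we get $\lim_{t\to\infty}||\mathbf X(t)-\mathbf Z(t)|| = 0$, which is exactly $\lim_{t\to\infty}(\mathbf X(t)-\mathbf Z(t)) = 0$ (convergence in the norm on $\R^2$ is equivalent to componentwise convergence).

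The argument is essentially a one-line squeeze once the hypotheses are correctly invoked, so there is no real obstacle here — the proposition is a direct corollary. The only point demanding a modicum of care is that the $\mathcal O$-estimate in \textsc{Conjecture}~\ref{CJ: 1} is asserted to hold for all $t > t_*$, i.e.\ it is a genuine statement about the tail and not merely a local-in-time bound; this is precisely what makes the limit argument legitimate, and it is the role played by \textsc{Lemma}~\ref{L: Eckhaus} (the Eckhaus/Sanchez-Palencia extension of the error estimate to all times) in the analytical support for the conjecture. I would therefore phrase the proof so as to make explicit that the uniform-in-the-tail nature of~\eqref{E: conjecture} is what is being used.

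Concretely, the proof I would write is: \emph{By \textsc{Conjecture}~\ref{CJ: 1} there exist $c > 0$ and $t_* > 0$ with $0 \leq ||\mathbf X(t) - \mathbf Z(t)|| \leq c\,H(t)$ for all $t > t_*$. By \textsc{Lemma}~\ref{L: H to 0}, $H(t) \to 0$ as $t \to \infty$, hence $c\,H(t) \to 0$. The squeeze theorem then gives $||\mathbf X(t) - \mathbf Z(t)|| \to 0$, and since componentwise convergence in $\R^2$ is equivalent to convergence in this norm, $\lim_{t\to\infty}(\mathbf X(t) - \mathbf Z(t)) = 0$.} That is the whole argument; nothing further is needed.
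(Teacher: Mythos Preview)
Your proof is correct and follows exactly the same approach as the paper, which simply cites \textsc{Lemma}~\ref{L: H to 0} together with~\eqref{E: conjecture}. You have merely unpacked the squeeze argument that the paper leaves implicit.
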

\begin{proof}
Cf \textsc{Lemma}~\ref{L: H to 0} together with~\eqref{E: conjecture}.
\end{proof}
Hence, assuming that \textsc{Conjecture}~\ref{CJ: 1} holds, \textsc{Proposition}~\ref{P: error limit} states that the error between $\mathbf X(t)$ and $\mathbf Z(t)$ goes to zero as $t\to\infty$. Consequently the future asymptotics of the averaged system coincides with that of the full system with respect to $\Sigma_+$ and $\Omega$ in this limit. As we show in \textsc{Section}~\ref{S: asymptotics}, modulo a calculation of the rate of approximation this suffices to derive the asymptotic form of the metric. In \textsc{Section}~\ref{S: numerical support} we find agreement of these results with numerics, and also give convincing numerical support for \textsc{Conjecture}~\ref{CJ: 1}. Before all that however, we give strong analytical support for \textsc{Conjecture}~\ref{CJ: 1} in the following section.

\section{Analytical support for conjecture~\ref{CJ: 1}}\label{S: analytical support}

In this section we provide analytical support for \textsc{Conjecture}~\ref{CJ: 1}. In \textsc{Subsection}~\ref{SS: 1st O approx} we argue that, because of \textsc{Lemma}~\ref{L: H to 0}, for large $t$ we can truncate terms of second order in $H$ and work with the resulting first order system in good approximation. We formulate \textsc{Assumption}~\ref{AS: 1} under which this holds. We then perform a qualitative analysis of the corresponding averaged system and identify its attractors in \textsc{Subsection}~\ref{SS: averaged system}; cf \textsc{Lemma}~\ref{L: attractors}. In \textsc{Subsection}~\ref{SS: 1st O errors} we then use this together with lemmas~\ref{L: averaging} and~\ref{L: Eckhaus} to give error estimates between the first order and the averaged system. Finally, in \textsc{Subsecton}~\ref{SS: closing argument} we close our line of arguments heuristically by stating that an infinite series of such first order approximations should yield \textsc{Conjecture}~\ref{CJ: 1} in the continuum limit.

\subsection{The first order approximation}\label{SS: 1st O approx}

As stated in \textsc{Subsection}~\ref{SS: basic system} we consider $H$ to be positive initially. Hence, from \textsc{Lemma}~\ref{L: H to 0} we know that $H(t)$ becomes arbitrarily small as $t$ becomes arbitrarily large. This suggests that for sufficiently large $t$, \eqref{E: quasi-standard form} may be viewed as a perturbative series in $H$ around $H=0$ in analogy to~\eqref{E: standard form}. Hence for sufficiently large $t=t_*$ it seems natural to truncate the second order term to yield the first order approximation\footnote{In~\eqref{E: 1st O approx} $[\mathcal H, \mathbf y]^\mathrm{T}$ represents the same quantities as $[H, \mathbf x]^\mathrm{T}$ in~\eqref{E: quasi-standard form}. We rename the vector to indicate the correspondence of the solutions to the respective systems.}
\begin{align}\label{E: 1st O approx}
\begin{bmatrix}\dot{\mathcal H} \\ \dot{\mathbf y} \end{bmatrix} &=
\begin{bmatrix} 0 \\ \mathcal H\,\mathbf f^1(\mathbf y,t) \end{bmatrix} =
\begin{bmatrix} 0 \\ H_*\,\mathbf f^1(\mathbf y,t) \end{bmatrix}
\end{align}
where $H_*$ denotes the value $H(t_*) = \mathcal H(t_*)$ which is constant due to the first equation.

This step is heuristic since we did not provide proof that we can truncate in good approximation, and we also did not specify what exactly we mean by the latter. We thus assume the following:
\begin{assumption}\label{AS: 1}
The error between $\mathbf x(t)$ and $\mathbf y(t)$ resulting from truncating~\eqref{E: quasi-standard form} to~\eqref{E: 1st O approx} is $\mathcal O(H_*)$ for $t>t_*$.
\end{assumption}
\begin{remark}
\textsc{Section}~\ref{S: numerical support} includes numerical support for the validity of \textsc{Assumption}~\ref{AS: 1}.
\end{remark}

\subsection{Qualitative analysis of the averaged system}\label{SS: averaged system}

The second equation of~\eqref{E: 1st O approx} has standard form~\eqref{E: standard form}. The associated averaged equation~\eqref{E: averaged system} is given by
\begin{align}\label{E: reduced system}
\begin{bmatrix} \dot{\overline\Sigma}_+ \\ \dot{\overline\Omega} \end{bmatrix} &=
H_*\begin{bmatrix}
-\Big(2\big(1-\overline\Sigma_+^2\big) - \frac{\overline\Omega}{2}\Big)\overline\Sigma_+ + 1-\overline\Sigma_+^2-\overline\Omega \\
\overline\Omega\Big( 4\overline\Sigma_+^2 - \big(1-\overline\Omega\big)\Big)
\end{bmatrix}
\end{align}
together with $\dot{\overline\varphi}=0$. Since the latter equation is decoupled, we focus on the reduced system~\eqref{E: reduced system}. By the Hamiltonian constraint~\eqref{E: hc} and the positivity of $\Omega$, the state-space is given by the bounded region
\begin{align*}
\mathcal X := \Big\{ \big(\overline\Sigma_+,\overline\Omega\big)\in\R^2 \Big| \overline\Omega<1-\overline\Sigma_+^2, \overline\Omega>0\Big\}
\end{align*}
depicted in \textsc{Figure}~\ref{F: qualitative flow}.
\begin{figure}
\tdplotsetmaincoords{90}{0}
\begin{tikzpicture}[scale=3.3,tdplot_main_coords]

	\draw[|-|] (-1,0,-.2) -- (0,0,-.2); \draw [-|] (0,0,-.2) -- (1,0,-.2);
	\node at (-1,0,-.3) {$\scriptstyle -1$};
	\node at (0,0,-.3) {$\scriptstyle 0$}; \node at (0,0,-.1) {$\scriptstyle\overline\Sigma_+$};
	\node at (1,0,-.3) {$\scriptstyle 1$};
	\draw[|-|] (-1.2,0,0) -- (-1.2,0,.5); \draw[-|] (-1.2,0,.5) -- (-1.2,0,1);
	\node at (-1.3,0,1) {$\scriptstyle1$};
	\node at (-1.31,0,.5) {$\scriptstyle1/2$};\node at (-1.1,0,.5) {$\scriptstyle\overline\Omega$};
	\node at (-1.3,0,0) {$\scriptstyle0$};
	
	\draw[->, thick] (-1,0,0) ..controls+(.1,0,.2) and+(-.1,0,-.1).. (-1/2,0,3/4);
	\draw[thick] (-1/2,0,3/4) ..controls+(.1,0,.1) and+(-.22,0,0).. (0,0,1);
	\draw[->, thick] (1,0,0) ..controls+(-.1,0,.2) and+(.1,0,-.1).. (1/2,0,3/4);
	\draw[thick] (1/2,0,3/4) ..controls+(-.1,0,.1) and+(.22,0,0).. (0,0,1);
	\draw[->,thick] (-1,0,0) -- (-.25,0,0);\draw[thick] (-.25,0,0) -- (1/2,0,0);
	\draw[->,thick] (1,0,0) -- (.75,0,0);\draw[thick] (.75,0,0) -- (1/2,0,0);
	\draw[->,dashed] (0,0,1) ..controls+(.1,0,-.25) and+(-.1,0,.2).. (.22,0,1/2);
	\draw[dashed] (.22,0,1/2) ..controls+(.1,0,-.2) and+(-.1,0,.2).. (1/2,0,0);
	\draw[->] (-1,0,0) ..controls+(.1,0,.2) and+(-.25,0,0).. (-.185,0,.864);
	\draw (-.185,0,.864) ..controls+(.2,0,0) and+(-.19,0,.3).. (1/2,0,0);
	\draw[->] (-1,0,0) ..controls+(.1,0,.2) and+(-.3,0,0).. (-.33,0,.5);
	\draw (-.33,0,.5,0) ..controls+(.4,0,0) and+(-.19,0,.3).. (1/2,0,0);
	\draw[->] (-1,0,0) ..controls+(.1,0,.2) and+(-.1,0,0).. (-.47,0,.26);
	\draw (-.47,0,.26) ..controls+(.1,0,0) and+(-.19,0,.3).. (1/2,0,0);
	\draw[->] (1,0,0) ..controls+(-.1,0,.2) and+(.25,0,0).. (.185,0,.864);
	\draw (.185,0,.864) ..controls+(-.18,0,0) and+(-.15,0,.3).. (1/2,0,0);
	\draw[->] (1,0,0) ..controls+(-.1,0,.2) and+(.18,0,0).. (.4,0,.5);
	\draw (.4,0,.5) ..controls+(-.13,0,0) and+(-.15,0,.3).. (1/2,0,0);
	\draw[->] (1,0,0) ..controls+(-.1,0,.2) and+(.05,0,0).. (.6,0,.26);
	\draw (.6,0,.26) ..controls+(-.03,0,0) and+(-.15,0,.3).. (1/2,0,0);
	
	\draw plot [mark=*, mark size=.5] coordinates{(0,0,1)};
	\node at (0,0,1.1) {$\scriptstyle F$};
	\draw plot [mark=*, mark size=.5] coordinates{(-1,0,0)};
	\node at (-1.1,0,-.1) {$\scriptstyle T$};
	\draw plot [mark=*, mark size=.5] coordinates{(1,0,0)};
	\node at (1.1,0,-.1) {$\scriptstyle Q$};
	\draw plot [mark=*, mark size=.5] coordinates{(1/2,0,0)};
	\node at (1/2,0,-.1) {$\scriptstyle D$};
	
\end{tikzpicture}
\caption{The flow of the reduced averaged system~\eqref{E: reduced system} on $\mathrm{closure}(\mathcal X)$. Solutions inside $\mathcal X$ are LRS Bianchi~III matter solutions ($\overline\Omega>0$). $\overline\Omega=0$ marks the vacuum boundary. Solutions on the $\overline\Omega=1-\overline\Sigma_+^2$ boundary are of LRS Bianchi type~I. The exact solutions associated with the equilibrium points are listed in \textsc{Table}~\ref{Tbl: 1}. We point out the analogy between this diagram and the corresponding diagram for perfect fluid solutions \cite[p~202, \sc Fig~9.1]{WainwrightEllis1997}.}
\label{F: qualitative flow}
\end{figure}
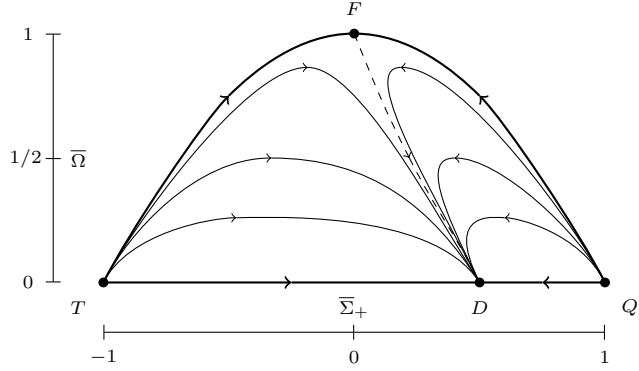 
The system~\eqref{E: reduced system} admits the four equilibrium points $T=(-1,0)$, $Q=(1,0)$, $D=(1/2,0)$ and $F=(0,1)$. To formulate the following lemma we define the LRS Bianchi~I state-space
\begin{align}\notag
\mathcal X_I := \Big\{ \big(\overline\Sigma_+,\overline\Omega\big)\in\R^2 \Big| \overline\Omega=1-\overline\Sigma_+^2, \overline\Omega>0 \Big\} .
\end{align}
\begin{lemma}\label{L: attractors}
$D$ is the future attractor for the flow of~\eqref{E: reduced system} in $\mathrm{closure}(\mathcal X)\setminus\mathcal{\overline X}_I$. $F$ is the future attractor for the flow in $\mathcal X_I$. There is a separatrix from $F$ to $D$. $T$ ($Q$) is the past attractor for solutions in $\mathrm{closure}(\mathcal X)$ with initial data left (right) of the separatrix.
\end{lemma}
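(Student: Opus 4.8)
The plan is to carry out a standard planar dynamical systems analysis of the reduced averaged system~\eqref{E: reduced system} on $\mathrm{closure}(\mathcal X)$, treating $H_*>0$ as a fixed constant (it only rescales time and does not affect the phase portrait). First I would record that the two boundary pieces $\overline\Omega=0$ and $\mathcal X_I=\{\overline\Omega=1-\overline\Sigma_+^2\}$ are invariant: on $\overline\Omega=0$ the second component of~\eqref{E: reduced system} vanishes, and along $\mathcal X_I$ one checks that $\tfrac{\mathrm d}{\mathrm dt}(\overline\Omega-1+\overline\Sigma_+^2)=0$ on that set, so no trajectory crosses either piece. Hence $\mathrm{closure}(\mathcal X)$ decomposes into the interior, the vacuum edge, the Bianchi~I arc, and the three corners; the four equilibria $T=(-1,0)$, $Q=(1,0)$, $D=(1/2,0)$, $F=(0,1)$ are exactly the common zeros of both components, as is verified by direct substitution.

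Next I would linearise at each equilibrium. At $D=(1/2,0)$: differentiating the first component in $\overline\Sigma_+$ gives a negative eigenvalue (the $-2(1-\overline\Sigma_+^2)\overline\Sigma_+$ term dominates near $\overline\Sigma_+=1/2$), while the $\overline\Omega$-equation linearises to $\dot{\overline\Omega}=H_*\overline\Omega(4\cdot\tfrac14-1)=0$, so $D$ is non-hyperbolic with one stable and one zero eigenvalue; this is precisely the case flagged for the centre manifold analysis of Subsection~\ref{SS: CM analysis of D}, and I would simply cite that analysis to conclude that the centre manifold lies along the vacuum edge direction and carries the flow into $D$, making $D$ locally attracting from within $\mathrm{closure}(\mathcal X)\setminus\overline{\mathcal X}_I$. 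At $F=(0,1)$: the linearisation has the $\mathcal X_I$-tangent direction with a negative eigenvalue and the transverse direction with a positive eigenvalue, so $F$ is a saddle whose stable manifold is the Bianchi~I arc and whose unstable manifold enters $\mathcal X$ — this unstable manifold is the claimed separatrix. At $T=(-1,0)$ and $Q=(1,0)$: both eigenvalues come out positive (these are the flat FLRW/Bianchi~I kinetic-dominated points), so they are local sources.

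To globalise, I would exhibit a strictly monotone function to rule out periodic orbits and recurrence: the energy density $\overline\Omega$ (or a suitable power/combination of $\overline\Omega$ and $1-\overline\Sigma_+^2$) is monotone along orbits in the relevant regions — $\dot{\overline\Omega}$ has the sign of $4\overline\Sigma_+^2-(1-\overline\Omega)$, and on the invariant arc $\mathcal X_I$ this reduces to $5\overline\Sigma_+^2-0$, showing strict decrease of $\overline\Omega$ toward $F$; inside $\mathcal X$ one argues that trajectories are funneled downward toward the vacuum edge and then along it to $D$. Combining the Poincaré–Bendixson theorem on the compact set $\mathrm{closure}(\mathcal X)$ with the absence of limit cycles (via the monotone function, or a Dulac function of the form $\overline\Omega^{-1}$ applied to the vector field divided by $H_*$) forces every forward orbit to limit on an equilibrium; the local pictures above then identify $D$ as the forward limit for all data in $\mathrm{closure}(\mathcal X)\setminus\overline{\mathcal X}_I$ and $F$ for data on $\mathcal X_I$. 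Running time backward, the same Poincaré–Bendixson argument gives past limits, and since only $T$ and $Q$ are sources, the separatrix (the unstable manifold of $F$) divides $\mathrm{closure}(\mathcal X)$ into the basin of $T$ on the left and of $Q$ on the right.

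The main obstacle is the non-hyperbolic point $D$: the naive linearisation is inconclusive there, and one genuinely needs the centre manifold computation to see that the flow on the one-dimensional centre manifold is directed \emph{into} $D$ rather than away from it. I would lean on Subsection~\ref{SS: CM analysis of D} and Appendix~\ref{A: CM analysis} for this step. A secondary technical point is making the "funneling toward the vacuum edge" argument for interior orbits fully rigorous — establishing that $\overline\Omega\to 0$ for all interior data not on the $F$–$D$ separatrix — which is most cleanly done by producing the Dulac/monotone function explicitly and invoking the monotonicity principle for dynamical systems (cf.\ \cite[\textsc{Sec}~4.3]{WainwrightEllis1997}).
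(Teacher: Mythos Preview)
Your overall strategy---linearise at the four equilibria, invoke the centre manifold analysis of Subsection~\ref{SS: CM analysis of D} for $D$, then globalise via Poincar\'e--Bendixson/Dulac on the compact invariant region---is sound and in fact more detailed than the paper's proof, which simply records the eigenvalue data, cites \textsc{Lemma}~\ref{L: stability of D}, and then appeals to the qualitative flow diagram of \textsc{Figure}~\ref{F: qualitative flow}. The explicit exclusion of periodic orbits via a Dulac function is a genuine addition.

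However, you have the geometry at $D$ backwards, and this is not cosmetic. The Jacobian of~\eqref{E: reduced system} at $D=(1/2,0)$ is
\[
H_*\begin{pmatrix} -3/2 & -3/4 \\ 0 & 0 \end{pmatrix},
\]
so the eigenvector to the eigenvalue $-3/2$ is $(1,0)^{\mathrm T}$, i.e.\ the \emph{vacuum edge} $\overline\Omega=0$ is the \emph{stable} manifold of $D$, while the eigenvector to the zero eigenvalue is $(-1/2,1)^{\mathrm T}$, so the centre manifold is tangent to this direction and points \emph{into} $\mathcal X$. Your claim that ``the centre manifold lies along the vacuum edge direction'' reverses these roles. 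Consequently your heuristic picture---``trajectories are funneled downward toward the vacuum edge and then along it to $D$''---is also backwards: interior orbits approach $D$ along the centre manifold from within $\mathcal X$ (at the slow polynomial rate computed in \textsc{Lemma}~\ref{L: SO(tau) asympt}), while the approach along the vacuum edge is the fast exponential one. This matters because the whole point of the centre manifold analysis is to resolve the degeneracy for interior orbits; if the centre manifold coincided with the invariant vacuum boundary there would be nothing to do. There is also a minor slip on $\mathcal X_I$: one gets $4\overline\Sigma_+^2-(1-\overline\Omega)=3\overline\Sigma_+^2$ there (not $5\overline\Sigma_+^2$), and $\overline\Omega$ \emph{increases} toward $F$ along the flow, which is consistent with $F$ being the future attractor on $\mathcal X_I$.
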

\begin{proof}
 A check of the eigenvalues of the linearisation of~\eqref{E: reduced system} at these points reveals that $T, Q$ and $F$ are hyperbolic. In particular $T$ and $Q$ are local sources while $F$ is a local saddle with its stable manifold tangential to $\overline\Sigma_+$ direction. $D$ is degenerate with a stable manifold tangent to $\overline\Sigma_+$ direction and a centre manifold tangent to the vector $[-1/2,1]^\mathrm{T}$. We prove in \textsc{Lemma}~\ref{L: stability of D} that the flow on that arm of the centre manifold which intersects the state-space is stable. Hence $D$ acts as a local sink for the flow in the closure of the state-space. The information gathered suffices to draw the qualitative flow diagram of \textsc{Figure}~\ref{F: qualitative flow} from which the statements of the lemma are apparent.
\end{proof}

\subsection{Error estimates for the first order approximation}\label{SS: 1st O errors}

For some initial value, let $\mathbf y(t)$ denote the solution of the first order approximation~\eqref{E: 1st O approx} and let $\mathbf z(t)$ denote the solution of the corresponding averaged system given by \eqref{E: reduced system} together with $\dot{\overline\varphi}=0$. Then from \textsc{Lemma}~\ref{L: averaging} we know that $\mathbf y(t) - \mathbf z(t) = \mathcal O(H_*)$ on time scales of $\mathcal O(H_*^{-1})$. Furthermore, by \textsc{Lemma}~\ref{L: attractors} we have a case of averaging with attraction and, in analogy to the discussion of \textsc{Subsection}~\ref{SS: Van der Pol}, we can extend the validity of this error estimate for all times for the $\Sigma_+$ and $\Omega$ components. Hence, denoting the respective 2-vectors by capital letters we have
\begin{align}\label{E: YZ estimate}
\mathbf Y(t) - \mathbf Z(t) &=
\begin{bmatrix} \Sigma_+(t) \\ \Omega(t) \end{bmatrix} -
\begin{bmatrix} \overline\Sigma_+(t) \\ \overline\Omega(t) \end{bmatrix} =
\mathcal O(H_*) .
\end{align}

\subsection{Closing the argument}\label{SS: closing argument}

The steps taken so far are as follows:
\begin{enumerate}[(a)]
\item We considered the LRS Bianchi~III Einstein-Klein-Gordon system in quasi-standard form~\eqref{E: quasi-standard form} with solution $[H(t),\mathbf x(t)]^\mathrm{T}$, and for which we showed in \textsc{Lemma}~\ref{L: H to 0} that $H(t)\to0$ as $t\to0$. \label{I: a}
\item We thus argued that for sufficiently large $t=t_*$ we should be able to truncate and work with the first order system~\eqref{E: 1st O approx} with solution $[\mathcal H(t), \mathbf y(t)]^\mathrm{T}$ in good approximation.\label{I: b}
\item Since the previous argument is heuristic, we made \textsc{Assumption}~\ref{AS: 1} under which it is valid. That is, we assumed that the error from truncation between $\mathbf x(t)$ and $\mathbf y(t)$ is $\mathcal O\big(H_*)$ for $t>t_*$. \label{I: c}
\item In this approximation $\dot{\mathcal H}=0$. Consequently its reduced system reads $\dot{\mathbf y}=H_*\mathbf f^1(\mathbf y, t)$ with $H_*=H(t_*)=\mathcal H(t_*)$ and has standard form~\eqref{E: standard form}. \label{I: d}
\item We applied \textsc{Lemma}~\ref{L: averaging} to obtain an $\mathcal O(H_*)$ error estimate between $\mathbf y(t)$ and the solution $\mathbf z(t)$ of the corresponding averaged system $[\eqref{E: reduced system}, \dot\varphi=0]$ on timescales of $\mathcal O(H_*^{-1})$. \label{I: e}
\item By \textsc{Lemma}~\ref{L: Eckhaus} we extended this estimate for the $\Sigma_+, \Omega$ components, which we denoted by the 2-vectors $\mathbf Y(t),\mathbf Z(t)$, to all times; cf~\eqref{E: YZ estimate}. \label{I: f}
\end{enumerate}
Hence, we have an $\mathcal O(H_*)$ truncation error estimate between $\mathbf x(t)$ and $\mathbf y(t)$ for $t>t_*$ from \textsc{Assumption}~\ref{AS: 1} in step~(\ref{I: c}). Furthermore we have an $\mathcal O(H_*)$ averaging error estimate between $\mathbf Y(t)$ and $\mathbf Z(t)$ for $t>t_*$ from step~(\ref{I: f}). In combination we have
\begin{align}\label{E: error estimate}
\mathbf X(t) - \mathbf Z(t) = \mathcal O(H_*) \quad\forall\,t>t_*
\end{align}
where $\mathbf X(t)$ denotes the 2-vector containing the $\Sigma_+$ and $\Omega$ components of $\mathbf x(t)$.

Now recall that $H_*$ denotes the value $H(t)$ at the time of truncation $t=t_*$ in~\eqref{I: c}. However because of \textsc{Lemma}~\ref{L: H to 0} in~\eqref{I: a}, we can choose $t_*$ arbitrarily large and $H_*$ arbitrarily small. In the continuum limit of infinitely many such truncation times this should yield~\eqref{E: conjecture} and thus \textsc{Conjecture}~\ref{CJ: 1}. We support this statement with the following line of arguments:

Let $t_*$ be an arbitrary truncation time in~\eqref{I: c} which yields the first order approximation \eqref{E: 1st O approx} with parameter $H_*=H(t_*)$. Consider now the infinite sequence $(t_k)_{k=0}^\infty$, with $t_k:=t_*+k\epsilon$ and $\epsilon>0$. Clearly we have $t_{k+1}>t_k$ for all $k$. Hence because of \textsc{Lemma}~\ref{L: H to 0} we can consider each of them as separate truncation times, yielding the infinite sequence of first order approximations with parameters $(H_k=H(t_k))_{k=0}^\infty$. For each we can then make \textsc{Assumption}~\ref{AS: 1} and perform steps~\eqref{I: c}--\eqref{I: f} to yield the infinite sequence of error estimates $(\mathbf X(t) - \mathbf Z(t) = \mathcal O(H_k))_{k=0}^\infty$; cf~\eqref{E: error estimate}. In the continuum limit $(t_k)_{k=0}^\infty\to[t_*,\infty)$ we should have $(H_k)_{k=0}^\infty\to H(t)$ and $(\mathbf X(t) - \mathbf Z(t) = \mathcal O(H_k))_{k=0}^\infty\to\eqref{E: conjecture}$, which is the statement of \textsc{Conjecture}~\ref{CJ: 1}.

\begin{remark}
For the scope of the present work, we refrain from giving rigorous proofs of \textsc{Assumption}~\ref{AS: 1} and of the continuum limit above. Because of this, the content of this section does not qualify as proof of \textsc{Conjecture}~\ref{CJ: 1} but rather as strong analytical support of it. \textsc{Section}~\ref{S: numerical support} contains further numerical support. For our analysis on the LRS Bianchi~III future asymptotics in \textsc{Section}~\ref{S: asymptotics} below we work under the premise that the conjecture holds.
\end{remark}

\section{The future asymptotics of LRS Bianchi~III Einstein-Klein-Gordon cosmologies}\label{S: asymptotics}

In the previous section we provided analytical support for \textsc{Conjecture}~\ref{CJ: 1}. Under the premise that it holds, in this section we derive the future asymptotics of LRS Bianchi~III Einstein-Klein-Gordon cosmologies. In \textsc{Subsection}~\ref{SS: exact solutions} we give the exact solutions associated with the equilibrium points of the averaged system. In particular we show that $D$ can be associated with the Bianchi~III form of flat spacetime. We then interpret the latter in the context of LRS Bianchi~III matter solutions in \textsc{Subsection}~\ref{SS: interpretation of eqp sols}, and argue that in order to determine the asymptotic form of the metric we also have to find the rate of approximation of solutions to $D$. We achieve this by means of a centre manifold analysis in \textsc{Subsection}~\ref{SS: CM analysis of D}; cf \textsc{Lemma}~\ref{L: SO(tau) asympt}. After that, we present our main results in \textsc{Subsection}~\ref{SS: BIII future asympt} which give the dominant future asymptotic behaviour of LRS Bianchi~III Einstein-Klein-Gordon solutions; cf theorems~\ref{T: X asympt} and~\ref{T: g asympt}. Finally, in \textsc{Subsection}~\ref{SS: vacuum case} we compare this behaviour to the corresponding future asymptotics in the vacuum case. We point out the quantitative difference and thus speak of the Klein-Gordon future asymptotics to be matter dominated. At the end of this section we also take this opportunity to shed some light on the question when the rate of approximation to the fixed point has to be taken into account and when not.

\subsection{Exact solutions associated with the equilibrium points}\label{SS: exact solutions}

Each equilibrium point of the averaged system~\eqref{E: reduced system} can be associated with an exact solution of the Einstein equations. This is true since for equilibrium solutions the values of $\Sigma_+$ and $\Omega$ are by definition constant and hence the evolution equations become particularly simple and can be solved exactly. The results are summarised in~\textsc{Table}~\ref{Tbl: 1}. In the following we lay out the calculation at the example of $D$.
\begin{table}
\begin{tabular}{ l | l | l | l }
  equil. point 		& $a(t)$		& $b(t)$ 		& solution 						\\ \hline
  $T$				& $c_1t$		& $c_2$		& Taub (flat LRS Kasner)			\\
  $Q$			& $c_1t^{-1/3}$	& $c_2t^{2/3}$	& non-flat LRS Kasner			\\
  $D$			& $c_1$		& $c_2t$		& Bianchi~III form of flat spacetime	\\
  $F$				& $c_1t^{2/3}$	& $c_2t^{2/3}$	& Einstein-de-Sitter (flat Friedman)
\end{tabular} \vspace{.4 cm}
\caption{The exact solutions associated with the equilibrium points of the reduced averaged system~\eqref{E: reduced system}. $a(t), b(t)$ denote the scale factors of the metric. $c_1,c_2\in\R$.}
\label{Tbl: 1}
\end{table}

Starting with the Raychaudhuri equation~\eqref{E: Raychaudhuri} and evaluating it at $D=(1/2,0)$ gives
\begin{align}\label{E: H at D}
\dot H(t) = -\frac{3}{2}H(t)^2 \quad\Longrightarrow\quad H(t) = \frac{2}{3t} \quad\text{for large $t$}.
\end{align}
From the evolution equations for the spatial metric components (cf for instance~\cite[(2.29)]{Rendall2008}) and the variable definitions of $H$ and $\Sigma_+$ given in \textsc{Subsection}~\ref{SS: basic system} we have
\begin{align}\label{E: a dot}
\dot a &= a H (1-2\Sigma_+) \quad\text{and}\quad \dot b = b H (1+\Sigma_+)
\end{align}
for the evolution equations for the scale factors of the LRS Bianchi~III metric~\eqref{E: BIII metric}. Evaluating these at $D$ yields $\dot a = 0$ and $\dot b = 3bH(t)/2$, and using~\eqref{E: H at D} we get
\begin{align}\label{E: D metric}
a(t)=c_1 \quad\text{and}\quad b(t)=c_2 t
\end{align}
where here and in the following $c_1,c_2\in\R_+$. This is the \emph{Bianchi~III form of flat spacetime}; cf~\cite[p~193, (9.7)]{WainwrightEllis1997}.

An analogous calculation yields the scale factors for the solutions associated with $T$ and $Q$. For $T$ we get $a(t) = c_1 t$ and $b(t) = c_2$, which is the \emph{Taub Kasner} solution; cf~\cite[\textsc{Sec}~6.2.2 and p~193, (9.6)]{WainwrightEllis1997}. It is spatially flat and of Bianchi type~I. For $Q$ we get $a(t) = c_1 t^{-1/3}$ and $b(t) = c_2 t^{2/3}$, which is the \emph{non-flat LRS Kasner} Bianchi~I solution; cf~\cite[\textsc{Sec}~6.2.2 and \textsc{Sec}~9.1.1 (2)]{WainwrightEllis1997}. Both are vacuum solutions in the sense that $\Omega=0$.

From the interpretation of $\Sigma_+$ as the shear parameter given in \textsc{Subsection}~\ref{SS: basic system} we can readily infer that $F$ corresponds to an isotropically expanding solution since it has vanishing shear. But we also calculate the asymptotic rate of expansion for large $t$. The Raychaudhuri equation~\eqref{E: Raychaudhuri} evaluated at $F$ reads
\begin{align}\label{E: H at F}
\dot H(t) = -3H(t)^2 \cos(t-\varphi(t)).
\end{align}
Recalling that in the averaged system $\dot{\overline\varphi}=0$ (cf \textsc{Subsection}~\ref{SS: averaged system}) we choose $\varphi(t)=\overline\varphi=0$ and solve~\eqref{E: H at F} with the initial condition $H(0)\to\infty$ to obtain\footnote{For a general $\overline\varphi$ the denominator contains a more complicated sum of $\sin$ and $\cos$ product terms. For large $t$ these are however also dominated by the $3t$ term. Hence, with respect to the asymptotic form of $H(t)$, our choice $\overline\varphi=0$ is without loss of generality. }
\begin{align}\label{E: H at F}
H(t) &= \frac{2}{3\sin(t)\cos(t)+3t}\approx\frac{2}{3t} \text{ for large }t.
\end{align}
Evaluating~\eqref{E: a dot} with~\eqref{E: H at F} and $\Sigma_+=0$ yields $a(t) = c_1 t^{2/3}$ and $b(t) = c_2 t^{2/3}$. Hence for large $t$ and $c_1=c_2$ the equilibrium point $F$ can be associated with the flat Friedman \emph{Einstein-de-Sitter} solution; cf~\cite[\sc Sec~9.1.1 (1)]{WainwrightEllis1997} with $\gamma=2$.

Concerning the matter field we can take the Klein-Gordon equation in its original form~\eqref{E: KG} and plug in the asymptotic behaviour of $H$, in order to find the asymptotic behaviour of $\phi$ for the respective solutions. For instance, for $D$ and $F$ this way we find that $\phi(t)\propto t^{-1}\sin t$ approximately for large $t$. Alternatively we could work with the two first order equations~\eqref{E: Oprime}--\eqref{E: phiprime} together with the transformation~\eqref{E: amplitude-phase} to obtain this result.

\subsection{Interpretation of $D$ and the Bianchi~III form of flat spacetime in the context of LRS Bianchi~III non-vacuum solutions}\label{SS: interpretation of eqp sols}

In \textsc{Lemma}~\ref{L: attractors} we showed that $D$ is the future attractor for solutions in $\mathcal X$ of the reduced averaged system~\eqref{E: reduced system}, ie for LRS Bianchi~III non-vacuum solutions. Assuming that \textsc{Conjecture}~\ref{CJ: 1} holds this implies that $\lim_{t\to\infty}\mathbf X(t)=D$ as well, ie that the $\Sigma_+$ and $\Omega$ coordinates of non-vacuum solutions of the original system~\eqref{E: quasi-standard form} are asymptotic to the same values.

In \textsc{Subsection}~\ref{SS: exact solutions} we associated  with $D$ the Bianchi~III form of flat spacetime which is characterised by the scale factors~\eqref{E: D metric}. We can however \emph{not} infer from this, that the scale factors for solutions in $\mathcal X$ approach~\eqref{E: D metric} in the limit. This is because $D$ fixes the limit form of~\eqref{E: a dot} and thus only the \emph{time derivative} of the scale factors. Hence the latter will coincide with that of the Bianchi~III form of flat spacetime in the limit, but not necessarily the scale factors themselves. We can however obtain the asymptotic form of the scale factors by taking into account the \emph{rate of approximation} of solutions in $\mathcal X$ to $D$.

As stated in the proof of \textsc{Lemma}~\ref{L: attractors}, $D$ is degenerate with a stable manifold tangent to $\overline\Sigma_+$ direction and a centre manifold tangent to the vector $[-1/2,1]^\mathrm{T}$. We can thus get the stability of $D$ for solutions in $\mathcal X$ as well as the approximation rate for these solutions from a \emph{centre manifold analysis} (cf \textsc{Appendix}~\ref{A: CM analysis}) as follows.

\subsection{Centre manifold analysis of $D$}\label{SS: CM analysis of D}

We start by relating~\eqref{E: reduced system} to a system of the form~\eqref{E: CM x dot}--\eqref{E: CM y dot}. Firstly, 
let us go back for a moment to step~\eqref{I: a} of the line of arguments of \textsc{Subsection}~\ref{SS: closing argument}, and hence to the original equation in quasi-standard form~\eqref{E: quasi-standard form}. We transform to a \emph{rescaled time} $\tau$ according to
\begin{align}\label{E: t to tau}
t&\mapsto\tau:=\int_{t_0}^t H(s)\,\mathrm ds \quad\Longrightarrow\quad
\partial_t=H\partial_\tau .
\end{align}
For convenience we denote the time transformed quantities by the same letters as before, but view them now as functions of $\tau$ instead of $t$. Hence we write the resulting system as
\begin{align}\label{E: guiding quasi-standard}
\begin{bmatrix} \partial_\tau H \\ \partial_\tau\mathbf x \end{bmatrix} &=
\mathbf F^1(\mathbf x, \tau) + H\,\mathbf F^{[2]}(\mathbf x, \tau)
\end{align}
and denote its solution by $[H(\tau), \mathbf x(\tau)]^\mathrm{T}$. Applying now the analogous steps to~\eqref{I: b}--\eqref{I: d} of \textsc{Subsection}~\ref{SS: closing argument}, we obtain the reduced averaged system
\begin{align}\label{E: guiding system}
\begin{bmatrix} \partial_\tau\overline\Sigma_+ \\ \partial_\tau\overline\Omega \end{bmatrix} &=
\begin{bmatrix}
-\Big(2\big(1-\overline\Sigma_+^2\big) - \frac{\overline\Omega}{2}\Big)\overline\Sigma_+ + 1-\overline\Sigma_+^2-\overline\Omega \\
\overline\Omega\Big( 4\overline\Sigma_+^2 - \big(1-\overline\Omega\big)\Big)
\end{bmatrix}
\end{align}
where again all quantities are functions of $\tau$.
\begin{remark}\label{R: guiding system}
\eqref{E: guiding system} is the \emph{guiding system} of~\eqref{E: reduced system}; cf \cite[p~30]{SandersEtAl2007}. Since their flows only differ by a time rescaling the corresponding flow plots are equivalent. Because of the analogous relation between~\eqref{E: quasi-standard form} and~\eqref{E: guiding quasi-standard} we also say that the latter is the guiding system of the former.
\end{remark}
Secondly, we transform $\overline\Sigma_+\mapsto S:=\overline\Sigma_+-1/2$ to shift the critical point $D$ to the origin. Finally, we diagonalise the system using the eigenvectors $[1,0]^\mathrm{T}, [-1/2,1]^\mathrm{T}$ of its linearisation at the origin: Defining $x,y$ by the linear transformation
\begin{align}\label{E: SO to xy}
\begin{bmatrix} S \\ \overline\Omega \end{bmatrix} &=
y \begin{bmatrix} 1 \\ 0 \end{bmatrix} +
x \begin{bmatrix} -1/2 \\ 1 \end{bmatrix} =
\begin{bmatrix} y-x/2 \\ x \end{bmatrix}
\end{align}
we obtain the new system
\begin{align}\label{E: xy system}
\begin{bmatrix} \partial_\tau x \\ \partial_\tau y \end{bmatrix} &=
\begin{bmatrix} Ax + f(x,y) \\ By + g(x,y) \end{bmatrix} , \quad
\end{align}
with $A=0, B=-3/2$,
\begin{align}
f(x,y) &= x^3-4x^2y+4xy^2-x^2+4xy , \label{E: f} \\
g(x,y) &= 2y^3-y^2x-\frac{1}{2}yx^2+\frac{1}{4}x^3+2y^2+\frac{1}{2}yx-\frac{1}{4}x^2 . \notag
\end{align}
\begin{remark}\label{R: topological equivalence}
Since \eqref{E: SO to xy} acts as a homeomorphism between~\eqref{E: guiding system} and~\eqref{E: xy system}, the respective flows are topologically equivalent. Because of \textsc{Remark}~\ref{R: guiding system} the same is true for the flow of~\eqref{E: reduced system}. Note also that the linearisation of~\eqref{E: xy system} at the origin has eigenvectors $[1,0]^\mathrm{T}, [0,1]^\mathrm{T}$ to the same eigenvalues $0, -3/2$ as the linearisation of~\eqref{E: guiding system} at $D$, (just in opposite order).
\end{remark}

\eqref{E: xy system} is of the form~\eqref{E: CM x dot}--\eqref{E: CM y dot}. Hence, from \textsc{Lemma}~\ref{T: CM existence} we know that there exists a local centre manifold $y=h(x)$ tangent to the origin. From the eigenvector to the eigenvalue 0 we know that it is tangent to the $x$-direction. As described at the end of \textsc{Appendix}~\ref{A: CM analysis}, we can approximate the centre manifold to arbitrary degree. For our purposes a lowest non-vanishing order approximation is sufficient. Choosing $\phi(x)=-\frac{1}{6}x^2$ and applying the function $M$ defined by~\eqref{E: M map}, we have $(M\phi)(x)=\mathcal O(x^3)$ as $x\to0$. By \textsc{Lemma}~\ref{T: CM approx} this implies that
\begin{align}\label{E: h approx}
h(x) &= -\frac{1}{6}x^2+\mathcal O(x^3) \quad\text{as}\quad x\to0.
\end{align}
The flow on the centre manifold is governed by~\eqref{E: flow on CM} which in our case reduces to $\partial_\tau u = f(u,h(u))$. Plugging in~\eqref{E: f} and~\eqref{E: h approx} yields
\begin{align}\label{E: u prime}
\partial_\tau u = -u^2+\mathcal O(u^3) \quad\text{as}\quad
u\to0,
\end{align}
and by \textsc{Lemma}~\ref{T: CM stability and flow} the solution to this equation governs the local stability of the origin of~\eqref{E: xy system} and the local flow around it.

\begin{lemma}\label{L: stability of D}
$D$ is locally asymptotically stable for the flow of~\eqref{E: reduced system} in $\mathcal X$.
\end{lemma}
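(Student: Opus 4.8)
The plan is to feed the centre manifold data assembled above into the reduction theorem of \textsc{Appendix}~\ref{A: CM analysis} and then to pin down which half of the centre manifold is the physically admissible one.

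First I would invoke \textsc{Lemma}~\ref{T: CM stability and flow}: the local stability of the origin of~\eqref{E: xy system}, and hence — via the homeomorphism~\eqref{E: SO to xy} together with the time rescaling of \textsc{Remarks}~\ref{R: guiding system} and~\ref{R: topological equivalence} — the local stability of $D$ for~\eqref{E: reduced system}, is governed by the one-dimensional reduced equation~\eqref{E: u prime}, $\partial_\tau u = -u^2 + \mathcal O(u^3)$. This equation is only \emph{semistable} at $u=0$: for $0<u$ small its right hand side is strictly negative, so $u(\tau)\to0$ monotonically, whereas for $u<0$ it is also negative and $|u|$ increases. Thus, as a critical point of the unrestricted planar system, $D$ is \emph{not} asymptotically stable; the most one can hope for is asymptotic stability relative to the side $u>0$, which is why the statement of the lemma is restricted to $\mathcal X$.

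Second I would check that the side $u>0$ is exactly the side occupied by the state space. Tracing $\mathcal X$ through the shift $S=\overline\Sigma_+-1/2$ and the linear map~\eqref{E: SO to xy}, one reads off $x=\overline\Omega$, so $\mathcal X$ lies in the half-plane $\{x>0\}$; near $D$ the remaining Hamiltonian-constraint inequality $\overline\Omega<1-\overline\Sigma_+^2$ is vacuous since $1-\overline\Sigma_+^2\approx 3/4>0$ there, so locally $\mathcal X$ is precisely $\{x>0\}$. Moreover the set $\{\overline\Omega=0\}$ — equivalently $\{x=0\}$, which (being invariant and one-dimensional, with the $\overline\Sigma_+$-dynamics contracting at rate $-3/2$ at $D$) is exactly the local stable manifold of $D$ — is invariant under~\eqref{E: reduced system} because the $\overline\Omega$-component of~\eqref{E: reduced system} carries an overall factor $\overline\Omega$. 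Hence $\{x>0\}$ is forward invariant, the arm of the centre manifold $y=h(x)$ meeting $\mathrm{closure}(\mathcal X)$ is the $x>0$ arm, and on it the reduced flow~\eqref{E: u prime} attracts to $D$.

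Finally I would assemble the two mechanisms: a solution of~\eqref{E: xy system} starting near the origin with $x>0$ remains in the forward-invariant region $\{x>0\}$, is drawn exponentially towards the centre manifold by the stable eigenvalue $B=-3/2$, and then slides along the $x>0$ arm into the origin by~\eqref{E: u prime}; since $x$ is decreasing throughout, the whole trajectory stays in a small neighbourhood, so the origin is asymptotically stable relative to $\{x>0\}$. Transporting this back through~\eqref{E: SO to xy} and the time rescaling then yields that $D$ is locally asymptotically stable for the flow of~\eqref{E: reduced system} in $\mathcal X$. The only delicate point — and the main obstacle — is precisely the semistability of~\eqref{E: u prime}: one must not overclaim, and the argument stands or falls on correctly matching the admissible arm $x=\overline\Omega>0$ of the centre manifold with the stable side $u>0$ of~\eqref{E: u prime}, using the invariance of $\{\overline\Omega=0\}$.
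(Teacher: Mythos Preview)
Your proposal is correct and follows essentially the same route as the paper's proof: reduce to the one-dimensional centre manifold equation~\eqref{E: u prime}, observe that it attracts for $u>0$, identify $x=\overline\Omega$ via~\eqref{E: SO to xy} so that the stable arm corresponds to $\mathcal X$, and transport back through \textsc{Remark}~\ref{R: topological equivalence}. Your version is in fact more careful than the paper's, since you explicitly flag the semistability of~\eqref{E: u prime} and use the invariance of $\{\overline\Omega=0\}$ to secure forward invariance of $\{x>0\}$, whereas the paper invokes \textsc{Lemma}~\ref{T: CM stability and flow}\eqref{TI: b} somewhat loosely given that the zero solution of~\eqref{E: u prime} is not stable in the two-sided sense required by that lemma's hypothesis.
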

\begin{proof}

The zero solution of \eqref{E: u prime} is attracting solutions with positive $u$. From \textsc{Lemma}~\ref{T: CM stability and flow}\eqref{TI: b} we then get that the zero solution of~\eqref{E: xy system} is locally asymptotically stable for the flow with $x\geq0$. The claim of the lemma then follows from the statement of \textsc{Remark}~\ref{R: topological equivalence}, noting that $x\geq0$ corresponds to $\overline\Omega\geq0$; cf~\eqref{E: SO to xy}.
\end{proof}
\begin{remark}
\textsc{Lemma}~\ref{L: stability of D} fills the blank in the proof of \textsc{Lemma}~\ref{L: attractors}.
\end{remark}
\begin{remark}
Since there is no attracting structure other than $D$ for the flow in $\mathcal X$, the statement of \textsc{Lemma}~\ref{L: stability of D} actually holds \emph{globally} for that flow, not only locally; cf the proof of \textsc{Lemma}~\ref{L: attractors}.
\end{remark}
\begin{remark}
$D$ is a \emph{saddle-node} and its local flow is topologically equivalent to the diagram in~\cite[p~149, \sc Fig~3]{Perko2001}.
\end{remark}
\begin{lemma}\label{L: SO(tau) asympt}
Solutions of~\eqref{E: guiding system} behave like
\begin{align}\label{E: SO(tau)}
\overline\Sigma_+(\tau)\approx\frac{1}{2}-\frac{1}{2\tau} \quad\text{and}\quad
\overline\Omega(\tau)\approx\frac{1}{\tau} \quad\text{for large $\tau$}.
\end{align}
\end{lemma}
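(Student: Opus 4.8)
The plan is to read the asymptotics straight off the centre manifold analysis already performed. Recall that the transformations leading to \eqref{E: xy system} place $D$ at the origin, with $x$ the centre direction ($A=0$) and $y$ the stable direction ($B=-3/2$), and that by \eqref{E: SO to xy} one has $\overline\Omega=x$ and $\overline\Sigma_+-\tfrac12=y-x/2$. The centre manifold is $y=h(x)=-\tfrac16 x^2+\mathcal O(x^3)$ by \eqref{E: h approx}, and the induced flow on it obeys $\partial_\tau u=-u^2+\mathcal O(u^3)$ by \eqref{E: u prime}.

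First I would solve the reduced equation $\partial_\tau u=-u^2+\mathcal O(u^3)$ asymptotically for small $u>0$. Setting $v:=1/u$ turns it into $\partial_\tau v=1+\mathcal O(1/v)$, so that $v(\tau)=\tau+\mathcal O(\log\tau)$ and hence $u(\tau)=1/\tau+\mathcal O(\log\tau/\tau^2)$; in particular $u(\tau)\approx 1/\tau$ for large $\tau$. (A comparison argument bounding $\partial_\tau u$ between $-u^2(1\pm Cu)$ for $u$ small gives the same conclusion and is what I would write out.)

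Next I would pass from the model equation to actual solutions. By \textsc{Lemma}~\ref{L: stability of D} and the global remark following it, every solution of \eqref{E: reduced system} --- equivalently of \eqref{E: xy system} with $x\ge 0$ --- eventually enters any prescribed neighbourhood of the origin, where the asymptotic-phase part of the centre manifold theorem, \textsc{Lemma}~\ref{T: CM stability and flow}\eqref{TI: b}, furnishes a solution $u(\tau)$ of the reduced equation with $x(\tau)=u(\tau)+\mathcal O(\mathrm e^{-\gamma\tau})$ and $y(\tau)=h(u(\tau))+\mathcal O(\mathrm e^{-\gamma\tau})$ for some $\gamma>0$. Since $\mathrm e^{-\gamma\tau}$ is negligible against $1/\tau$, this yields $x(\tau)\approx 1/\tau$ and $y(\tau)=\mathcal O(1/\tau^2)$.

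Finally I would back-substitute through \eqref{E: SO to xy} and undo the shift: $\overline\Omega(\tau)=x(\tau)\approx 1/\tau$, while $\overline\Sigma_+(\tau)=\tfrac12+y(\tau)-x(\tau)/2=\tfrac12-\tfrac1{2\tau}+\mathcal O(1/\tau^2)\approx\tfrac12-\tfrac1{2\tau}$, the $h$-contribution entering only at the subleading order $1/\tau^2$. Since \eqref{E: xy system} and \eqref{E: guiding system} are related by this linear change of variables together with the shift, the same asymptotics hold for the guiding system. The main obstacle is pinning down the coefficient of $1/\tau$, which requires controlling the $\mathcal O(u^3)$ remainder in the reduced equation --- the $v=1/u$ substitution does this cleanly --- whereas verifying that both the exponential approach to the centre manifold and the $\mathcal O(x^3)$ error in $h$ are genuinely subleading is routine.
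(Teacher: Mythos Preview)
Your proof is correct and follows essentially the same route as the paper: solve the reduced centre-manifold equation $\partial_\tau u=-u^2+\mathcal O(u^3)$ to get $u\approx 1/\tau$, invoke \textsc{Lemma}~\ref{T: CM stability and flow}\eqref{TI: b} to obtain $x(\tau)\approx u(\tau)$ and $y(\tau)\approx h(u(\tau))$, and then undo the transformation~\eqref{E: SO to xy}. You are in fact a bit more careful than the paper --- your $v=1/u$ substitution makes the leading coefficient rigorous, and you correctly note that $y(\tau)=\mathcal O(1/\tau^2)$ (the paper writes $y(\tau)\approx -\tfrac{1}{6\tau}$, which appears to be a slip for $-\tfrac{1}{6\tau^2}$; either way the $y$-contribution is subleading in $\overline\Sigma_+$).
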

\begin{proof}
Solving~\eqref{E: u prime} to lowest order yields $u(\tau)\approx 1/\tau$ for large $\tau$. By \textsc{Lemma}~\ref{T: CM stability and flow}\eqref{TI: b} and using~\eqref{E: h approx} we then have $x(\tau)\approx 1/\tau$ and $y(\tau)\approx -\frac{1}{6\tau}$ for large $\tau$. Translating back to the original variables using~\eqref{E: SO to xy} concludes the proof.
\end{proof}

\subsection{The future asymptotics of LRS Bianchi~III non-vacuum solutions}\label{SS: BIII future asympt}

Let $\mathbf X(\tau)=[\Sigma_+(\tau),\Omega(\tau)]^\mathrm T$ denote the respective components of the solution to the guiding system~\eqref{E: guiding quasi-standard} of~\eqref{E: quasi-standard form}. Let $\mathbf Z(\tau)=[\overline\Sigma_+(\tau),\overline\Omega(\tau)]^\mathrm T$ denote the solution to the reduced part of the corresponding averaged guiding system~\eqref{E: guiding system}. \textsc{Lemma}~\ref{L: SO(tau) asympt} gives the behaviour of $\mathbf Z(\tau)$ for large $\tau$. Assuming that \textsc{Conjecture}~\ref{CJ: 1} holds, this gives also an estimate
\begin{align}\label{E: X1}
\mathbf X(\tau) = \mathbf Z(\tau)+\mathcal O(H(\tau)) \quad\text{for large $\tau$}.
\end{align}
To quantify this estimate we need to determine the dominant behaviour of $H$ for large times.
\begin{lemma}\label{L: H for large times}
Assuming that \textsc{Conjecture}~\ref{CJ: 1} holds, $H(t) \approx \frac{2}{3t}$ for large $t$.
\end{lemma}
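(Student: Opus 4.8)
The plan is to combine \textsc{Lemma}~\ref{L: SO(tau) asympt} (which gives $\overline\Sigma_+(\tau)\to 1/2$, $\overline\Omega(\tau)\to 0$ as $\tau\to\infty$) with \textsc{Proposition}~\ref{P: error limit} / \textsc{Conjecture}~\ref{CJ: 1} to conclude that the \emph{full} solution also satisfies $\Sigma_+(\tau)\to 1/2$ and $\Omega(\tau)\to 0$, and then feed this into the Raychaudhuri equation~\eqref{E: Raychaudhuri} to extract the dominant behaviour of $H$. The point is that once $\Sigma_+\to 1/2$ and $\Omega\to 0$, the expression~\eqref{E: r to O} for $q$ gives $q=2\Sigma_+^2+\Omega(3\cos^2(t-\varphi)-1)\to 1/2$ in an averaged sense, so that $\dot H = -(1+q)H^2$ behaves like $\dot H \approx -\tfrac32 H^2$, exactly as in the computation~\eqref{E: H at D} for the equilibrium point $D$; integrating this yields $H(t)\approx \tfrac{2}{3t}$.

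First I would note that, by \textsc{Conjecture}~\ref{CJ: 1} together with \textsc{Lemma}~\ref{L: H to 0} (equivalently \textsc{Proposition}~\ref{P: error limit}), $\mathbf X(t)-\mathbf Z(t)\to 0$, so from \textsc{Lemma}~\ref{L: SO(tau) asympt} and the fact that $\tau(t)\to\infty$ (since $H>0$, so $\tau=\int_{t_0}^t H\,\mathrm ds$ is strictly increasing, and $\tau\to\infty$ because $H$ is not integrable near $t=\infty$ — indeed if it were, $\tau$ would converge and then~\eqref{E: guiding system} would force $\mathbf Z$ to a non-equilibrium limit, a contradiction) we get $\Sigma_+(t)\to 1/2$ and $\Omega(t)\to 0$. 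Next I would plug $q=2\Sigma_+^2+\Omega(3\cos^2(t-\varphi)-1)$ into~\eqref{E: Raychaudhuri} and write $1+q = \tfrac32 + \big(2\Sigma_+^2-\tfrac12\big) + \Omega(3\cos^2(t-\varphi)-1)$. The middle term tends to $0$; the last term is bounded by $2\Omega\to 0$ (and on average over a period its $\cos^2$ part contributes $\tfrac12\cdot 3\Omega - \Omega = \tfrac12\Omega\to 0$ as well). Hence $\dot H = -\big(\tfrac32 + \eta(t)\big)H^2$ with $\eta(t)\to 0$. Dividing by $H^2$ and integrating gives $\tfrac1{H(t)} = \tfrac1{H(t_1)} + \int_{t_1}^t\big(\tfrac32+\eta(s)\big)\,\mathrm ds = \tfrac32 t + o(t)$ as $t\to\infty$, which is precisely $H(t)\approx \tfrac{2}{3t}$.

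The main obstacle is the rigorous handling of the oscillatory term $\Omega(t)\,3\cos^2(t-\varphi(t))$ in $q$: pointwise it is only $O(\Omega)$, which already suffices since $\Omega\to 0$, so in fact no averaging argument is strictly needed here — the crude bound $|3\cos^2(t-\varphi)-1|\le 2$ gives $|\eta(t)|\le |2\Sigma_+^2-\tfrac12| + 2\Omega \to 0$, and the Cesàro/integral argument above then goes through directly. The only genuine subtlety is making the "$\int_{t_1}^t \eta(s)\,\mathrm ds = o(t)$" step precise, which is immediate from $\eta(t)\to 0$ (for every $\varepsilon>0$ choose $t_1$ with $|\eta|<\varepsilon$ on $[t_1,\infty)$, so the integral is at most $\varepsilon(t-t_1)$ plus a constant). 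Everything else — the convergence $\Sigma_+\to 1/2$, $\Omega\to 0$, and $\tau\to\infty$ — is inherited from \textsc{Lemma}~\ref{L: SO(tau) asympt}, \textsc{Lemma}~\ref{L: H to 0} and \textsc{Conjecture}~\ref{CJ: 1}, all of which we are permitted to assume.
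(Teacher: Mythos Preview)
Your argument is correct and coincides with the paper's second (alternative) proof: use \textsc{Conjecture}~\ref{CJ: 1} together with \textsc{Lemma}~\ref{L: SO(tau) asympt} to get $(\Sigma_+,\Omega)\to(1/2,0)$, feed this into~\eqref{E: Raychaudhuri} to obtain $\dot H=-H^2\bigl(\tfrac32+o(1)\bigr)$, and integrate. You in fact spell out the integration step $\tfrac{1}{H(t)}=\tfrac32 t+o(t)$ more carefully than the paper, which simply refers to~\cite{Rendall2002} at that point. The paper also offers a shorter first argument: since $\mathbf X(t)\to D$ and the exact solution at $D$ has $H=\tfrac{2}{3t}$ by~\eqref{E: H at D}, the Hubble scalar of any solution approaching $D$ must inherit this leading behaviour.

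One small point to clean up: your justification that $\tau(t)\to\infty$ is circular as written. You argue that if $H$ were integrable then $\mathbf Z$ would converge to a non-equilibrium point, ``a contradiction'' --- but nothing so far forces $\mathbf Z(t)$ to converge to $D$ as $t\to\infty$; \textsc{Lemma}~\ref{L: attractors} and \textsc{Lemma}~\ref{L: SO(tau) asympt} are statements about $\tau\to\infty$, which is precisely what you want to establish. The direct fix is immediate from~\eqref{E: r to O}: since $\Sigma_+^2<1$ and $0<\Omega<1$ one has $1+q<5$, hence $\dot H>-5H^2$, so $H(t)>\bigl(H(0)^{-1}+5t\bigr)^{-1}$, which is not integrable. (The paper's proof leaves this step implicit as well.)
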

\begin{proof}
We know from \textsc{Lemma}~\ref{L: SO(tau) asympt} (or \textsc{Lemma}~\ref{L: attractors}) that $\lim_{t\to\infty}\mathbf Z(t)=D$. Provided that \textsc{Conjecture}~\ref{CJ: 1} holds we then know from \textsc{Proposition}~\ref{P: error limit} that also $\lim_{t\to\infty}\mathbf X(t)=D$. In \textsc{Subsection}~\ref{SS: exact solutions} we established that the exact solution associated with $D$ is the Bianchi~III form of flat spacetime, and we calculated its Hubble scalar in~\eqref{E: H at D}. The Hubble scalar of a solution which approaches $D$ will thus approach the Hubble scalar of the Bianchi~III form of flat spacetime. Since $\mathbf X$ approaches $D$ we thus can infer that the corresponding Hubble scalar has the form $H(t)=\frac{2}{3t}(1+o(1))$. The lowest order approximation concludes the proof.

Alternatively we can follow from \textsc{Lemma}~\ref{L: SO(tau) asympt} and \textsc{Conjecture}~\ref{CJ: 1} that
\begin{align*}
\mathbf X(\tau)=\begin{bmatrix}
\frac{1}{2}-\frac{1}{2\tau}+\mathcal O(H(\tau)) \\
\frac{1}{\tau}+\mathcal O(H(\tau))
\end{bmatrix} \quad\text{for large $\tau$}.
\end{align*}
Plugging this into the Raychaudhuri equation~\eqref{E: Raychaudhuri} we have $\dot H = -H^2(\frac{3}{2}+o(1))$ from which one can show that $H(t)=\frac{2}{3t}(1+o(1))$ follows; cf also the analogous case in~\cite[p~1290]{Rendall2002}.
\end{proof}
From this result we immediately get the asymptotic form of the time transformation~\eqref{E: t to tau}.
\begin{lemma}\label{L: t to tau asympt}
Assuming that \textsc{Conjecture}~\ref{CJ: 1} holds, $\tau(t)\approx\frac{2}{3}\ln t$ for large $t$.
\end{lemma}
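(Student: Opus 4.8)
The plan is to obtain $\tau(t)$ simply by integrating the asymptotic form of $H$ established in the immediately preceding \textsc{Lemma}~\ref{L: H for large times}, using the definition~\eqref{E: t to tau} of the time transformation, namely $\tau(t)=\int_{t_0}^t H(s)\,\mathrm ds$. Since \textsc{Lemma}~\ref{L: H for large times} gives $H(t)=\frac{2}{3t}(1+o(1))$ for large $t$ (under the premise that \textsc{Conjecture}~\ref{CJ: 1} holds), the integrand behaves like $\frac{2}{3s}$ as $s\to\infty$, and integrating the leading term yields $\frac{2}{3}\ln t$ plus lower-order contributions.

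First I would write $\tau(t)=\int_{t_0}^{t} H(s)\,\mathrm ds$ and split the integrand as $H(s)=\frac{2}{3s}+\big(H(s)-\frac{2}{3s}\big)$. The first piece integrates to $\frac{2}{3}(\ln t-\ln t_0)$. For the second piece I would note that $H(s)-\frac{2}{3s}=o(1/s)$ as $s\to\infty$ by \textsc{Lemma}~\ref{L: H for large times}; hence its integral from $t_0$ to $t$ is $o(\ln t)$ (more precisely, for any $\varepsilon>0$ there is an $s_\varepsilon$ beyond which $|H(s)-\frac{2}{3s}|<\varepsilon/s$, so the tail contributes at most $\varepsilon\ln(t/s_\varepsilon)$, while the finite part over $[t_0,s_\varepsilon]$ is a bounded constant). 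Combining, $\tau(t)=\frac{2}{3}\ln t+o(\ln t)$, which is the claimed leading-order behaviour $\tau(t)\approx\frac{2}{3}\ln t$ for large $t$. The constant $-\frac{2}{3}\ln t_0$ is absorbed into the lower-order terms, consistent with the informal $\approx$ used throughout this section.

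There is essentially no obstacle here: the statement is an immediate corollary of \textsc{Lemma}~\ref{L: H for large times} together with the definition~\eqref{E: t to tau}, and the only mild subtlety is the elementary observation that integrating an $o(1/s)$ error term against $\mathrm ds$ produces an $o(\ln t)$ error, which does not spoil the leading logarithm. In keeping with the level of rigour and the notation ($\approx$, dominant behaviour) used elsewhere in \textsc{Section}~\ref{S: asymptotics}, I would present the proof in one or two sentences: invoke \textsc{Lemma}~\ref{L: H for large times}, substitute into~\eqref{E: t to tau}, and integrate the leading term $\frac{2}{3t}$ to get $\frac{2}{3}\ln t$.
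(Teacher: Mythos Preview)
Your proposal is correct and follows exactly the paper's approach: the paper's proof is the single line ``Use \textsc{Lemma}~\ref{L: H for large times} in~\eqref{E: t to tau}.'' Your version merely spells out the elementary integration and error bookkeeping that the paper leaves implicit.
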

\begin{proof}
Use \textsc{Lemma}~\ref{L: H for large times} in~\eqref{E: t to tau}.
\end{proof}

Combining now lemmas~\ref{L: H for large times} and~\ref{L: t to tau asympt} with~\eqref{E: X1} we have
\begin{align}\label{E: X2}
\mathbf X(t) = \mathbf Z(t)+\mathcal O(t^{-1}) =
\begin{bmatrix}
\frac{1}{2}-\frac{3}{4\ln t} \\
\frac{3}{2\ln t}
\end{bmatrix}
+ \mathcal O(t^{-1}) \quad\text{for large $\tau$}.
\end{align}

We are now ready to formulate our main results.

\begin{theorem}\label{T: X asympt}
Assuming that \textsc{Conjecture}~\ref{CJ: 1} holds, the shear parameter $\Sigma_+$ and the rescaled energy density $\Omega$ of LRS Bianchi~III Einstein-Klein-Gordon solutions (cf \textsc{Subsection}~\ref{SS: basic system}) behave like
\begin{align}\notag
\Sigma_+(t)\approx\frac{1}{2}-\frac{3}{4\ln t} \quad\text{and}\quad
\Omega(t)\approx\frac{3}{2\ln t} \quad\text{for large t}.
\end{align}
\end{theorem}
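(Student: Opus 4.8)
The plan is to assemble the chain of asymptotic estimates already assembled in this section, culminating in~\eqref{E: X2}, and then observe that the error term there is subdominant. The starting point is \textsc{Lemma}~\ref{L: SO(tau) asympt}, which gives the leading behaviour of the averaged guiding solution $\mathbf Z(\tau)=[\overline\Sigma_+(\tau),\overline\Omega(\tau)]^{\mathrm T}$ in the rescaled time $\tau$, namely $\overline\Sigma_+(\tau)\approx\tfrac12-\tfrac1{2\tau}$ and $\overline\Omega(\tau)\approx\tfrac1\tau$. First I would convert these into statements in the original time $t$ using \textsc{Lemma}~\ref{L: t to tau asympt}, which gives $\tau(t)\approx\tfrac23\ln t$; substituting yields $\overline\Sigma_+(t)\approx\tfrac12-\tfrac3{4\ln t}$ and $\overline\Omega(t)\approx\tfrac3{2\ln t}$, i.e.\ the entries of the vector on the right-hand side of~\eqref{E: X2}.

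Next I would invoke \textsc{Conjecture}~\ref{CJ: 1} (assumed in the hypotheses) in the form~\eqref{E: X1}, $\mathbf X(\tau)=\mathbf Z(\tau)+\mathcal O\big(H(\tau)\big)$ for large $\tau$, where $\mathbf X$ collects the $\Sigma_+,\Omega$ components of the guiding-system solution~\eqref{E: guiding quasi-standard}. Quantifying the error via \textsc{Lemma}~\ref{L: H for large times} ($H(t)\approx\tfrac2{3t}$) turns this into $\mathbf X(t)=\mathbf Z(t)+\mathcal O(t^{-1})$, which is precisely~\eqref{E: X2}. Since the flow of~\eqref{E: guiding quasi-standard} differs from that of~\eqref{E: quasi-standard form} only by the monotone time reparametrisation~\eqref{E: t to tau}, the $\Sigma_+$ and $\Omega$ coordinates of a given solution are the same functions of the common parameter, so this estimate transfers verbatim to the LRS Bianchi~III Einstein--Klein--Gordon solutions of \textsc{Subsection}~\ref{SS: basic system}.

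The conclusion then follows by noting that $\mathcal O(t^{-1})=o\!\big(1/\ln t\big)$, so the correction in~\eqref{E: X2} is negligible compared with the $1/\ln t$ terms and the displayed expressions give the dominant late-time behaviour of $\Sigma_+(t)$ and $\Omega(t)$.

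I expect the only real subtlety --- not a genuine obstacle --- to be the bookkeeping of the several ``$\approx$'' signs: one must make precise that each of lemmas~\ref{L: SO(tau) asympt}, \ref{L: H for large times} and~\ref{L: t to tau asympt} in fact carries a multiplicative factor $1+o(1)$ (equivalently, that the neglected terms are of strictly lower order), and then check that composing with the substitution $\tau\mapsto\tfrac23\ln t$ preserves this, which it does because $\ln t\to\infty$. All the hard analytic work --- the centre manifold computation behind \textsc{Lemma}~\ref{L: SO(tau) asympt}, the averaging arguments, and the Hubble decay --- is already packaged in the cited lemmas and in \textsc{Conjecture}~\ref{CJ: 1}, so the proof of the theorem itself is a short assembly of~\eqref{E: X2}.
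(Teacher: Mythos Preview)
Your proposal is correct and follows essentially the same route as the paper: assemble \textsc{Lemmas}~\ref{L: SO(tau) asympt}, \ref{L: H for large times}, \ref{L: t to tau asympt} with \textsc{Conjecture}~\ref{CJ: 1} into~\eqref{E: X2}, then observe that the $\mathcal O(t^{-1})$ error is subdominant to the $(\ln t)^{-1}$ terms. The paper's proof is in fact just that final observation, since~\eqref{E: X2} is derived in the text immediately preceding the theorem.
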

\begin{proof}
This is an immediate consequence of~\eqref{E: X2} since $t^{-1}$ falls off faster than $(\ln t)^{-1}$.
\end{proof}
\begin{remark}
The behaviour of $\Sigma_+,\Omega$ found in \textsc{Theorem}~\ref{T: X asympt} is identical to what was found in~\cite[p~1290]{Rendall2002} in the LRS Bianchi~III Einstein-Vlasov case.
\end{remark}

What is left to do is to translate the above result to an asymptotic behaviour of the metric in terms of metric time. We do this with the following theorem, to which we also add the asymptotic behaviour of the Klein-Gordon field.
\begin{theorem}\label{T: g asympt}
Assuming that \textsc{Conjecture}~\ref{CJ: 1} holds, the scale factors of the LRS Bianchi~III Einstein-Klein-Gordon metric~\eqref{E: BIII metric} behave like
\begin{align}\label{E: g asympt}
a(t)\approx c_1\ln t \quad\text{and}\quad
b(t)\approx c_2 t \quad\text{for large $t$},
\end{align}
and the Klein-Gordon field behaves like
\begin{align}\notag
\phi(t)\approx \frac{c_3}{t}\sin t \quad\text{for large $t$},
\end{align}
with positive constants $c_1, c_2, c_3$.
\end{theorem}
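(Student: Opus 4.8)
The plan is to obtain the asymptotics of the metric and of the scalar field by substituting the already-established asymptotics of $\Sigma_+$, $\Omega$ and $H$ into the exact first-order equations~\eqref{E: a dot} for the scale factors and into the Klein--Gordon equation~\eqref{E: KG}, and then integrating.

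First I would treat the scale factors. By~\eqref{E: a dot} their logarithmic derivatives are $\dot a/a=H(1-2\Sigma_+)$ and $\dot b/b=H(1+\Sigma_+)$. Inserting $\Sigma_+(t)=\tfrac{1}{2}-\tfrac{3}{4\ln t}+o(1/\ln t)$ from \textsc{Theorem}~\ref{T: X asympt} and $H(t)=\tfrac{2}{3t}\bigl(1+o(1)\bigr)$ from \textsc{Lemma}~\ref{L: H for large times}, the leading terms in $1-2\Sigma_+$ cancel and leave $1-2\Sigma_+\sim\tfrac{3}{2\ln t}$, so $\dot a/a\sim\tfrac{1}{t\ln t}$ and integration gives $\ln a(t)\sim\ln\ln t$, i.e.\ $a(t)\approx c_1\ln t$. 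For $b$ one has $1+\Sigma_+\to\tfrac{3}{2}$, hence $\dot b/b\sim\tfrac{1}{t}$ and $\ln b(t)\sim\ln t$, i.e.\ $b(t)\approx c_2 t$. The constants $c_1,c_2$ are positive since $a,b$ are positive scale factors growing at these rates.

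Next, for the Klein--Gordon field I would use~\eqref{E: KG} with $H\approx\tfrac{2}{3t}$ substituted, which gives the limiting equation $\ddot\phi+\tfrac{2}{t}\dot\phi+\phi=0$. The substitution $\phi=\psi/t$ turns this into the harmonic oscillator $\ddot\psi+\psi=0$ with general solution $\psi=A\sin(t+\delta)$, whence $\phi(t)\approx\tfrac{c_3}{t}\sin t$ for large $t$ (the phase being immaterial), and $c_3>0$ because the non-vacuum condition $\Omega>0$ forces a non-trivial oscillation amplitude. Equivalently one can read this off from the amplitude--phase variables~\eqref{E: amplitude-phase}: by~\eqref{E: r to O} the amplitude equals $r=H\sqrt{6\Omega}$, which decays like $t^{-1}$ up to slowly varying factors, while $\varphi\to\overline\varphi=\mathrm{const}$.

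The main obstacle, and really the only delicate point, is controlling the propagation of errors. One must verify that the $\mathcal{O}(H)=\mathcal{O}(1/t)$ error granted by \textsc{Conjecture}~\ref{CJ: 1}, together with the subleading $(\ln t)^{-1}$-type corrections in the expansions of $\Sigma_+$ and $H$, feeds only bounded or slowly varying contributions into $\int\dot a/a\,\mathrm{d}t$, $\int\dot b/b\,\mathrm{d}t$ and into the slowly decaying damping term of~\eqref{E: KG}. Concretely, the conjecture's error contributes $\mathcal{O}(1/t)\cdot H=\mathcal{O}(1/t^2)$ to each logarithmic derivative, which integrates to a bounded quantity that is absorbed into the $c_i$; the remaining logarithmic corrections (for instance a $(\ln t)^{-1/2}$ factor that strictly multiplies $b$ and the amplitude of $\phi$) do not change the dominant growth rate, and this is precisely the sense in which the relations $a\approx c_1\ln t$, $b\approx c_2 t$ and $\phi\approx\tfrac{c_3}{t}\sin t$ are to be understood. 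Once this accounting is carried out the theorem follows.
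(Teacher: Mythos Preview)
Your proof is correct and follows essentially the same approach as the paper: substitute the established asymptotics of $\Sigma_+$ and $H$ into the evolution equations~\eqref{E: a dot} and integrate, and handle $\phi$ via the Klein--Gordon equation with the asymptotic $H$ inserted. The only cosmetic difference is that the paper performs the integration for $a$ and $b$ in the rescaled time $\tau$ of~\eqref{E: t to tau} and then converts back via \textsc{Lemma}~\ref{L: t to tau asympt}, whereas you work directly in $t$; your explicit substitution $\phi=\psi/t$ for the scalar field is in fact a bit cleaner than the paper's treatment.
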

\begin{proof}
The statement for the matter field follows from the Klein-Gordon equation~\eqref{E: KG} together with the asymptotic behaviour of $H$ from \textsc{Lemma}~\ref{L: H for large times}. Alternatively one can use \textsc{Lemma}~\ref{L: H for large times} and \textsc{Theorem}~\ref{T: X asympt} together with the left equation of~\eqref{E: r to O} to find that $r\approx \widetilde c_3 t^{-1}(\ln t)^{-1/2}\approx c_3 t^{-1}$ for large $t$, with positive constants $\widetilde c_3, c_3$. The transformation~\eqref{E: amplitude-phase} then yields the result if we note that from~\eqref{E: phiprime} and \textsc{Lemma}~\ref{L: H for large times} we know that the phase $\varphi$ goes to a constant.)

For the scale factors, transforming $t\mapsto\tau$ according to~\eqref{E: t to tau} their evolution equations read $\partial_\tau a(\tau) = a(\tau)(1-2\Sigma_+(\tau))$ and $\partial_\tau b(\tau) = b(\tau)(1+\Sigma_+(\tau))$. Plugging in the result of \textsc{Theorem}~\ref{T: X asympt} we obtain
\begin{align*}
\partial_\tau a(\tau) &\approx a(\tau)/\tau & &\Longrightarrow &
a(\tau)&\approx c_1\tau \\
\partial_\tau b(\tau) &\approx b(\tau)\left(\frac{3}{2}-\frac{1}{2\tau}\right) & &\Longrightarrow &
b(\tau) &\approx \widetilde{c_2}\frac{\mathrm e^{3\tau/2}}{\sqrt\tau}
\approx \widehat{c_2}\,\mathrm e^{3\tau/2}
\end{align*}
for large $\tau$, with positive constants $c_1,\widetilde{c_2},\widehat{c_2}$. In the last approximation for $b(\tau)$ we acknowledged that for large $\tau$ the $\tau$ dependence is dominated by the exponential such that the square-root can be absorbed into the integration constant in good approximation. Transforming to metric time using~\textsc{Lemma}~\ref{L: t to tau asympt} concludes the proof.
\end{proof}

\subsection{Comparison to vacuum solutions; matter dominance}\label{SS: vacuum case}

The future asymptotics for LRS Bianchi~III vacuum cosmologies is known to be the Bianchi~III form of flat spacetime (cf eg~\cite[\textsc Sec~10.5]{Rendall2008} or \cite{WainwrightEllis1997}) and we wrote down the scale factors of this solution in~\eqref{E: D metric} and \textsc{Table}~\ref{Tbl: 1}. Comparing this to our result~\eqref{E: g asympt} of \textsc{Theorem}~\ref{T: g asympt} we see that the two solutions agree with respect to the scale factor $b(t)$, which is expanding at a linear rate with $t$ in both cases. However they disagree in the functional form of the scale factor $a(t)$, which is constant in the vacuum case, but forever expanding at a logarithmic rate with $t$ in the Klein-Gordon case. Because of this \emph{qualitative difference} we say that the Klein-Gordon future is \emph{matter dominated}.

We take this opportunity to emphasize a circumstance, which albeit having been implicitly taken into account in the literature, to our knowledge has not been pointed out explicitly so far. We discussed in~\textsc{Subsection}~\ref{SS: exact solutions} that one can identify the equilibrium points of the reduced dynamical system with a corresponding exact solution to the Einstein equations. Furthermore, in \textsc{Subsection}~\ref{SS: interpretation of eqp sols} we pointed out that if a solution to the reduced dynamical system converges to an equilibrium point, then this does \emph{not} necessarily imply that the associated metric converges to the exact solution associated with that point, but that the convergence rate to the point may have to be taken into account as well.

The question now is when the convergence rates are important and when not. We illustrate this at the example at hand. In our case, both the LRS Bianchi~III vacuum and Klein-Gordon solutions converge to the equilibrium point $D$, as can be seen from \textsc{Figure}~\ref{F: qualitative flow} and our analysis above. $D$ is associated with the Bianchi~III form of flat spacetime. However only the vacuum solutions are asymptotic to the latter. For the Klein-Gordon solutions we had to take into account the convergence rates and arrived at the qualitatively different result~\eqref{E: g asympt} in \textsc{Theorem}~\ref{T: g asympt}.

To illustrate why the convergence rates do not make a difference in the vacuum case, we go over it in the framework of our setup. From \textsc{Figure}~\ref{F: qualitative flow} and the proof of \textsc{Lemma}~\ref{L: attractors} we know that generic LRS Bianchi~III vacuum solutions lie on the stable manifold of $D$, and that the eigenvalue associated with that attraction is $-3/2$. By~\cite[\textsc Sec~2.9, Thm~1]{Perko2001} we then have $|\Sigma_+-1/2|\leq \kappa\exp(-3\tau/2)$ with $\kappa>0$. Transforming again~\eqref{E: a dot} to rescaled time, we have $a'=a(1-2\Sigma_+)$. Plugging into this the convergence estimate we get $a'=\mp 2\kappa a\exp(-3\tau/2)$, where the sign depends on whether we consider solutions with $\Sigma_+\gtrless1/2$. Integration yields
\begin{align}\notag
a(\tau) &= c_1 \exp\left( \mp\frac{4\kappa}{3}\mathrm{e}^{-3\tau/2} \right) \quad
\stackrel{\tau\approx\frac{2}{3}\ln t}{\Longrightarrow} \quad
a(t)\approx c_1 \exp\left(\mp\frac{4\kappa}{3 t}\right) \approx c_1 \quad\text{for large $t$}.
\end{align}
An analogous calculation can be done for $b$.

What we have shown is, that even if one takes into account the rate of approximation for the vacuum solutions, this does not alter the result from that associated with the equilibrium point solution. The reason is that the exponential convergence rate along the stable manifold resulted in the exponent in the equation for $a(t)$ to decay as $t^{-1}$ and thus to vanish for large $t$.

The general feature appears to be that the exponential convergence rate to hyperbolic sinks, (or on the stable manifolds of hyperbolic saddles or degenerate equilibrium points), results in the convergence rates to not alter the result from the exact solution associated with the fixed point. In cases of degenerate equilibrium points, the generally slower rate of approximation however can make a qualitative difference. In the vast majority of cases in the literature on spatially homogenous cosmology so far, the relevant equilibrium points have been hyperbolic. In these cases the natural association of the equilibrium point solution with the future asymptotic metric works. With degenerate equilibrium points one however has to be more careful as shows the case of~\cite{Rendall2002} as well as the case of the present work.

\section{Numerical support for the results}\label{S: numerical support}

In this section we test the agreement of our analytical results and assumptions against numerics. We give the numerical setup and equations in \textsc{Subsection}~\ref{SS: sim setup} and elaborate on the chosen time variable and simulation range. In subsections~\ref{SS: flow}--\ref{SS: future} we then present the simulation outcome in the form of flow plots (\textsc{Figure}~\ref{F: flow}) and function plots in normal (\textsc{Figure}~\ref{F: func}) and logarithmic scales (\textsc{Figure}~\ref{F: log}). The plots convincingly demonstrate agreement with our analytical results and assumptions. Finally, in \textsc{Subsection}~\ref{SS: past} we briefly investigate and comment on the possible past asymptotics at the hand of \textsc{Figure}~\ref{F: flow past}.

\subsection{Simulation setup}\label{SS: sim setup}

We simulate two systems:

Firstly we solve the guiding system~\eqref{E: guiding quasi-standard} of~\eqref{E: quasi-standard form} and simultaneously also integrate the differential form of~\eqref{E: t to tau} to capture the transformation between $t$ and $\tau$. Hence we can recover the solution of the original system~\eqref{E: quasi-standard form}, ie the quantities as functions of $t$. For a better overview we write out the full system explicitly:
\begin{align}
\partial_\tau H &= H[-(1+q)] &
\partial_\tau\Sigma_+ &= -(2-q)\Sigma_++1-\Sigma_+^2-\Omega \label{E: num S prime} \\
\partial_\tau t &= 1/H &
\partial_\tau\Omega &= 2\Omega \big(1+q-3\cos(t-\varphi)^2\big) \label{E: num O prime} \\
& &
\partial_\tau\varphi &= -3\sin(t-\varphi)\cos(t-\varphi) \label{E: num phi prime}
\end{align}
with $q$ given by~\eqref{E: r to O}. Note that as we work in the guiding system all quantities including $t$ are seen as functions of $\tau$.

Secondly, to better illustrate our results and the utility of the averaging method, we also solve an averaged system of~\eqref{E: num S prime}--\eqref{E: num phi prime} of the form $[\partial_\tau \overline H, \partial_\tau\overline{\mathbf x}]^\mathrm{T} = \overline{\mathbf F}^1(\overline{\mathbf x}) + \overline H\,\overline{\mathbf F}^{[2]}(\overline{\mathbf x})$ together with $\partial_\tau \overline t=1/\overline H$.\footnote{Note that here we understand $\mathbf F^1,\mathbf F^{[2]}$ to be averaged over a period of $2\pi$ in $t$, not in $\tau$.} The averaging reduces to substituting
\begin{align*}
q \mapsto \overline q:=2\overline\Sigma_+^2+\overline\Omega/2 \qquad
\cos(t-\varphi)^2 \mapsto 1/2 \qquad
\sin(t-\varphi)\cos(t-\varphi) \mapsto0
\end{align*}
in~\eqref{E: num S prime}--\eqref{E: num phi prime}. Written out explicitly the resulting system reads:
\begin{align}
\partial_\tau\overline H &= \overline H[-(1+\overline q)] &
\partial_\tau\overline\Sigma_+ &= -\Big(2\big(1-\overline\Sigma_+^2\big) - \tfrac{\overline\Omega}{2}\Big)\overline\Sigma_+ + 1-\overline\Sigma_+^2-\overline\Omega \label{E: num S prime av} \\
\partial_\tau\overline t &= 1/\overline H &
\partial_\tau\overline\Omega &= \overline\Omega\Big( 4\overline\Sigma_+^2 - \big(1-\overline\Omega\big)\Big) \label{E: num O prime av} \\
& &
\partial_\tau\overline\varphi &= 0 \label{E: num phi prime av}
\end{align}
Note the equivalence between~\eqref{E: guiding system} and the second equations of~\eqref{E: num S prime av},\eqref{E: num O prime av}.

As initial data we pick the six sets given in \textsc{Table}~\ref{Tbl: initial data} and refer to the corresponding solutions by the Roman numerals in the first column.
\begin{table}
\begin{tabular}{ l | l | l | l | l | l }
  solution	& $H(0)$	& $\Sigma_+(0)$	& $\Omega(0)$	& $\varphi(0)$	& $t(0)$	\\ \hline
  i		& 0.1		& 0.1				& 0.9			& 0			& 0			\\
  ii		& 0.1		& 0.4				& 0.1			& 0			& 0			\\
  iii		& 0.1		& 0.6				& 0.1			& 0			& 0			\\
  iv		& 0.02	& 0.48			& 0.02		& 0			& 0			\\
  v		& 0.1		& 0.48			& 0.02		& 0			& 0			\\
  vi		& 0.1		& 0.5				& 0.01		& 0			& 0			\\
\end{tabular} \vspace{.4 cm}
\caption{The six initial data sets for our simulations of the systems~\eqref{E: num S prime}--\eqref{E: num phi prime} and~\eqref{E: num S prime av}--\eqref{E: num phi prime av}. We essentially vary $(\Sigma_+(0), \Omega(0))$, ie the location of the initial data point projected into $\mathcal X$, except for iv where we also vary $H(0)$.}
\label{Tbl: initial data}
\end{table}
These data satisfy the Hamiltonian constraint~\eqref{E: hc} and thus, projected into the $\Sigma_+,\Omega$ plane, lie in $\mathcal X$. We integrate the averaged system~\eqref{E: num S prime av}--\eqref{E: num phi prime av} in the interval $\tau\in[-40,40]$ and plot the corresponding solutions in orange in the figures of this section. The solutions to the full system~\eqref{E: num S prime}--\eqref{E: num phi prime} are plotted in blue, and since we are primarily interested in the future asymptotics, we integrate these from $\tau=0$ to $\tau\approx10$ for all figures except \textsc{Figure}~\ref{F: flow past} where we also integrate them to $\tau=-40$. The cutoff at $\tau\approx10$ has the following reason: As will become clear from the discussion below, the solutions to the full system oscillate with exponentially increasing frequency; cf also \textsc{Lemma}~\ref{L: t to tau asympt}.\footnote{One could instead work with the variables as functions of $t$ in which case the frequency would stay constant. However, as can be seen from our plots, this benefit is traded against the necessity to integrate to very large times in order to reach the asymptotic regime.} Hence the numerical grid has to be refined to smaller and smaller step sizes. The cutoff occurs before roundoff errors become an issue, which is at $\tau\approx10$ for all solutions. Despite this limitation, our numerical results give a clear demonstration of the validity of our analytical results.

\subsection{Flow plots}\label{SS: flow}

\textsc{Figure}~\ref{F: flow 3D} shows a flow plot of the solutions in the $H,\Sigma_+,\Omega$ domain.
\begin{figure}
  \includegraphics[clip=true,trim=0 0 0 0,scale=.5]{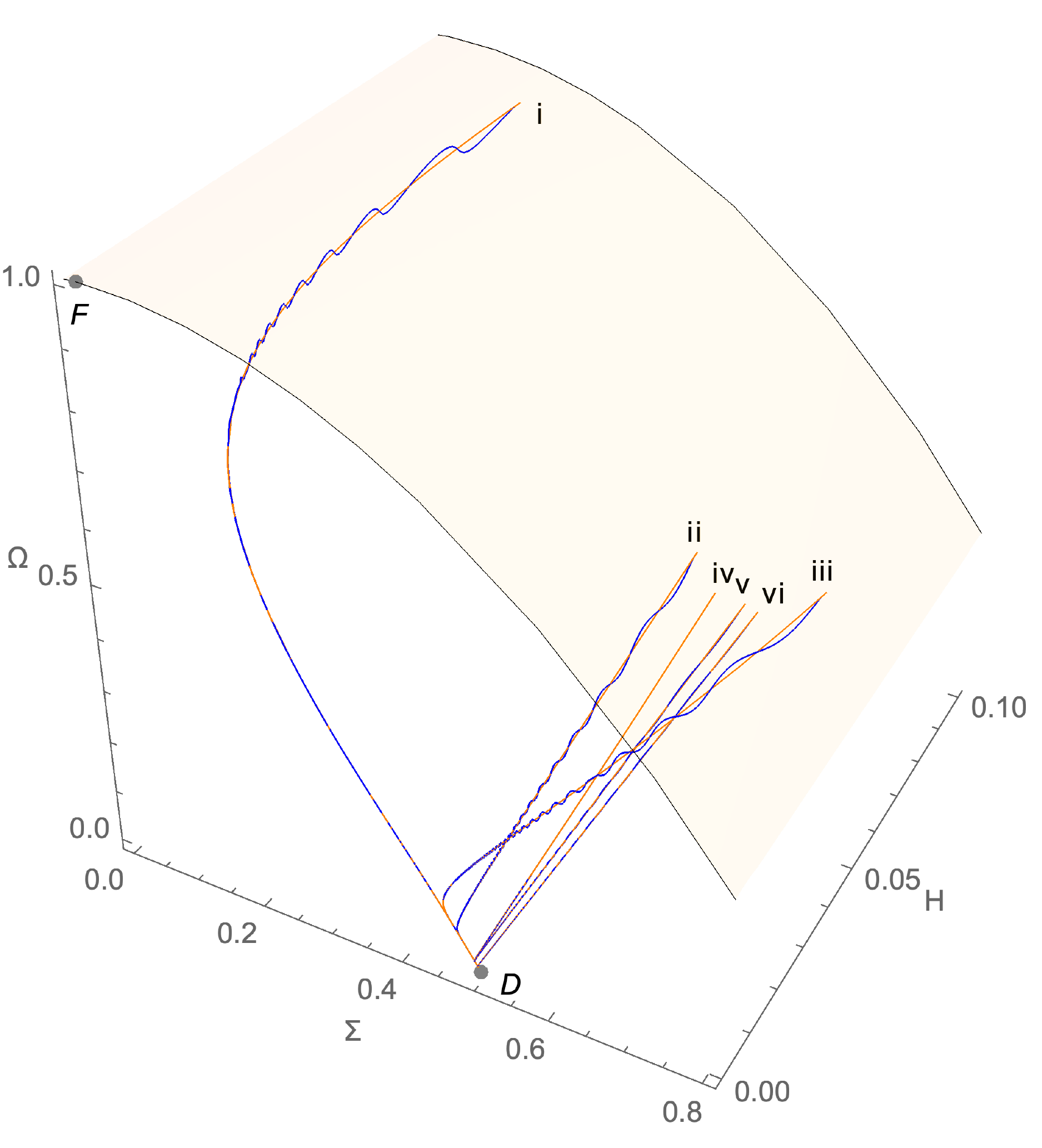}
  \caption{Flow plot of the solutions i--vi to both the full system~\eqref{E: num S prime}--\eqref{E: num phi prime} (blue) and the averaged system~\eqref{E: num S prime av}--\eqref{E: num phi prime av} (orange) in the $H,\Sigma_+,\Omega$ domain.}
  \label{F: flow 3D}
\end{figure}
The solutions to the full system oscillate around the solutions to the averaged system with a decaying envelope, and the plot demonstrates how all solutions converge to the point $D$ located at $(\Sigma_+,\Omega)=(1/2,0)$ in the $H=0$ plane. One can also see that, figuratively speaking, this convergence occurs in three steps as follows: Firstly the solutions approach the $H=0$ plane, then the centre manifold of $D$ and finally they approach $D$ itself along its centre manifold. On the one hand this demonstrates that the truncation from the full system to the first order approximation performed in \textsc{Subsection}~\ref{SS: 1st O approx}, as a consequence of which $H$ was approximated by a constant, was legitimate, and hence directly supports \textsc{Assumption}~\ref{AS: 1}. On the other hand, it directly supports \textsc{Lemma}~\ref{L: stability of D}.

\textsc{Figure}~\ref{F: flow} shows the projections of the trajectories into $H=0$.
\begin{figure}
  \begin{subfigure}{\textwidth}
    \centering
    \includegraphics[clip=true,trim=0 0 0 0,scale=.5]{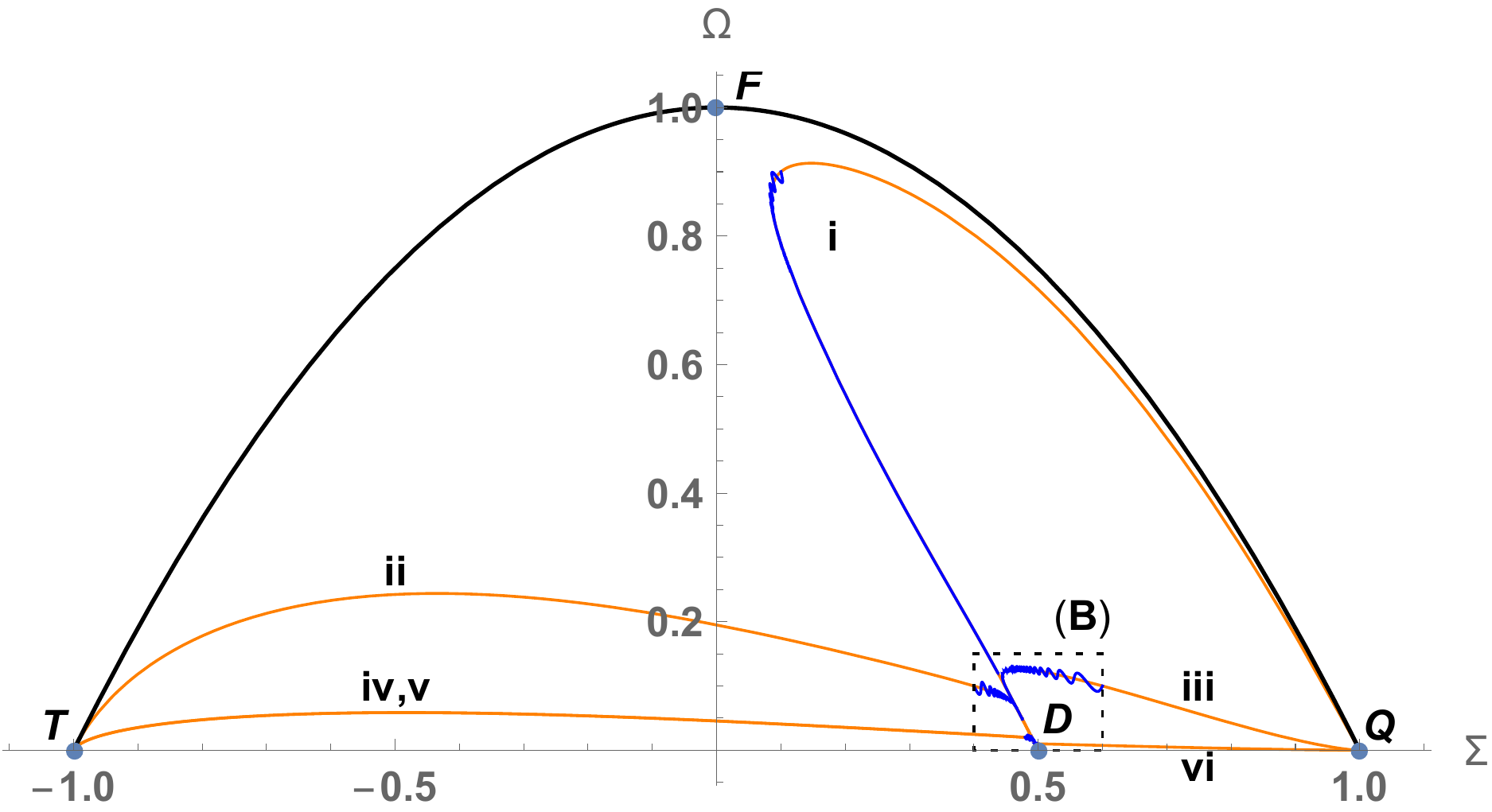}
    \caption{}
    \label{F: flow A}
  \end{subfigure}%
  \quad
  \\
   \begin{subfigure}{.48\textwidth}
    \centering
    \includegraphics[clip=true,trim=0 0 0 0,scale=.31]{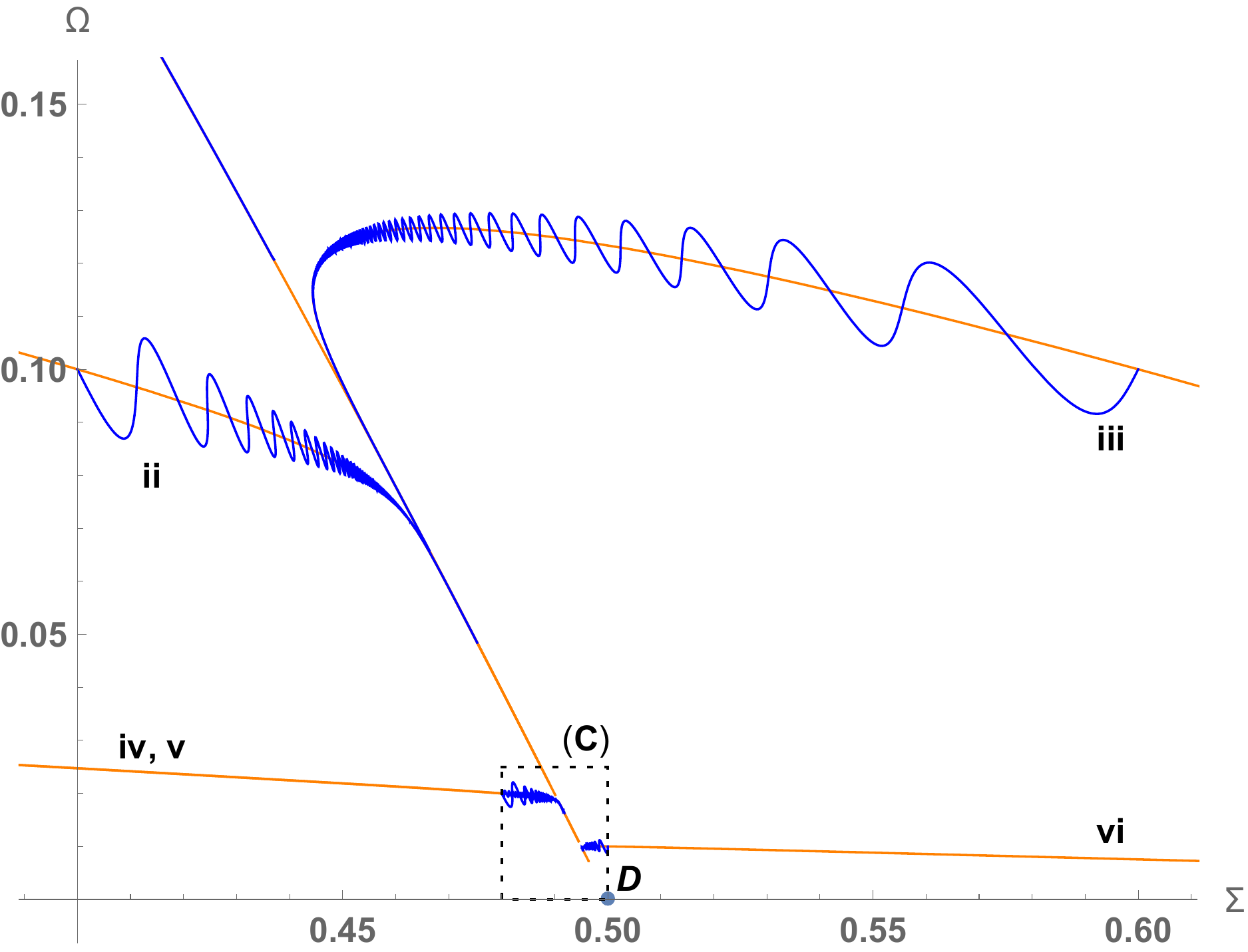}
    \caption{}
    \label{F: flow B}
  \end{subfigure}
  \quad
  \begin{subfigure}{.48\textwidth}
    \centering
    \includegraphics[clip=true,trim=0 0 0 0,scale=.31]{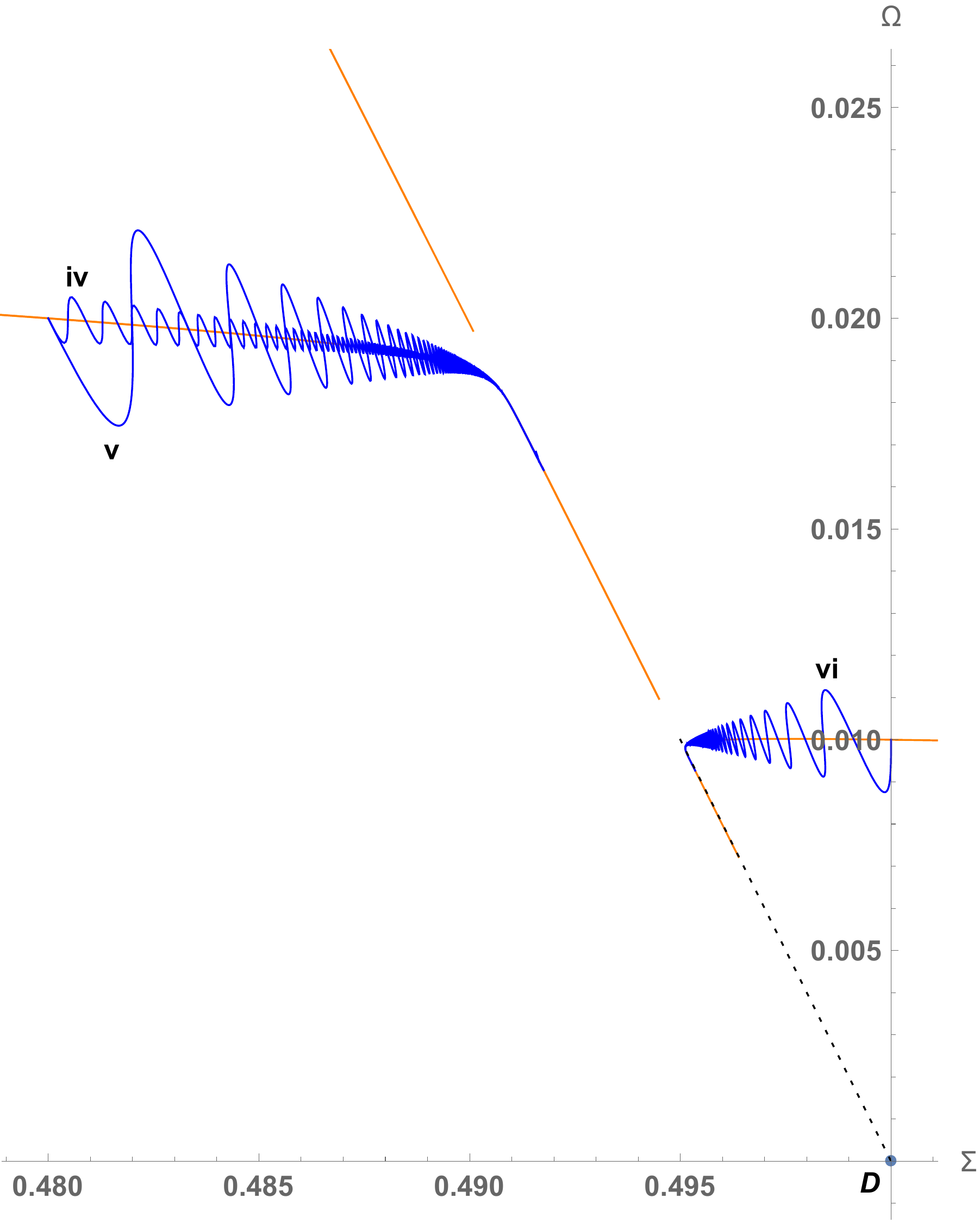}
    \caption{}
    \label{F: flow C}
  \end{subfigure}
\caption{Flow plots of the solutions i--vi to both the full system~\eqref{E: num S prime}--\eqref{E: num phi prime} (blue) and the averaged system~\eqref{E: num S prime av}--\eqref{E: num phi prime av} (orange) as projections into $H=0$. \textsc{Figure}~\textsc{\ref{F: flow A}} depicts the full state-space $\mathcal X$ while figures~\textsc{\ref{F: flow B}} and \textsc{\ref{F: flow C}} show zoomed in regions. The respective zoom windows are drawn into figures~\textsc{\ref{F: flow A}} and~\textsc{\ref{F: flow B}}.}
\label{F: flow}
\end{figure}
\textsc{Figure}~\textsc{\ref{F: flow A}} depicts the full state-space $\mathcal X$ while figures~\textsc{\ref{F: flow B}} and~\textsc{\ref{F: flow C}} show two zoomed in regions. The respective zoom boxes are drawn into figures~\textsc{\ref{F: flow A}} and~\textsc{\ref{F: flow B}}. The trajectories of the averaged solutions resemble the analytically obtained qualitative flow diagram in~\textsc{Figure}~\ref{F: qualitative flow}, which gives direct support of \textsc{Lemma}~\ref{L: attractors}. The trajectories of the solutions to the full system oscillate around the averaged trajectories with a decaying envelope. The initial data sets of solutions iv and v only differ by their values for $H(0)$; cf \textsc{Table}~\ref{Tbl: initial data}. The respective solutions are best viewed in \textsc{Figure}~\textsc{\ref{F: flow C}}, and the difference in initial amplitudes is consistent with $\mathcal O(0.1)$ and $\mathcal O(0.02)$, and thus with \textsc{Conjecture}~\ref{CJ: 1}. Again these plots nicely demonstrate how all solutions approach $D$ along its local centre manifold tangent to its eigenvector $[-1/2,1]^\mathrm{T}$ (dotted line in \textsc{Figure}~\textsc{\ref{F: flow B}}) and thus support \textsc{Lemma}~\ref{L: stability of D}. That this approach is slow in comparison to the speeds of the trajectories away from the local centre manifold can be seen as follows: While the endpoints of the trajectories to the full solutions are reached at $\tau\approx10$ the trajectories of the averaged solutions do not advance much further until they terminate at $\tau=40$, and this advance even becomes smaller and smaller the closer the initial data is to~$D$.

\subsection{Function plots}\label{SS: func}

\textsc{Figure}~\ref{F: func} restricts to solution~i and shows $H, \Sigma_+, \Omega$ and $t$ plotted against $\tau$ together with the respective averaged quantities, as well as separate error plots between the full and averaged solutions.
\begin{figure}
\begin{subfigure}{.48\textwidth}
    \centering
    \includegraphics[clip=true,trim=0 0 0 0,scale=.31]{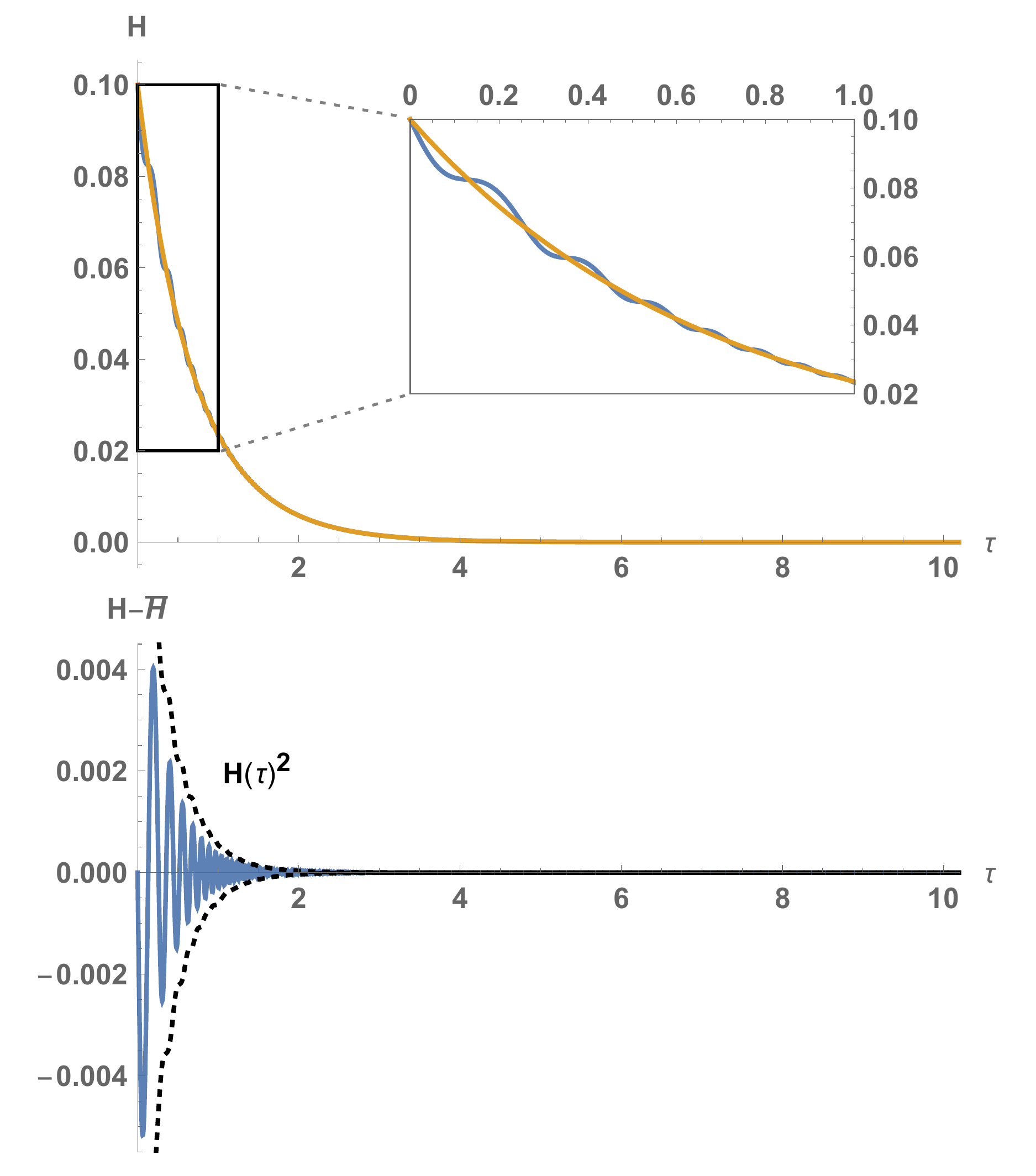}
    \caption{}
    \label{F: func H}
  \end{subfigure}
  \quad
  \begin{subfigure}{.48\textwidth}
    \centering
    \includegraphics[clip=true,trim=0 0 0 0,scale=.31]{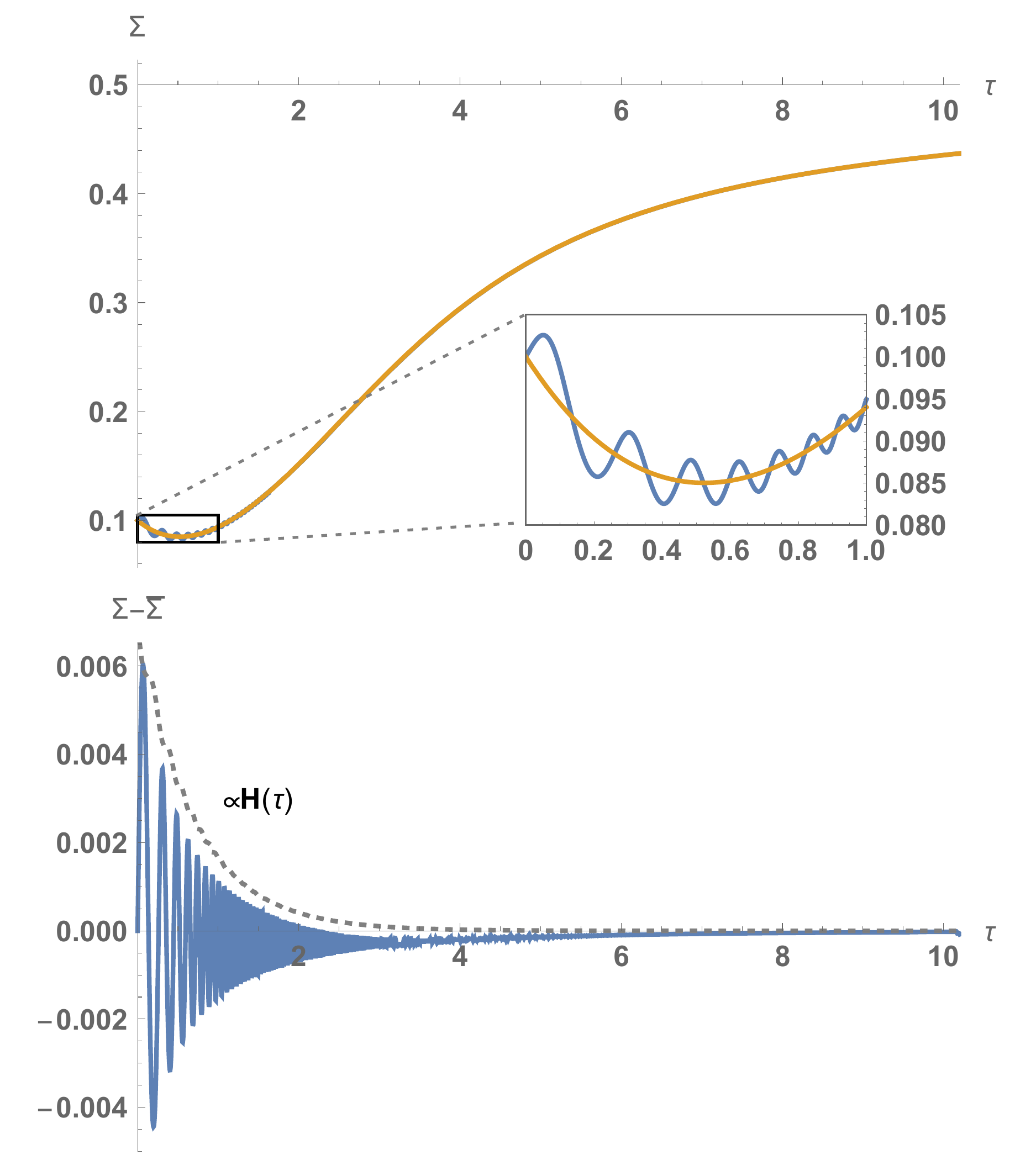}
    \caption{}
    \label{F: func S}
  \end{subfigure}
  \\
  \begin{subfigure}{.48\textwidth}
    \centering
    \includegraphics[clip=true,trim=0 0 0 0,scale=.31]{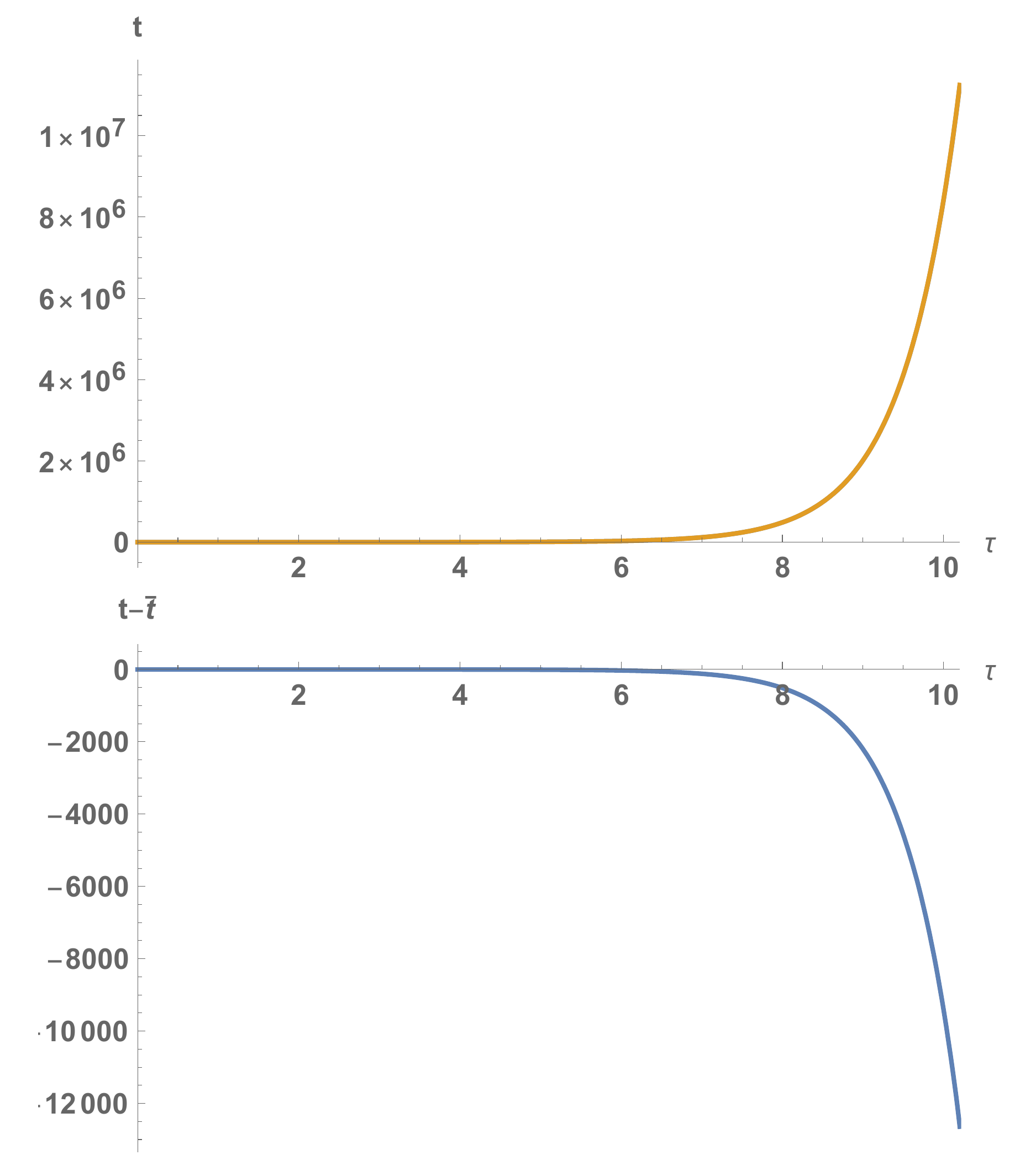}
    \caption{}
    \label{F: func t}
  \end{subfigure}
  \quad
  \begin{subfigure}{.48\textwidth}
    \centering
    \includegraphics[clip=true,trim=0 0 0 0,scale=.31]{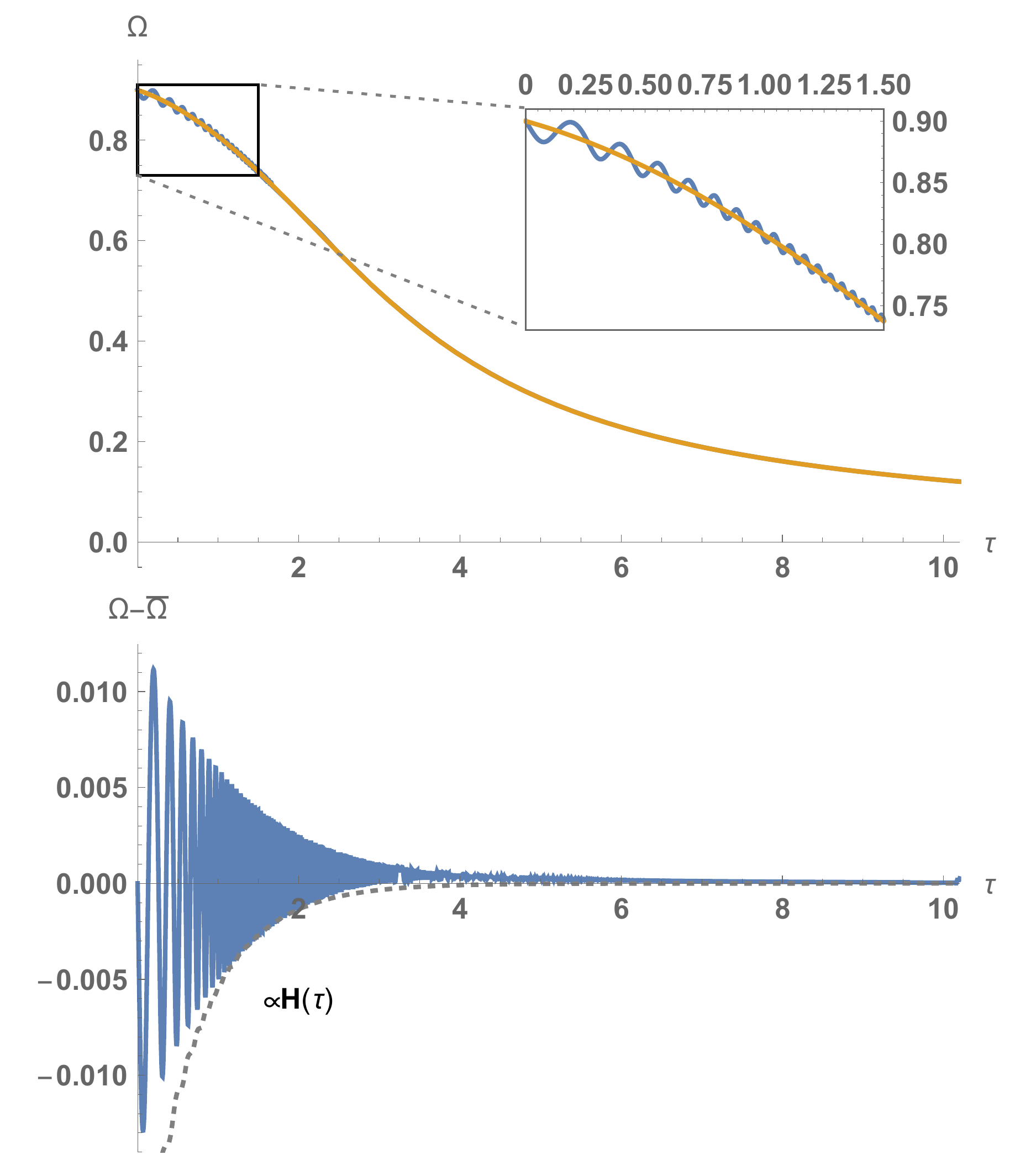}
    \caption{}
    \label{F: func O}
  \end{subfigure}
\caption{Plots of $H, \Sigma_+, \Omega$ and $t$ against $\tau$ (blue) for the solution i together with the respective averaged quantities (orange) and separate error plots between them (bottom plots in each subfigure).}
\label{F: func}
\end{figure}
The function plots of figures~\textsc{\ref{F: func H}}, \textsc{\ref{F: func S}} and \textsc{\ref{F: func O}} demonstrate nicely the oscillations of the full quantities on top of the averaged ones. The corresponding error plots demonstrate the fast convergence of the full to the averaged quantities. In particular, we see that the decay of the errors of $\Sigma_+, \Omega$ indeed follow an envelope proportional to $H(\tau)$ as conjectured in \textsc{Conjecture}~\ref{CJ: 1}.\footnote{There is a small offset of the mean errors of $\Sigma_+, \Omega$ from zero which converges away slower than $\mathcal O(H(\tau))$. In our numerical experiments we found that the magnitude of this error depends on the chosen initial phase $\varphi(0)$. We leave a deeper analysis of this phenomenon as an open problem.} The error of $H$ on the other hand appears to decay with $H(\tau)^2$, which we would await intuitively from the fact that the Raychaudhuri equation (left equation of~\eqref{E: num S prime}) is of second order in $H(\tau)$ and vanishes to first order.

Comparing \textsc{Figure}~\textsc{\ref{F: func H}} to figures~\textsc{\ref{F: func S}} and~\textsc{\ref{F: func O}} we have again confirmation of the faster decay of $H$ to~$0$ in comparison to the convergence rate of $(\Sigma_+, \Omega)$ to $D$. In other words, we have again confirmation that the truncation to the first order approximation is legitimate; cf~\textsc{Subsection}~\ref{SS: 1st O approx}.

The function plot in \textsc{Figure}~\textsc{\ref{F: func t}} shows the exponential growth of $t$ with $\tau$, supporting \textsc{Lemma}~\ref{L: t to tau asympt}. It is apparent from this plot that in a simulation in $t$ instead of $\tau$ one would have to evolve to very large times in order to enter the asymptotic regime, which we already pointed out in \textsc{Subsection}~\ref{SS: sim setup}. This is demonstrated even better in the following subsection.

\subsection{Future asymptotics}\label{SS: future}

In \textsc{Figure}~\ref{F: log} we show plots of the same quantities as in \textsc{Figure}~\ref{F: func}, however in logarithmic (figures~\textsc{\ref{F: log H}}, \textsc{\ref{F: log t}}) and double logarithmic scalings (figures~\textsc{\ref{F: log S}}, \textsc{\ref{F: log O}}), together with the analytically determined asymptotic behaviours (dashed).
\begin{figure}
\begin{subfigure}{.48\textwidth}
    \centering
    \includegraphics[clip=true,trim=0 0 0 0,scale=.31]{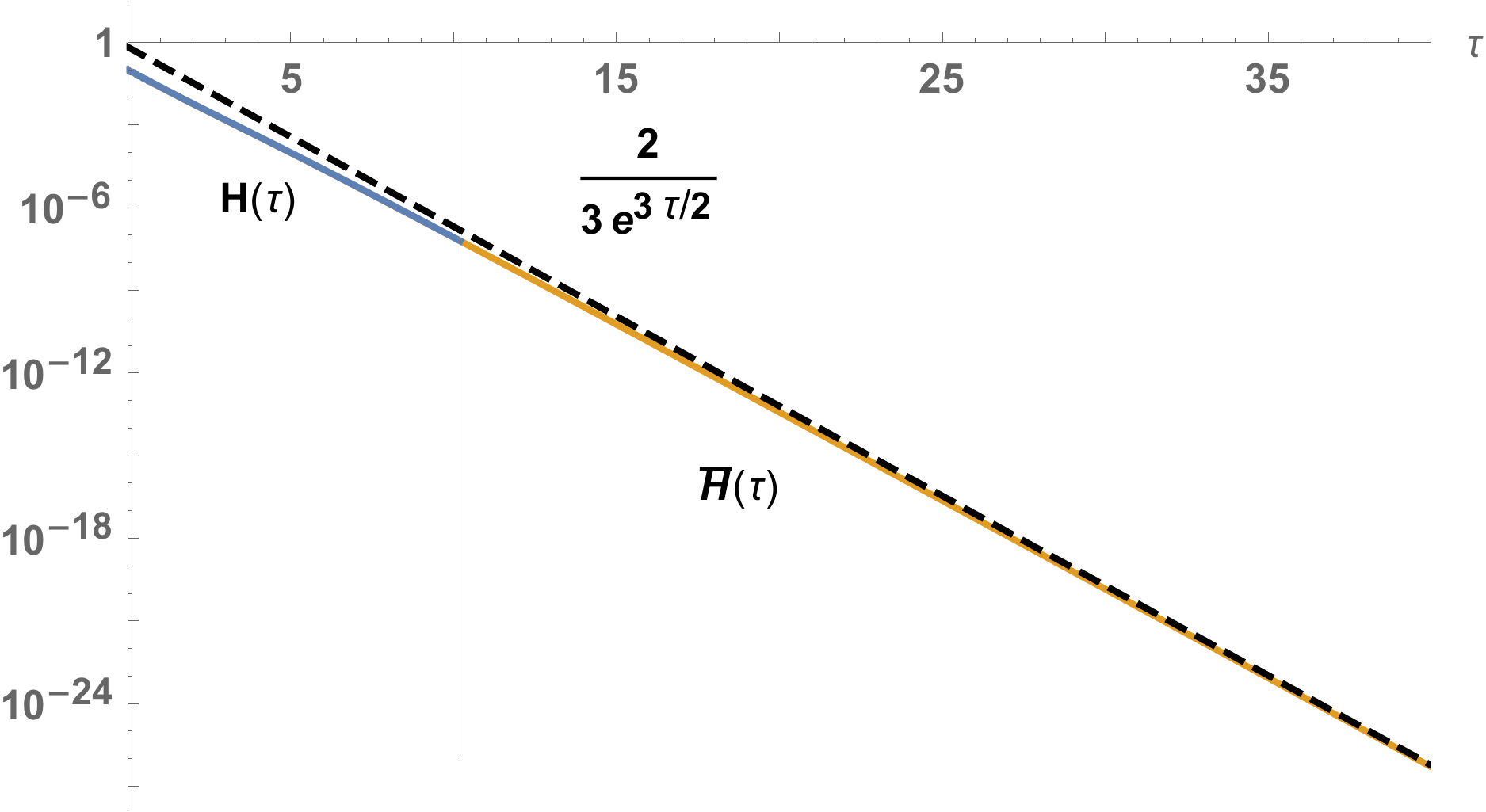}
    \caption{}
    \label{F: log H}
  \end{subfigure}
  \quad
  \begin{subfigure}{.48\textwidth}
    \centering
    \includegraphics[clip=true,trim=0 0 0 0,scale=.31]{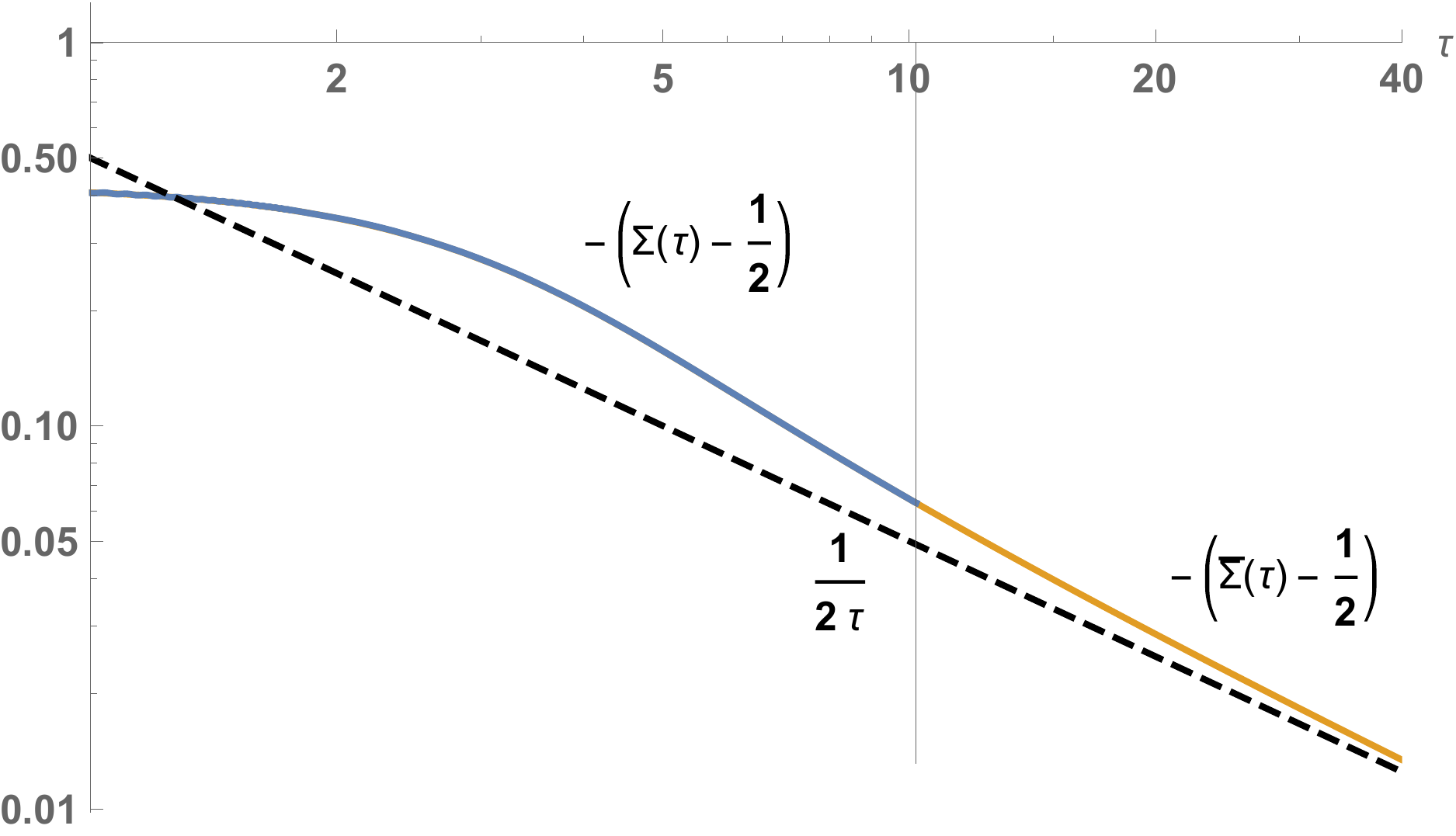}
    \caption{}
    \label{F: log S}
  \end{subfigure}
  \\
  \begin{subfigure}{.48\textwidth}
    \centering
    \includegraphics[clip=true,trim=0 0 0 0,scale=.31]{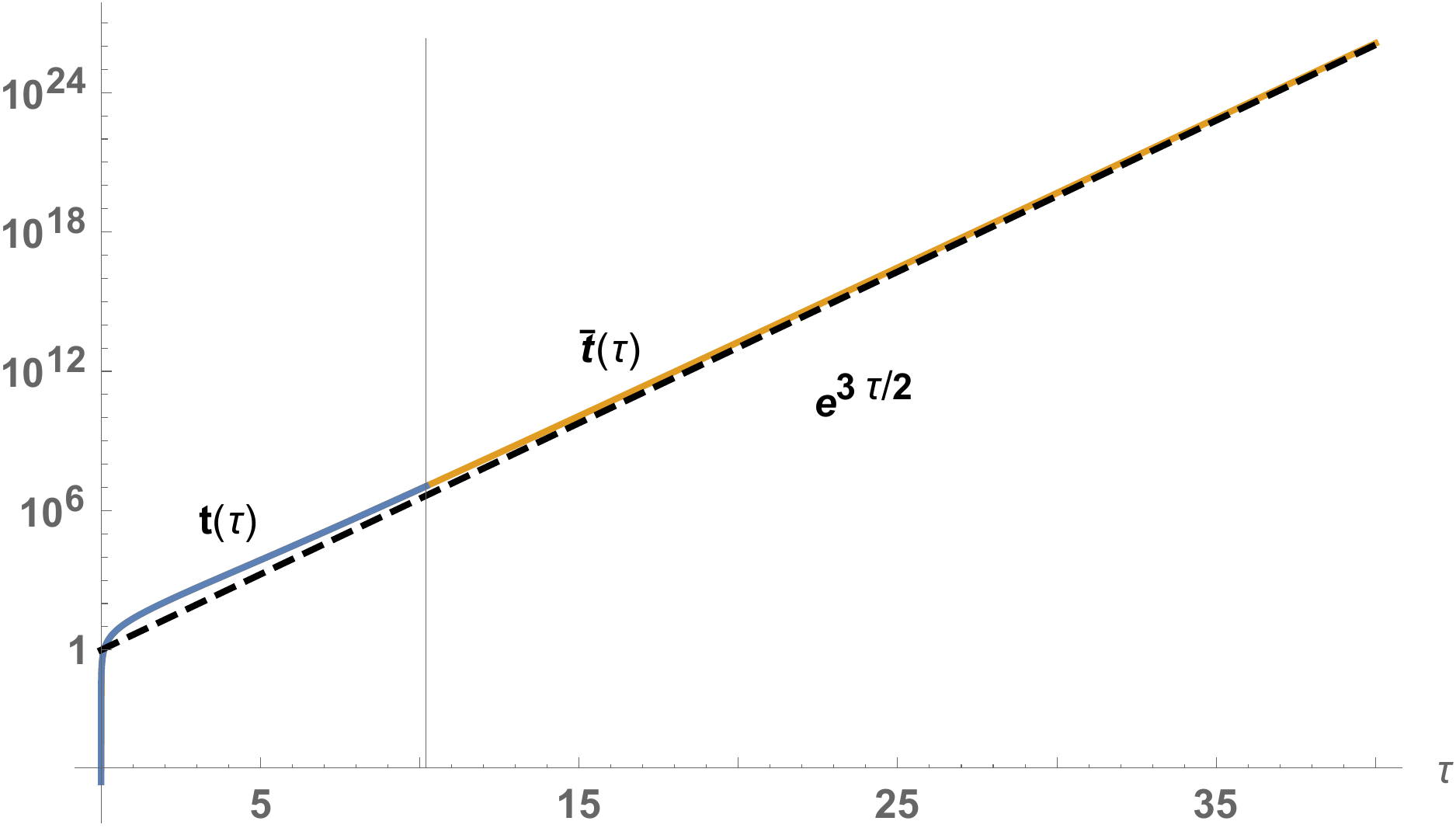}
    \caption{}
    \label{F: log t}
  \end{subfigure}
  \quad
  \begin{subfigure}{.48\textwidth}
    \centering
    \includegraphics[clip=true,trim=0 0 0 0,scale=.31]{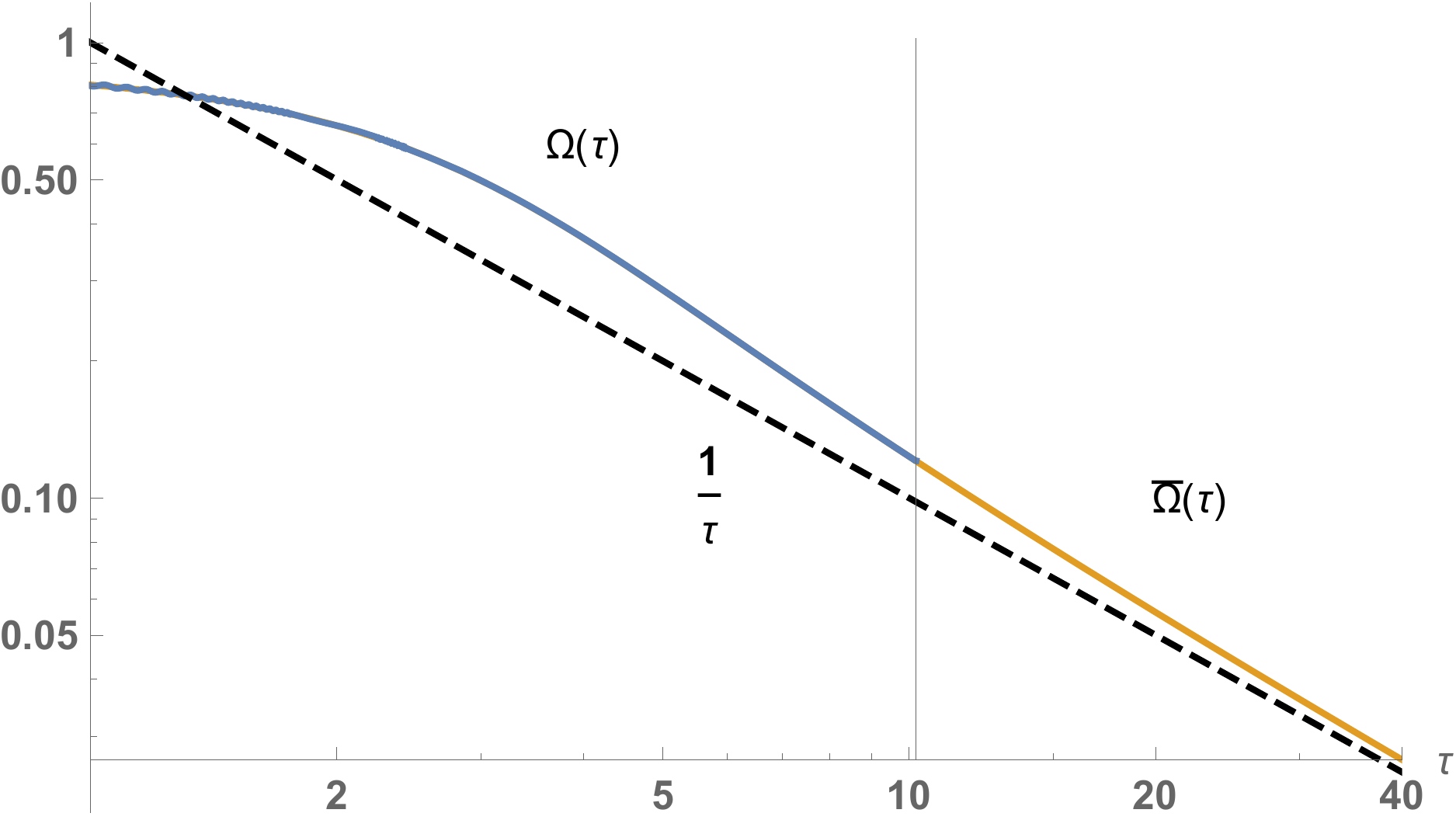}
    \caption{}
    \label{F: log O}
  \end{subfigure}
\caption{Plots of the same quantities as in~\textsc{Figure}~\ref{F: func}, however in logarithmic (\textsc{a, c}) or double logarithmic scaling (\textsc{b, d}), together with the analytically determined asymptotic behaviour (dashed). Both solutions have initial time $\tau=0$. The full solutions (blue) are plotted up to $\tau\approx10$, precisely to the vertical line. The averaged solutions (orange) are plotted up until $\tau=40$. The dashed lines show the analytically determined asymptotic behaviours.}
\label{F: log}
\end{figure}
Due to the logarithmic scaling, the asymptotic behaviours are determined by straight lines. This allows for a direct comparison of the numerics to the statements of \textsc{Theorem}~\ref{T: X asympt}, and for an indirect comparison to the statements of \textsc{Theorem}~\ref{T: g asympt}. Note also the higher plot range up to $\tau=40$. The full solutions are however still plotted up to $\tau\approx10$, where we terminated the integration. However this range is sufficient, especially in combination with the continuation of the averaged solution, to convincingly demonstrate the asymptotics.

From the plots it is apparent how all quantities indeed approach the respective analytically determined asymptotic behaviours for large $\tau$. It is also apparent that this happens faster for $H$ (and thus also for $t$) than for $\Sigma_+$ and $\Omega$, which we already stressed in the preceding subsections.

From (\textsc c) we stress again the very large values of $t$ required to enter the asymptotic regime. In fact, of all our solutions, solution~i plotted here turned out to be the one which reaches the asymptotic regime the fastest, which is why we chose it for figures~\ref{F: func} and~\ref{F: log}.

\subsection{Past asymptotics}\label{SS: past}

Though our focus here lies on the future asymptotics, it is also interesting to investigate what happens if we integrate our initial data backwards in time. In \textsc{Figure}~\ref{F: flow past} we thus show again the flow plot of \textsc{Figure}~\ref{F: flow}(\textsc a), however this time the full solutions are also integrated to $\tau=-40$ into the past.
\begin{figure}
  \includegraphics[clip=true,trim=0 0 0 0,scale=.5]{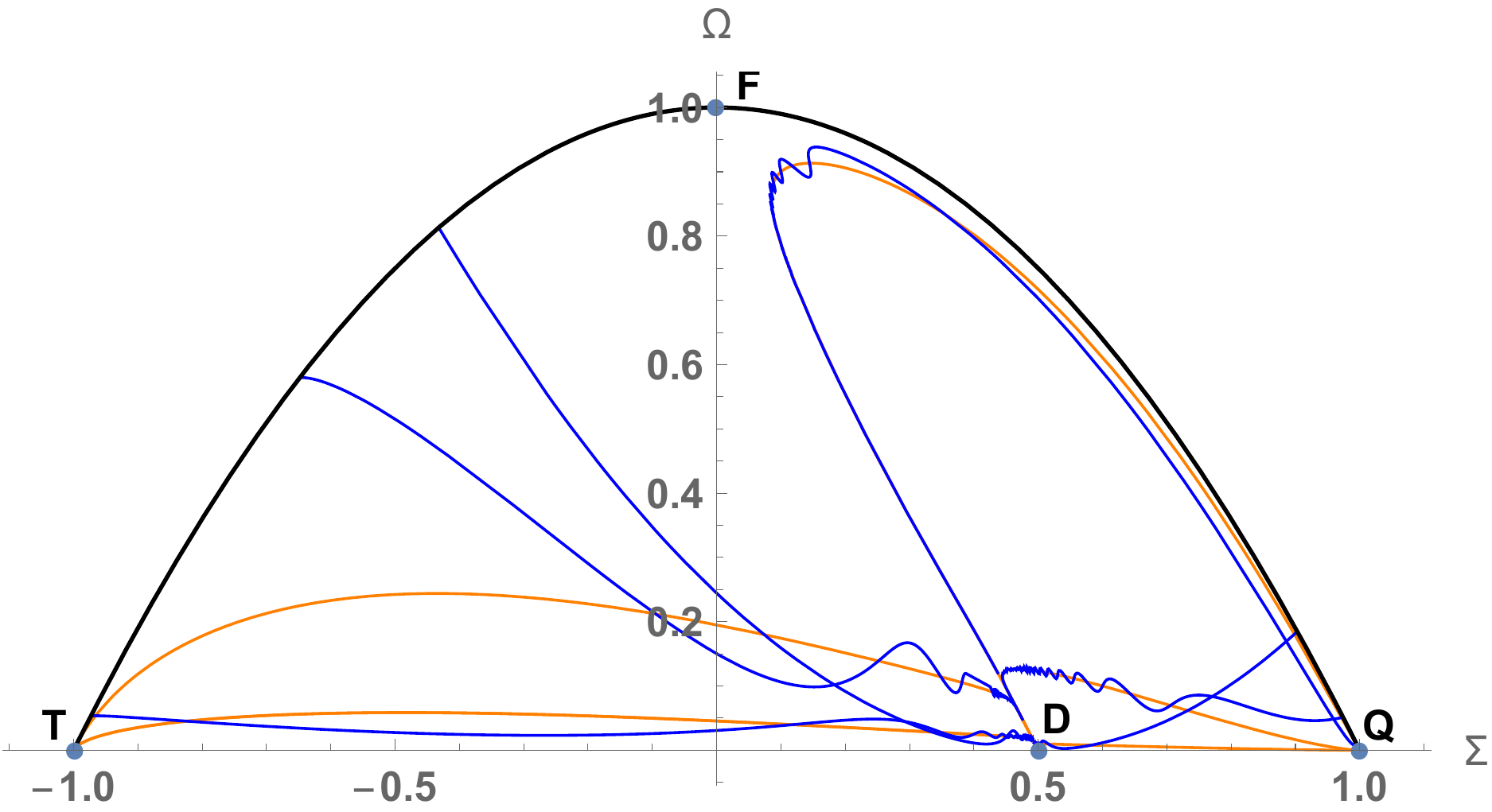}
  \caption{The flow plot of \textsc{Figure}~\ref{F: flow}(\textsc a), however this time the full solutions are also integrated to $\tau=-40$ into the past.}
\label{F: flow past}
\end{figure}

As we would expect since $H(\tau)$ is strictly monotonically decreasing, the oscillations around the averaged solutions become larger and larger into the past, and quickly reach the dimensions of the state-space. When this happens, each of our solutions approaches the Bianchi~I boundary. Hence one may conjecture that asymptotically into the past LRS Bianchi~III Einstein-Klein-Gordon solutions will behave like LRS Bianchi~I solutions, and potentially will converge to the LRS Kasner solutions associated with $T, Q$; cf \textsc{Subsection}~\ref{SS: interpretation of eqp sols}. However we have to be cautious with this intuition, since as stressed already beforehand, $\mathcal X$ does not represent the full reduced state-space. A deeper investigation is necessary in order to make a concrete statement about the past asymptotics. We leave this as an open question.

\section{Discussion}\label{S: discussion}

We conclude with a brief summary of our main results (\textsc{Subsection}~\ref{SS: summary}), some remarks on the novelty and utility of our averaging approach (\textsc{Subsection}~\ref{SS: utility of averaging}) and an outlook on open problems (\textsc{Subsection}~\ref{SS: outlook}).

\subsection{Summary of the main results}\label{SS: summary}

Our main results give the behaviour of LRS Bianchi~III Einstein-Klein-Gordon solutions for large times in terms of the shear variable and energy density (\textsc{Theorem}~\ref{T: X asympt}) and in terms of the metric components (\textsc{Theorem}~\ref{T: g asympt}). In addition \textsc{Theorem}~\ref{T: g asympt} also gives the late time behaviour of the Klein-Gordon field. The theorems have \textsc{Conjecture}~\ref{CJ: 1} as premise, for which we gave strong numerical and analytical support. The numerics also convincingly demonstrate agreement with the conclusions of the theorems.

Our results are global in the sense that \emph{all} solutions show this behaviour future asymptotically. The asymptotic metric we found is forever expanding in all spatial directions. The scale factor of the 2-hyperboloidal part of the spatial geometry expands $\propto t$ while the radial part only expands $\propto\ln t$. This result matches with that of~\cite{Rendall2002} where massive Vlasov matter was considered.

Due to the anisotropic expansion the shear variable is non-vanishing in the limit for time to infinity. The energy density on the other hand converges to vacuum. Despite that, we speak of the future asymptotics to be matter dominated, since it is qualitatively different from the future asymptotics in the vacuum case, where the first scale factor goes to a constant; cf~\textsc{Subsection}~\ref{SS: vacuum case}.

Consistent with the energy density going to zero, the Klein-Gordon field asymptotically oscillates with decaying envelope and uniform frequency, and goes to zero. This was to expect from the point of view that the Klein-Gordon equation in our case has the form of a non-linearly damped harmonic oscillator.

\subsection{Utility of the averaging approach}\label{SS: utility of averaging}

Apart from our main results, another important aspect of our work is the approach we took. The application of averaging methods allowed us to control the oscillations stemming from the Klein-Gordon equation and at the same time resulted in a dynamical system with decoupled Raychaudhuri equation such that then standard methods of dynamical systems in mathematical cosmology could be applied.

Due to the decay of the perturbation parameter to zero, ie of the Hubble scalar, and since the averaged system is attracted by an equilibrium point, the oscillations in the full solution decay and vanish in the limit for time to infinity. Consequently, the full and the averaged solutions converge in that limit. Our results for the limit thus concern the solutions to the full system and are not a mere averaging approximation.

To our best knowledge, closely related techniques have yet only been developed and applied to this branch of mathematical cosmology in~\cite{AlhoUggla2015, AlhoEtAl2015, AlhoEtAl2019}. With the present work we hope to draw more attention to the utility of averaging approaches in mathematical cosmology.

\subsection{Outlook}\label{SS: outlook}

We expect these techniques to also be applicable to other classes of spatially homogenous Einstein-Klein-Gordon cosmologies. While it would of course complicate the dynamical systems analysis of the averaged system, a larger number of degrees of freedom would not introduce any additional conceptual difficulties to the averaging approach as such.

With respect to the matter model, a natural next step would be to consider self-interaction potentials of order higher than quadratic, such as $\phi^4$. It would be interesting to see how respective results compare to those of the present work.

Furthermore, averaging techniques should also be a useful tool in other cases where oscillations or other perturbations play a role. For instance, we found striking similarities between the system discussed here, and the evolution equations encountered in \cite{HorwoodEtAl2003, NilssonEtAl2000, WainwrightEtAl1999}. These papers discuss the future asymptotics of different classes of spatially homogenous perfect fluid cosmologies, and encountered oscillatory behaviour and a phenomenon called self-similarity breaking; cf~\cite{Wainwright2000}. Averaging methods might be suited to shed new light on this phenomenon.

Though our results are robust given the strong agreement with numerics and the strong analytical support, it is desirable to rigorously prove \textsc{Conjecture}~\ref{CJ: 1}. \textsc{Section}~\ref{S: analytical support} suggests one route to do so, which is via providing rigorous proofs to \textsc{Assumption}~\ref{AS: 1} and the limiting process outlined at the end of that section.


\begin{appendix}

\numberwithin{equation}{section}
\renewcommand{\theequation}{\thesection\arabic{equation}}

\section{Two definitions}\label{A: two definitions}

We take the following two definitions from~\cite[p~31]{SandersEtAl2007} and~\cite[\textsc{Def}~4.2.4]{SandersEtAl2007} respectively.

\begin{AppDef}\label{D: D}
$D\subset\R^n$ is a connected, bounded open set (with compact closure) containing the initial value $\mathbf a$, and constants $L>0, \epsilon_0>0$, such that the solutions $\mathbf x(t,\epsilon)$ and $\mathbf z(t,\epsilon)$ with $0\leq\epsilon\leq\epsilon_0$ remain in $D$ for $0\leq t\leq L/\epsilon$.
\end{AppDef}
See also the comments on~\cite[p~31]{SandersEtAl2007} how such a tripple $(D, \epsilon_0, L)$ can be chosen.
\begin{AppDef}\label{D: KBM}
Consider the vector field $\mathbf f(\mathbf x, t)$ with $\mathbf f:\R^n\times\R\to\R^n$. Let $\mathbf f$ be Lipschitz continuous in $\mathbf x$ on $D\subset\R^n,t\geq0$. Let further $\mathbf f$ be continuous in $t$ and $\mathbf x$ on $\R^+\times D$. If the average
\begin{align*}
\overline{\mathbf f}(\mathbf x) &= \lim_{T\to\infty} \frac{1}{T}\int^T_0\mathbf f(\mathbf x, s)\mathrm ds
\end{align*}
exists and the limit is uniform in $\mathbf x$ on compact subsets of $D$, then $\mathbf f$ is called a \textbf{KBM-vector field} (Krylov, Bogoliubov and Mitropolsky).

(If the vector field $\mathbf f(\mathbf x, t)$ contains parameters we assume that the parameters and the initial conditions are independent of $\epsilon$, and that the limit is uniform in the parameters.)
\end{AppDef}

\section{Centre manifold analysis}\label{A: CM analysis}

Here we closely follow the discussion in~\cite[\textsc{Chap~1}]{Carr1981}. Given an autonomous dynamical system $\partial_\tau x=f(x), x\in\R^n$ a set $\mathcal S\subset\R^n$ is a \emph{local invariant manifold} for the system if for initial data $x(0)=x_0\in\mathcal S$ the corresponding solution $x(\tau)$ is in $\mathcal S$ for $|\tau|<T$ where $T>0$. If we can always choose $T=\infty$, then we say that $\mathcal S$ is an \emph{invariant manifold}.

Consider now the system
\begin{align}
\partial_\tau x &= Ax + f(x,y), x\in\R^n, f\in\mathcal C^2 \label{E: CM x dot} \\
\partial_\tau y &= By + g(x,y), y\in\R^m,g\in\mathcal C^2 \label{E: CM y dot}
\end{align}
with constant matrices $A,B$ such that all eigenvalues of $A$ have zero real parts while all eigenvalues of $B$ have negative real parts. Further $f(0,0)=0, f'(0,0)=0, g(0,0)=0, g'(0,0)=0$, where $f',g'$ denote the Jaccobians. Note that the system exhibits an equilibrium point at the origin.

If $y=h(x)$ is a (local) invariant manifold of~\eqref{E: CM x dot}--\eqref{E: CM y dot} and $h$ is smooth, then it is called a \emph{(local) centre manifold} if $h(0)=0,h'(0)=0$. Where it is clear from the context, we also speak of a centre manifold and really mean a local centre manifold.
\begin{AppLem}\label{T: CM existence}
There exists a centre manifold for~\eqref{E: CM x dot}--\eqref{E: CM y dot}, $y=h(x), |x|<\delta$, where $h$ is $\mathcal C^2$.
\end{AppLem}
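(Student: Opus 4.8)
The plan is to follow the classical contraction-mapping argument of \cite[\textsc{Chap}~1]{Carr1981}: first make the nonlinearities globally small by a cut-off, then set up a fixed-point problem whose solution is the graph of the centre manifold, and finally bootstrap regularity. \emph{Step 1 (cut-off).} Since $f,g\in\mathcal C^2$ with $f(0,0)=g(0,0)=0$ and $f'(0,0)=g'(0,0)=0$, on a ball of radius $\varepsilon$ both $f,g$ and their first derivatives are $\mathcal O(\varepsilon)$. Pick a smooth cut-off $\psi:\R^n\to[0,1]$ with $\psi\equiv1$ for $|x|\le1$ and $\psi\equiv0$ for $|x|\ge2$, and replace $f,g$ by modifications $f_\varepsilon,g_\varepsilon$ (built from $\psi(x/\varepsilon)$) that are globally defined, $\mathcal C^2$, coincide with $f,g$ on $\{|x|<\varepsilon\}$, and whose global Lipschitz constant tends to $0$ as $\varepsilon\to0$. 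The modified system then agrees with~\eqref{E: CM x dot}--\eqref{E: CM y dot} on $\{|x|<\varepsilon\}$, so any invariant manifold of the former restricts to a local invariant manifold of the latter.

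\emph{Step 2 (fixed-point operator).} Using the spectral hypotheses, fix a small $\alpha>0$ and $\beta>0$ with $\alpha<\beta$ and a constant $C\ge1$ such that $\|e^{A\tau}\|\le Ce^{\alpha|\tau|}$ for all $\tau$ and $\|e^{B\tau}\|\le Ce^{-\beta\tau}$ for $\tau\ge0$. Work in the Banach space $\mathcal Y$ of bounded globally Lipschitz maps $h:\R^n\to\R^m$ with $h(0)=0$, $\|h\|_\infty\le\varepsilon$ and Lipschitz constant $\le1$. For $h\in\mathcal Y$, let $u(\tau;x_0)$ be the flow of the reduced equation $\partial_\tau u=Au+f_\varepsilon(u,h(u))$, and set
\begin{align*}
(\mathcal Th)(x_0):=\int_{-\infty}^0 e^{-B\sigma}\,g_\varepsilon\big(u(\sigma;x_0),h(u(\sigma;x_0))\big)\,\mathrm d\sigma .
\end{align*}
This is the unique solution of the $y$-equation that stays bounded as $\sigma\to-\infty$ along the trajectory, so a fixed point of $\mathcal T$ is precisely a graph $y=h(x)$ invariant under the modified flow.

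\emph{Step 3 (contraction).} Check $\mathcal T:\mathcal Y\to\mathcal Y$ and that it is a contraction once $\varepsilon$ is small. Convergence of the integral and the self-map property hold because for $\sigma\le0$ one has $\|e^{-B\sigma}\|\le Ce^{\beta\sigma}$, which decays faster (rate $\beta$) than the $x$-flow can spread trajectories apart (rate $\alpha<\beta$); the smallness of the Lipschitz constants of $f_\varepsilon,g_\varepsilon$ from Step 1 absorbs the remaining constants. Applying the same estimates to $\mathcal Th_1-\mathcal Th_2$, together with a Gronwall bound controlling $u(\sigma;x_0)$ in its dependence on $h$, gives $\|\mathcal Th_1-\mathcal Th_2\|\le\kappa\|h_1-h_2\|$ with $\kappa<1$. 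The Banach fixed-point theorem then yields a unique $h\in\mathcal Y$; $h(0)=0$ is built in, and evaluating the fixed-point equation and its formal linearisation at the origin (using $f_\varepsilon'(0,0)=g_\varepsilon'(0,0)=0$) gives $h'(0)=0$. Restricting to $|x|<\delta$ for suitable $\delta\le\varepsilon$ produces a local centre manifold of~\eqref{E: CM x dot}--\eqref{E: CM y dot}.

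\emph{Step 4 (regularity) and the main obstacle.} What remains is to upgrade $h$ from merely Lipschitz to $\mathcal C^2$. Differentiating the fixed-point equation formally suggests an affine fixed-point equation for the candidate derivative $Dh$ in the space of bounded continuous matrix-valued functions; one shows this has a unique solution and, via the fibre-contraction theorem (or by differentiating under the integral sign and re-running the contraction argument in $\mathcal C^1$), that it genuinely equals $Dh$, so $h\in\mathcal C^1$; iterating once more, and now using the full $\mathcal C^2$ hypothesis on $f,g$, gives $h\in\mathcal C^2$ on $|x|<\delta$. I expect this last step to be the real work: the contraction of Step 3 only lives in a Lipschitz space, and promoting the fixed point to $\mathcal C^2$ requires the extra bookkeeping of the fibre-contraction estimates. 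A secondary subtlety is making the cut-off in Step 1 precise enough that $f_\varepsilon,g_\varepsilon$ are simultaneously globally $\mathcal C^2$, globally Lipschitz with constant tending to $0$, and unaltered near the origin.
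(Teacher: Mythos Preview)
The paper does not actually prove this lemma: it is stated in the appendix as a background result quoted verbatim from \cite[\textsc{Chap}~1]{Carr1981}, with no proof given. Your proposal is therefore not to be compared against a proof in the paper, but it is a faithful sketch of Carr's own Lyapunov--Perron argument (cut-off, integral fixed-point operator, contraction, bootstrap of regularity), which is exactly the reference the paper cites; so in that sense your approach and the paper's implicit one coincide.
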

The flow on the centre manifold is governed by the $n$ dimensional system
\begin{align}
\partial_\tau u &= Au + f(u,h(u)). \label{E: flow on CM}
\end{align}
By the following theorem~\eqref{E: flow on CM} contains all the necessary information to determine the asymptotic behaviour of small solutions of~\eqref{E: CM x dot}--\eqref{E: CM y dot}. For the different notions of stability used in the theorem we refer to~\cite[\textsc{Sec~2.9}]{Perko2001}
\begin{AppLem}\label{T: CM stability and flow}
This lemma consists of two parts:
\begin{enumerate}[(a)]
\item Suppose that the zero solution of~\eqref{E: flow on CM} is stable (asymptotically stable) (unstable). Then the zero solution of~\eqref{E: CM x dot}--\eqref{E: CM y dot} is stable (asymptotically stable) (unstable). \label{TI: a}
\item Suppose that the zero solution of~\eqref{E: flow on CM} is stable. Let $(x(\tau),y(\tau))$ be a solution of~\eqref{E: CM x dot}--\eqref{E: CM y dot} with $(x(0),y(0))$ sufficiently small. Then there exists a solution $u(\tau)$ of~\eqref{E: flow on CM} such that as $t\to\infty$, \label{TI: b}
\begin{align}
x(\tau)&=u(\tau)+\mathcal O(e^{-\gamma \tau}) \\
y(\tau)&=h(u(\tau))+\mathcal O(e^{-\gamma \tau})
\end{align}
where $\gamma>0$ is a constant.
\end{enumerate}
\end{AppLem}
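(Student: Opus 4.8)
The plan is to follow the standard development of centre manifold theory as in \cite[\textsc{Chap~1}]{Carr1981}, reducing everything to the behaviour near the flattened invariant manifold. First I would introduce the change of variables $v:=y-h(x)$, which straightens the centre manifold $y=h(x)$ to the set $\{v=0\}$. Using that this set is invariant and that $h(0)=0$, $h'(0)=0$, the system~\eqref{E: CM x dot}--\eqref{E: CM y dot} becomes
\begin{align*}
\partial_\tau x &= Ax + f\big(x,\,v+h(x)\big), \\
\partial_\tau v &= Bv + N(x,v),
\end{align*}
with $N$ of class $\mathcal C^1$ satisfying $N(x,0)=0$, $N(0,0)=0$, $N'(0,0)=0$; in particular $\|N(x,v)\|\le\kappa(\delta)\|v\|$ on a ball $B_\delta$ about the origin, with $\kappa(\delta)\to0$ as $\delta\to0$.

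Next I would exploit that all eigenvalues of $B$ have negative real part, so $\|e^{B\tau}\|\le Ke^{-\beta\tau}$ for constants $K,\beta>0$. For any solution that remains in $B_\delta$, the variation-of-constants formula for $v$ together with a Gr\"onwall estimate gives $\|v(\tau)\|\le K\|v(0)\|\,e^{-(\beta-K\kappa(\delta))\tau}$; shrinking $\delta$ so that $\beta-K\kappa(\delta)\ge\beta/2=:\gamma$ yields $v(\tau)=\mathcal O(e^{-\gamma\tau})$ as long as the orbit stays in $B_\delta$. For part~\eqref{TI: a}, I would then write the $x$-equation as $\partial_\tau x = Ax + f(x,h(x)) + R(\tau)$ with $R(\tau):=f(x,v+h(x))-f(x,h(x))$, so that $\|R(\tau)\|$ is controlled by $\|v(\tau)\|=\mathcal O(e^{-\gamma\tau})$. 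A continuation/bootstrap argument shows that if the zero solution of the reduced equation~\eqref{E: flow on CM} is (asymptotically) stable, then small $(x(0),y(0))$ forces $x(\tau)$ to stay small (resp.\ tend to $0$) under the exponentially small forcing $R$, whence $y(\tau)=h(x(\tau))+v(\tau)$ does the same; the instability statement follows because an unstable direction of the reduced flow cannot be cancelled by a perturbation decaying like $e^{-\gamma\tau}$.

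For part~\eqref{TI: b}, the asymptotic phase, I would show that the above perturbed reduced solution $x(\tau)$ is shadowed by a genuine solution $u(\tau)$ of~\eqref{E: flow on CM} with $x(\tau)=u(\tau)+\mathcal O(e^{-\gamma\tau})$; the corresponding statement for $y$ then follows from $y(\tau)=h(x(\tau))+v(\tau)$ and the Lipschitz continuity of $h$. The shadowing solution $u$ is obtained as a fixed point of an integral operator built from the variation-of-constants formula for~\eqref{E: flow on CM}, splitting $\R^n$ along the (stable/unstable/centre) spectral subspaces of $A$ and integrating each component from the appropriate endpoint; the exponential decay of the forcing $R$ makes this operator a contraction on a weighted space of continuous functions on $[\tau_0,\infty)$.

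The main obstacle is precisely this last step: since $A$ has purely imaginary eigenvalues there is no hyperbolic dichotomy available on the centre subspace, so the contraction estimate must be carried entirely by the exponential smallness of $R$ together with the stability hypothesis on the reduced flow, and one must check that the fixed point produced genuinely lies on (or $e^{-\gamma\tau}$-close to) the centre manifold rather than drifting off it. All remaining steps are routine Gr\"onwall-type estimates; for the different notions of stability invoked here I would refer to \cite[\textsc{Sec~2.9}]{Perko2001}. I note that the application in \textsc{Subsection}~\ref{SS: CM analysis of D} only requires the scalar case $n=1$, $A=0$, where~\eqref{E: flow on CM} reduces to $\partial_\tau u=-u^2+\mathcal O(u^3)$, which is essentially solvable in closed form, so the shadowing argument there is elementary.
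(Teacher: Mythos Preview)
The paper does not actually prove this lemma: Appendix~\ref{A: CM analysis} explicitly ``closely follow[s] the discussion in~\cite[\textsc{Chap~1}]{Carr1981}'' and simply quotes Lemmas~\ref{T: CM existence}, \ref{T: CM stability and flow} and~\ref{T: CM approx} from there as standard background results, without proof. So there is nothing in the paper to compare your argument against.

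That said, your sketch is essentially the textbook proof from Carr's Chapter~2 (Theorems~2 and~3 there), and in outline it is correct: straighten the centre manifold via $v=y-h(x)$, use the spectral gap of $B$ plus Gr\"onwall to get $v(\tau)=\mathcal O(e^{-\gamma\tau})$, then treat the $x$-equation as the reduced flow perturbed by an exponentially decaying forcing, and finally obtain the asymptotic phase by a shadowing/fixed-point construction. Your identification of the delicate point is also accurate: the shadowing step in part~\eqref{TI: b} is where the work lies, precisely because $A$ has no hyperbolic dichotomy on the centre subspace, and Carr handles this via a somewhat intricate integral-equation argument rather than a one-line contraction. For the purposes of this paper none of that subtlety is needed, since---as you note---the application in \textsc{Subsection}~\ref{SS: CM analysis of D} is the scalar case $n=1$, $A=0$, where the reduced equation $\partial_\tau u=-u^2+\mathcal O(u^3)$ can be analysed directly.
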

In other words, solutions of~\eqref{E: CM x dot}--\eqref{E: CM y dot} with initial value sufficiently close to the origin, for large times approach solutions of~\eqref{E: flow on CM}, ie on the centre manifold, at an asymptotic rate.

To solve~\eqref{E: flow on CM} we need to know the centre manifold $y=h(x)$ or at least an approximation of it for small $|x|$. Substituting $y=h(x(\tau))$ into~\eqref{E: CM y dot} yields
\begin{align}
h'(x)\big(Ax+f(x,h(x))\big) &= Bh(x) + g(x,h(x)). \label{E: DE for h}
\end{align}
For the centre manifold we have $h(0)=0, h'(0)=0$. Consequently, to obtain the centre manifold one has to solve~\eqref{E: DE for h} to these conditions. In general this is impossible analytically. However, with the following theorem we can always approximate the centre manifold up to an arbitrarily high accuracy. Before we formulate the theorem we define the following map $M$ on functions $\phi:\R^n\to\R^m$ which are $\mathcal C^1$ in a neighbourhood of the origin:
\begin{align}\label{E: M map}
(M\phi)(x):=\phi'(x)\big(Ax+f(x,\phi(x))\big) - B\phi(x)-g(x,\phi(x)).
\end{align}
Note that by~\eqref{E: DE for h}, $(Mh)(x)=0$.
\begin{AppLem}\label{T: CM approx}
Let $\phi$ be a $\mathcal C^1$ mapping of a neighbourhood of the origin in $\R^n$ into $\R^m$ with $\phi(0)=0$ and $\phi'(0)=0$. Suppose that as $x\to0$, $(M\phi)(x)=\mathcal O(|x|^q)$ where $q>1$. Then as $x\to0$, $|h(x)-\phi(x)|=\mathcal O(|x|^q)$.
\end{AppLem}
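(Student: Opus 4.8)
The plan is to follow Carr's argument: compare $h$ and $\phi$ along a trajectory on the centre manifold and exploit the exponential contraction generated by $B$. Since the assertion is local (about $x\to0$), I would first — exactly as in the proof of Lemma~\ref{T: CM existence} — replace $f$ and $g$ by compactly supported modifications agreeing with the originals on a small ball and having globally small Lipschitz constants, so that the centre manifold $h$ is globally defined, bounded, globally Lipschitz with small constant, and the flow $\partial_\tau u = Au+f(u,h(u))$ is globally well posed. Recall that $h$ satisfies $(Mh)(x)=0$ by~\eqref{E: DE for h}, while by hypothesis $(M\phi)(x)=\mathcal O(|x|^q)$, where $M$ is the operator~\eqref{E: M map}.

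Fix a base point $x_0$ near the origin, let $x(\tau)$, $\tau\le0$, solve $\partial_\tau x = Ax+f(x,h(x))$ with $x(0)=x_0$, and set $w(\tau):=h(x(\tau))-\phi(x(\tau))$. Differentiating, inserting $(Mh)(x)=0$ and the definition~\eqref{E: M map} of $M$ (so that the $\phi'(x)(Ax+f(x,\phi(x)))$ piece is rewritten in terms of $(M\phi)(x)$), a short computation gives
\begin{align*}
\partial_\tau w = Bw + \big[g(x,h(x))-g(x,\phi(x))\big] - \phi'(x)\big[f(x,h(x))-f(x,\phi(x))\big] - (M\phi)(x).
\end{align*}
Because $f,g\in\mathcal C^2$ with vanishing first derivatives at the origin and $\phi'(0)=0$, the two bracketed differences are bounded by $\eta\,|w(\tau)|$ with $\eta$ as small as desired on a sufficiently small ball; hence $\partial_\tau w = Bw + G$ with $|G(\tau)|\le \eta|w(\tau)| + C|x(\tau)|^q$.

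Next I would use that every eigenvalue of $B$ has negative real part, so $\|e^{Bt}\|\le Ke^{-\gamma t}$ for $t\ge0$ (the constant $\gamma$ of Lemma~\ref{T: CM stability and flow}), and that $|w(\tau)|$ is bounded; passing to the limit in the variation-of-constants formula yields, for every $\tau\le0$,
\begin{align*}
w(\tau) = \int_{-\infty}^{\tau} e^{B(\tau-s)}\,G(s)\,\mathrm ds , \qquad
|w(\tau)| \le \int_{-\infty}^{\tau} Ke^{-\gamma(\tau-s)}\big(\eta|w(s)| + C|x(s)|^q\big)\,\mathrm ds .
\end{align*}
The backward flow on the centre manifold spreads at most subexponentially, $|x(s)|\le C_1|x_0|\,e^{\epsilon|s|}$ with $\epsilon$ arbitrarily small once the Lipschitz constants are small, so $|x(s)|^q$ is integrable against the weight $e^{\gamma s}$ as long as $q\epsilon<\gamma$. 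Introducing the weighted norm $M_0:=\sup_{\tau\le0}|w(\tau)|\,e^{q\epsilon\tau}$, which is finite since $w$ is bounded, the displayed inequality closes to $M_0 \le \frac{K\eta}{\gamma-q\epsilon}\,M_0 + \frac{KCC_1^q}{\gamma-q\epsilon}\,|x_0|^q$; for $\eta$ small enough that $K\eta<\gamma-q\epsilon$ this gives $M_0\le D|x_0|^q$, and evaluating at $\tau=0$ yields $|h(x_0)-\phi(x_0)|=|w(0)|\le D|x_0|^q$. As $x_0$ was arbitrary near the origin, this is the claim.

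The main obstacle is the bootstrap in the last step: one must play the subexponential spreading rate $\epsilon$ of the centre-manifold flow backward in time against the exponential contraction rate $\gamma$ of $e^{Bt}$, which forces the smallness conditions $q\epsilon<\gamma$ and $K\eta<\gamma-q\epsilon$. Checking that these can genuinely be arranged by shrinking the neighbourhood (equivalently, through the modification of $f,g$), that the weighted supremum $M_0$ is a priori finite, and that the bound $(M\phi)(x)=\mathcal O(|x|^q)$ persists along the (modified, globalised) backward trajectory, are the technical points that need care; once the identity $\partial_\tau w = Bw + G$ is in hand the rest is routine.
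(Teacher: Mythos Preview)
The paper does not prove this lemma at all: \textsc{Appendix}~\ref{A: CM analysis} explicitly states that it follows~\cite[\textsc{Chap~1}]{Carr1981} and merely quotes the three centre-manifold lemmas from that reference without argument. So there is no ``paper's own proof'' to compare against; the result is imported wholesale from Carr.

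Your sketch is a faithful rendition of Carr's proof. The derivation of $\partial_\tau w = Bw + G$ is correct (one inserts $(Mh)(x)=0$ and rewrites $\phi'(x)(Ax+f(x,h(x)))$ by adding and subtracting $\phi'(x)f(x,\phi(x))$ so that the definition~\eqref{E: M map} of $M\phi$ appears), and the closure via the backward variation-of-constants formula with a weighted sup-norm is the standard mechanism. The two technical points you flag---that the boundedness of $w$ forces the homogeneous part $e^{B\tau}c$ to vanish as $\tau\to-\infty$, and that the cut-off of $f,g$ makes $\epsilon,\eta$ small enough for the bootstrap inequalities $q\epsilon<\gamma$ and $K\eta<\gamma-q\epsilon$ to hold---are exactly where the work lies, and you have identified them correctly. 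Nothing is missing.
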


\end{appendix}

\bibliography{BibFile}{}
\bibliographystyle{siam}

\end{document}